   \tikzset{
       modal/.style={>=stealth',shorten >=1pt,shorten <=1pt,auto,
                     node distance=1.5cm,semithick},
       world/.style={circle,draw,minimum size=1cm,fill=gray!15},
       point/.style={circle,draw,fill=black,inner sep=0.5mm},
       reflexive/.style={->,in=120,out=60,loop,looseness=#1},
       reflexive/.default={5},
       reflexive point/.style={->,in=135,out=45,loop,looseness=#1},
       reflexive point/.default={25},
       }
\newcommand{\BC}{\ensuremath{\mathbb{C}}\xspace}
\newcommand{\BN}{\ensuremath{\mathbb{N}}\xspace}
\newcommand{\CC}{\ensuremath{\mathcal{C}}\xspace}
\newcommand{\CL}{\ensuremath{\mathcal{L}}\xspace}
\newcommand{\CM}{\ensuremath{\mathcal{M}}\xspace}
\newcommand{\CN}{\ensuremath{\mathcal{N}}\xspace}
\newcommand{\CS}{\ensuremath{\mathcal{S}}\xspace}
\newcommand\ag{ \mathsf{Ag} \xspace}
\newcommand\syn{ \mathsf{Syn} \xspace}
\newcommand\synminus{ \mathsf{Syn^{\text{-}}} \xspace}
\newcommand\pow{ \mathsf{Pow} \xspace}
\newcommand\alive{ \mathsf{Alive} \xspace}
\newcommand\lalive{ \mathsf{alive} \xspace}
\newcommand\dead{ \mathsf{dead} \xspace}
\newcommand\agsi{ \mathsf{Agsi} \xspace}
\newcommand\si{ \mathsf{S}1 \xspace}
\newcommand\sii{ \mathsf{S}2 \xspace}
\newcommand\siii{ \mathsf{S}3 \xspace}
\newcommand\kone{ \mathsf{K}1 \xspace}
\newcommand\ktwo{ \mathsf{K}2 \xspace}
\newcommand\kthree{ \mathsf{K}3 \xspace}
\newcommand\kfour{ \mathsf{K}4 \xspace}
\newcommand\dsf{ \mathsf{D} \xspace}
\renewcommand{\P}{\mathsf{Prop} \xspace}
\newcommand\ti{ \mathsf{T}1 \xspace}
\newcommand\tii{ \mathsf{T}2 \xspace}
\newcommand\itemktwo{\item[$\mathsf{K}2$:]}
\newcommand\itemkthree{\item[$\mathsf{K}3$:]}
\newcommand\itemkfour{\item[$\mathsf{K}4$:]}
\newcommand\itemd{\item[$\mathsf{D}$:]}
\newcommand\itemne{\item[$\mathsf{NE}$:]}
\newtheorem{theorem}{Theorem}
\newtheorem{corollary}{Corollary}
\newtheorem{definition}{Definition}
\newtheorem{remark}{Remark}
\newtheorem{lemma}{Lemma}
\newtheorem{example}{Example}
\begin{document}
\title{Synergistic Knowledge\tnoteref{t1,t2}}
\tnotetext[t1]{This is a revised and extended version of a preliminary work presented at the 25th International Symposium on Stabilizing, Safety, and Security of Distributed Systems (SSS), Jersey City, NJ, October 2023~\cite{DBLP:conf/sss/CachinLS23}. An extended abstract was presented at 5th International Conference on Logic and Argumentation (CLAR), Hangzhou, China, September 2023~\cite{DBLP:conf/clar/CachinLS23}.}
\tnotetext[t2]{This work has been funded by the Swiss National Science Foundation (SNSF)
under grant agreement Nr\@.~200021\_188443 (Advanced Consensus Protocols).}
\author[1]{Christian Cachin}
\ead{christian.cachin@unibe.ch}
\affiliation[1]{organization = {University of Bern}, addressline = {Neubrückstrasse 10}, postcode = {3012 CH}, city = {Bern}, country = {Switzerland}}

\author[1]{David Lehnherr}
\ead{david.lehnherr@unibe.ch}

\author[1]{Thomas Studer}
\ead{thomas.studer@unibe.ch}
\date{\today}            
\begin{abstract}
In formal epistemology, group knowledge is often modelled as the knowledge that the group would have, if the agents shared all their individual knowledge. However, this interpretation does not account for relations between agents. In this work, we propose the notion of synergistic knowledge which makes it possible to model those relationships.
We interpret synergistic knowledge on simplicial models that are based on semi-simplicial sets. As examples, we investigate the use of consensus objects and the problem of the dining cryptographers. Furthermore, we introduce the axiom system $\syn$ for synergistic knowledge and show that it is sound and complete with respect to the presented simplicial models.
\end{abstract}
\vspace{5pt}
 \begin{keyword}
  Distributed Knowledge \sep Synergy \sep Modal Logic
 \end{keyword}
\maketitle  
\section{Introduction}
The successful application of combinatorial topology to distributed systems has sparked the interest of modal logicians.~A topological model interprets the different configurations of a distributed system as an abstract structure called a simplicial complex. Within a simplicial complex, one can express which global states are indistinguishable to processes. If a process cannot tell two configurations apart, then, for any distributed protocol, it must behave the same in both states. Naturally, this approach, exemplified by Herlihy, Kozlov, and Rajsbaum~\cite{DBLP:books/mk/Herlihy2013}, has parallels with formal epistemology. Hence, studying simplicial interpretations of modal logic is a novel and promising subject.

Due to its application to well known problems such as agreement (cf.~Halpern and Moses \cite{DBLP:conf/podc/HalpernM84}), modal logic has been proven to be vital for the formal analysis of distributed Algorithms. Models for modal logic are usually based on a possible worlds approach, where the modal  operator $\square$ is evaluated on Kripke frames. A Kripke frame is a graph whose vertices represent worlds, i.e., global states, and whose edges depict an indistinguishability relation. In a world $w$, a formula $\phi$ is known, denoted by $\square \phi$, if and only if $\phi$ is true in each  world indistinguishable from $w$. These frames can be extended to multi-agent systems by introducing an indistinguishability relation for each agent. A formula $\phi$ is distributed knowledge of a group, first introduced by Halpern and Moses \cite{DBLP:conf/podc/HalpernM84}, if and only if $\phi$ is true in all worlds that cannot be distinguished by any member of the group.

Some recent publications in logic have explored simplicial complexes as models for modal logics with multiple agents. In its simplest form, pure simplicial complexes describe settings in which agents never die and impure complexes are capable of capturing an agent's death. Goubault, Ledent, and Rajsbaum~\cite{DBLP:journals/iandc/GoubaultLR21} pioneered the analysis of simplicial interpretations and laid the foundation for pure complexes as epistemic models. Work regarding impure complexes was conducted independently by van Ditmarsch, Kuznets, and Randrianomentsoa~\cite{DBLP:journals/lmcs/RandrianomentsoaDK23} and by Goubault, Ledent, and Rajsbaum~\cite{DBLP:conf/stacs/GoubaultLR22}. The latter work shows the equivalence between their simplicial models and Kripke models for the  logic $\mathsf{KB4}_n$, whereas van Ditmarsch, Kuznets, and Randrianomentsoa~\cite{DBLP:journals/lmcs/RandrianomentsoaDK23} propose a three-valued logic in which formulas that reference dead agents might be undefined. Van Ditmarsch, Kuzents, and Randrianomentsoa~\cite{DitmarschKuzents23} further compare two-valued with three-valued semantics and categorise simplicial models accordingly. Lastly, Goubault et al.~\cite{goubault2023simplicial} generalise impure complexes based on their containment of sub-worlds, i.e., worlds whose set of alive agents is contained in another indistinguishable world's set of alive agents.

As pointed out by van Ditmarsch et al.~\cite{10.1007/978-3-030-75267-5_1}, improper\footnote{A frame is proper if any two worlds with the same set of alive agents can be distinguished by at least one of them.} Kripke frames cannot be depicted by simplicial complexes. However, relaxing the properties of simplicial complexes might enable us to do so and lead to novel notions of group knowledge. One example of a slightly relaxed version of a simplicial complex is a semi-simplicial set. Simply put, semi-simplicial sets allow us to model global states that consist of the same local states, but differ in their meaning. These kind of scenarios arise in distributed systems when accessing concurrent objects. For example, consider two processes $P$ and $Q$ that concurrently enqueue their values $v_P$ and $v_Q$ to a shared queue. Upon completion, each process locally knows that the queue contains its submitted value. However, they do not know whose value was enqueued first, i.e.,~$P$ and $Q$ cannot distinguish the worlds where the state of the queue is $v_Pv_Q$ or $v_Qv_P$. The complex below in Figure~\ref{fig:intro_queue} illustrates this scenario. 
\begin{figure}[ht]
\begin{center}
\begin{tikzpicture}[scale=0.75]
\node[circle, draw] at (0,0) (w1) [label = below:$ $]{$v_Q$};
\node[circle, draw]at (4,0) (b1)[label = above:$ $]{$v_P$};
\node[circle, fill = white] at (2,1.25) (l1) {$v_Pv_Q$};
\node[circle, fill = white] at (2,-1.3) (l12) {$v_Qv_P$};
\draw[-] (w1) to[out=315, in=225]  (b1);
\draw[-
] (w1) to[out=45, in=135] (b1);
\end{tikzpicture}
\caption{The two processes $P$ and $Q$ cannot distinguish between worlds $v_Pv_Q$ and $v_Qv_P$.}\label{fig:intro_queue}
\end{center}
\end{figure}
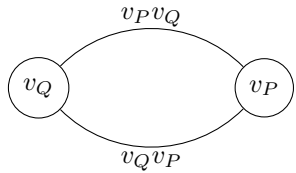
The local state of a process is represented by a vertex that is labelled accordingly and global states correspond to edges. The processes cannot distinguish between the global states because their local states can be part of both. That is, $v_Q$ and $v_P$ belong to both edges. However, we must not restrict the notion of indistinguishability to containing vertices alone. Instead, we can look at higher-order connectivity as well. By doing so,  we introduce a new kind of knowledge that has not yet been explored.

In this paper, we present a semantics for epistemic reasoning on simplicial models that are based on semi-simplicial sets and introduce the synergistic knowledge operator $[G]$. Our notion of synergistic knowledge is capable of describing scenarios in which a group of agents can know more than just the consequences of their pooled knowledge. Different from the higher-order interpretation of distributed knowledge by Goubault et al.~\cite{goubault2023semisimplicial}, which analyses the knowledge of a set of agents, we unfold relations within a group and analyse its knowledge with respect to the internal structure of the group. With this interpretation, the knowledge of two seemingly equal groups can differ due to internal relations between agents. Limiting oneself to sets of agents cannot model this without introducing hybrid agents that represent relations between agents.

Given a finite set of agents $\ag$, we introduce a new modality $[G]$, where the set $G \subseteq \pow(\ag) \setminus \{\emptyset\}$ is called an agent pattern. Agent patterns are best understood representing relations within a group. For example, if $~{G = \{\{a,b,c\}\}}$, then, the agents $a,b,$ and $c$  mutually benefit from each other and $[G]$ reasons about the synergy of the three agents; whereas if $G = \{\{a\},\{b\},\{c\}\}$, the modality $[G]$ describes traditional distributed knowledge. When interpreting the language of synergistic knowledge on simplicial models based on semi-simplicial sets, $[G]\phi$ is true in a simplex $S$ if and only if $\phi$ is true in all simplices that intersect with $S$ in $G$. For example, in Figure \ref{fig:intro_queue}, the agent pattern $\{\{P,Q\}\}$ can distinguish between $v_Pv_Q$ and $v_Qv_P$ because the two edges are different. 

Another contribution of this work is the axiom system $\syn$. We show that $\syn$ is sound and complete with respect to our simplicial models. Our proof employs a variant of the unravelling method used to establish completeness for traditional distributed knowledge (cf. Fagin et al.~\cite{DBLP:books/mit/FHMV1995} and Goubault et al.~\cite{goubault2023semisimplicial}). However, our logic has new axioms that close knowledge under sub-simplices. For example, we cannot only rely on the usual monotonicity axiom that states $[G]\varphi \rightarrow [H]\varphi$, whenever $G\subseteq H$. Instead, we need that
\begin{equation}\label{eq:intro1}
[\{B\}]\phi \rightarrow [G]\phi \quad\text{if there exists  $A \in G$ and $B\subseteq A$.}
\end{equation}
Most notably, \eqref{eq:intro1} can be proved by adding an axiom enforcing that adding agent subpatterns does not create new insights, i.e.,
\[
[G\cup \{B\}]\phi \rightarrow [G]\phi \quad\text{if there exists  $A \in G$ and $B\subseteq A$.}
\]
After introducing the syntax and semantics of $\syn$, we proceed by proving soundness and completeness of $\syn$ with respect to simplicial models. Soundness can be proven right away, but completeness is more involved. To prove completeness, we first introduce different types of Kripke models and show that a weaker version of $\syn$, namely $\synminus$, is sound and complete with respect to those models. After that, we show that our proofs can be extended to Kripke models that are equivalent to simplicial models which are based on semi-simplicial sets. The difference between $\syn$ and $\synminus$ is that $\syn$ characterises proper models.

We first introduce $\kappa$-models, which are similar to the standard $\mathsf{KB}4_n$ Kripke models. It is worth noting that, $\kappa$-models are pseudo-models because they do not satisfy standard group knowledge. That is, given two agent patterns $G$ and $H$, if both cannot distinguish two worlds, neither can their union. However, not satisfying standard group knowledge lets us prove soundness and completeness of $\synminus$ with respect to $\kappa$-models using a canonical model construction. Subsequently, we present $\delta$-models, which are $\kappa$-models that satisfy standard group knowledge. We use unravelling of $\kappa$-models to show that $\synminus$ is sound and complete with respect to $\delta$-models. In a last step, we extend the previous proofs to proper $\delta$-models which shows soundness and completeness of $\syn$ with respect to them. In order to show completeness of $\syn$ with respect to simplicial models, we develop a novel construction that takes a proper $\delta$-frame as input and transforms it into an equivalent complex. This implies completeness of $\syn$ with respect to our simplicial models. The constructive nature of the translation  intuitively demonstrates the relation between the two semantic approaches.

Section \ref{sec:simplicial} presents our notion of a simplicial complex together with an indistinguishability relation for synergistic knowledge. In Section~\ref{sec:logic} we introduce the axiom system $\syn$ and show its soundness with respect to the earlier defined simplicial models. Section \ref{sec:examples} accompanies the presented theory with examples from distributed computing. In Section~\ref{sec: completeness of simplicial models} we prove that $\syn$ is complete with respect to simplicial models. Lastly, in Section~\ref{sec:conclusion} and Section~\ref{sec:future work} we conclude our work and discuss future research directions.

\section{Simplicial structures}\label{sec:simplicial}
In this section, we introduce simplicial structures based on semi-simplicial sets. Let $\ag$ denote the set of finitely many agents and let
\[
\agsi = \{ (A,i) \mid A \subseteq \ag \text{ and } i\in\BN \}.
\]
Let $S\subseteq \agsi$. An element $(A,i)\in S$ is \emph{maximal in $S$} if and only if 
\[
\forall (B,j)\in S. |A|\geq|B|, \text{ where $|X|$ denotes the cardinality of the set $X$.}
\]

\begin{definition}[Simplex]\label{def:simplex}
Let $\emptyset\neq S\subseteq \agsi$. $S$ is a \emph{simplex} if and only if
\begin{enumerate}[$\mathsf{S}$1:]
\item The maximal element is unique, i.e.,
\[
\text{if }(A,i)\in S \text{ and } (B,j)\in S \text{ are maximal in } S \text{, then } A=B \text{ and } i=j.
\]
The maximal element of $S$ is denoted as $\max (S)$. 
\item $S$ is uniquely downwards closed, i.e., for $(B,i)\in S$ and $C\subseteq B$
\[
\exists ! j\in \BN. (C,j)\in S, \text{ where $\exists ! j$ means that there exists exactly one $j$.}
\]
\item $S$ contains nothing else, i.e.,
\[
(B,i)\in S \text{ and } (A,j)=\max(S) \text{ implies } B\subseteq A.
\]
\end{enumerate}
\end{definition}

\begin{definition}[Complex]\label{def:complex}
Let $\BC$ be a set of simplices. $\BC$ is a \emph{complex} if and only if
\begin{enumerate}[$\mathsf{C}$:]
\item For any $S,T  \in \BC$, if there exist $A \subseteq \ag$ and $i \in \mathbb{N}$ with $(A,i) \in S$ and  $(A,i) \in T$, then
\[
\text{for all $B \subseteq A$ and all $j$} \quad (B,j)\in S \quad\text{iff}\quad (B,j)\in T.
\]
\end{enumerate}
\end{definition}

\begin{lemma}\label{lem:maxcomplex}
Let $\BC$ be a complex and $S,T \in \BC$. We find
\[
\max(S)=\max(T) \text{ implies } S=T.
\]
\end{lemma}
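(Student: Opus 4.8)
The plan is simply to unwind the three simplex conditions together with the complex axiom \C. Suppose $\max(S)=\max(T)$ and write this common maximal element as $(A,i)$. Since $(A,i)\in S$ and $(A,i)\in T$, axiom \C applies to this very face: for all $B\subseteq A$ and all $j$ we get $(B,j)\in S$ if and only if $(B,j)\in T$.

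It then remains to check that this biconditional already covers every pair that could possibly belong to $S$ or to $T$. This is exactly what condition $\siii$ gives us: if $(B,j)\in S$ then, since $(A,i)=\max(S)$, we have $B\subseteq A$, and likewise for $T$. Hence, given $(B,j)\in S$, we obtain $B\subseteq A$ and so, by the biconditional above, $(B,j)\in T$; this shows $S\subseteq T$. The reverse inclusion is symmetric, using $(A,i)=\max(T)$ instead. Therefore $S=T$.

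I do not expect a genuine obstacle here; the proof is a short double-inclusion argument. The only point that deserves a moment's care is making sure the hypothesis is used correctly: axiom \C requires a pair $(A,i)$ that lies in \emph{both} simplices, and $\max(S)=\max(T)$ supplies precisely such a pair, while $\siii$ is what forces every other element of $S$ (and of $T$) to sit below $A$ so that \C is applicable to it.
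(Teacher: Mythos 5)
Your proof is correct and follows essentially the same route as the paper's: use $\siii$ to place every element of $S$ (resp.\ $T$) below the common maximal element $(A,i)$, then apply Condition $\mathsf{C}$ to that shared pair to transfer elements, giving both inclusions. Your explicit remark that $\max(S)=\max(T)$ is what supplies the pair lying in both simplices is exactly the point the paper's proof uses implicitly.
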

\begin{proof}
We show $S \subseteq T$. The other direction is symmetric.
Let $(A,i) = \max(S)$.
Assume $(B,j) \in S$. Because of $\siii$, we have $B \subseteq A$.
By Condition~$\mathsf{C}$, we conclude   $(B,j) \in T$.
\qedhere
\end{proof}

Whenever it is clear from the context, we abbreviate $(\{a_1,...,a_n\},i)$ as $a_1...a_ni$. Furthermore, we omit elements of the form $(\emptyset,i)$ and may use a row (or a mixed row-column) notation to emphasize simplices. For example,
\[
\left\{
\begin{Bmatrix} ab0 \\ a0 \\ b0 \end{Bmatrix},
\begin{Bmatrix} ab1 \\ a0 \\ b1 \end{Bmatrix}
\right\}
\]
is a complex that contains 2 simplices. Whenever we refer to a simplex within a complex, we write $\langle a_1,..,a_ni \rangle$ for the simplex with maximal element $(\{a_1,...,a_n\},i)$. Condition~$\mathsf{C}$ guarantees that this notation is well-defined.

\begin{remark}\label{rem:loc}
Complexes can contain sub-simplices. That is, if $S\in \BC$, then there may exist $T\in \BC$ with $S\subseteq T$. An example of such a complex is
\[
\BC = \left\{
\begin{Bmatrix} abc0\\ ab0, ac0, bc0 \\ a0 , b0, c0 \end{Bmatrix},
\begin{Bmatrix} ab0 \\ a0 , b0 \end{Bmatrix}
\right\}.
\]
\end{remark}

\begin{definition}[Indistinguishability]\label{def:indist}
Let $S\subseteq\agsi$, we define 
\[
S^\circ=\{A \mid \exists i\in \BN: (A,i)\in S\}. 
\]
An agent pattern $G$ is a subset of\/ $\pow(\ag) \setminus \{\emptyset\}$. An agent pattern $G$ cannot distinguish between two simplices $S$ and~$T$, denoted by $S \sim_G T$, if and only if $G \subseteq (S\cap T)^\circ$. 
\end{definition}

We will now prove some properties of $\sim_G$ that motivate it as an indistinguishability relation between simplices.

\begin{lemma}
$\sim_G$ is transitive and symmetric.
\end{lemma}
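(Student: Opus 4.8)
The plan is to treat symmetry and transitivity separately. Symmetry is immediate: since $S\cap T = T\cap S$ we have $(S\cap T)^\circ = (T\cap S)^\circ$, so by Definition~\ref{def:indist} $S\sim_G T$ holds if and only if $T\sim_G S$ holds. Nothing beyond unfolding the definition is needed here.

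For transitivity I would assume $S\sim_G T$ and $T\sim_G U$ and show $G\subseteq (S\cap U)^\circ$. Fix an arbitrary $A\in G$. From $S\sim_G T$ there is some $i$ with $(A,i)\in S\cap T$, and from $T\sim_G U$ there is some $j$ with $(A,j)\in T\cap U$. The crucial step is to argue $i=j$. For this I apply Condition~$\sii$ to the simplex $T$ with $B=A$ and $C=A$ (note $A\subseteq A$ and $(A,i)\in T$): it yields that there is \emph{exactly one} natural number tagging $A$ inside $T$. Since both $(A,i)$ and $(A,j)$ lie in $T$, this forces $i=j$. Hence $(A,i)\in S$ and $(A,i)\in U$, so $(A,i)\in S\cap U$ and therefore $A\in (S\cap U)^\circ$. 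As $A\in G$ was arbitrary, $G\subseteq (S\cap U)^\circ$, i.e.\ $S\sim_G U$.

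The only real obstacle is realising that plain set intersection does not suffice for transitivity: a priori $A$ could appear in $S\cap T$ with tag $i$ and in $T\cap U$ with a different tag $j$, in which case nothing would follow about $S\cap U$. What rescues the argument is precisely the uniqueness half of unique downward closure ($\sii$), which pins down the natural-number tag of every agent set occurring in a given simplex (in particular of its maximal set). Once this is observed, the proof is a short unfolding of Definition~\ref{def:indist}; no properties of complexes (Condition~$\mathsf{C}$) or of $\si$, $\siii$ are required, only that $S,T,U$ are simplices.
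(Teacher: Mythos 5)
Your proof is correct and follows essentially the same route as the paper: symmetry from commutativity of intersection, and transitivity by fixing $A\in G$, extracting witnesses $(A,i)\in S\cap T$ and $(A,j)\in T\cap U$, and using the uniqueness part of Condition~$\sii$ (applied inside $T$ with $B=C=A$) to force $i=j$. Your explicit remark that plain intersection would not suffice without this uniqueness is exactly the point the paper's terser invocation of $\sii$ relies on.
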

\begin{proof}
Symmetry immediately follows from the fact that set intersection is commutative.~To show transitivity, let $S,T,U$  be  simplices with $S\sim_G T$ and $T\sim_G U$, i.e.,
\begin{gather}
G  \subseteq (S \cap T)^\circ \label{eq:trans:1}\\
G  \subseteq (T \cap U)^\circ \label{eq:trans:2}
\end{gather}
Let $A \in G$. 
Because of \eqref{eq:trans:1}, there exists $i$ with
\begin{equation} \label{eq:trans:3}
(A,i) \in S
 \quad\text{and}\quad
(A,i) \in T.
\end{equation}
Because of \eqref{eq:trans:2}, there exists $j$ with
\begin{equation} \label{eq:trans:4}
(A,j) \in T
 \quad\text{and}\quad
(A,j) \in U.
\end{equation}
From \eqref{eq:trans:3},  \eqref{eq:trans:4}, and Condition $\sii$ we obtain $i=j$.
Thus by  \eqref{eq:trans:3} and  \eqref{eq:trans:4}, we get $A \in (S \cap U)^\circ $.
Since  $A$ was arbitrary in $G$, we conclude $G\subseteq  (S \cap U)^\circ $.
\qedhere
\end{proof}

\begin{lemma}\label{lem:simp_reflexive}
Let $G$ be an agent pattern and 
\[
G^\star := \{\{a\} \mid \exists A\in G \text{ and } a\in A\}.
\]
Let $\CS_G$ be a maximal set of simplices such that for any $S\in \CS_G$ we have $G^\star \subseteq S^\circ$. The indistinguishability relation $\sim_G$ is reflexive on $\CS_G \times \CS_G$ and empty otherwise.
\end{lemma}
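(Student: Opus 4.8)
The plan is to prove the two halves of the statement separately, using throughout the following characterisation of $\CS_G$ coming from maximality: $\CS_G$ contains every simplex $S$ with $G^\star \subseteq S^\circ$ and only such simplices, so $\CS_G = \{S \text{ simplex} \mid G^\star \subseteq S^\circ\}$. (If a simplex $S$ satisfied $G^\star \subseteq S^\circ$ but were not in $\CS_G$, then $\CS_G \cup \{S\}$ would still have the defining property and be strictly larger, contradicting maximality.)

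For reflexivity on $\CS_G \times \CS_G$, fix $S \in \CS_G$ and write $(M,k) = \max(S)$; I must show $G \subseteq (S \cap S)^\circ = S^\circ$, i.e.\ $S \sim_G S$. Take any $A \in G$. For each $a \in A$ we have $\{a\} \in G^\star \subseteq S^\circ$, so $(\{a\}, i_a) \in S$ for some $i_a$; by $\siii$ this forces $\{a\} \subseteq M$, hence $a \in M$. As $a \in A$ was arbitrary, $A \subseteq M$, and since $(M,k) \in S$, condition $\sii$ (unique downward closure) gives some $j$ with $(A,j) \in S$, i.e.\ $A \in S^\circ$. Since $A \in G$ was arbitrary, $G \subseteq S^\circ$, so $S \sim_G S$.

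For the claim that $\sim_G$ is empty outside $\CS_G \times \CS_G$, I argue contrapositively: I show that $S \sim_G T$ implies $S, T \in \CS_G$. Assume $G \subseteq (S \cap T)^\circ$ and pick any $\{a\} \in G^\star$, witnessed by some $A \in G$ with $a \in A$. Then there is a common index $i$ with $(A,i) \in S$ and $(A,i) \in T$; since $\{a\} \subseteq A$, applying $\sii$ inside $S$ and inside $T$ yields indices $j, j'$ with $(\{a\}, j) \in S$ and $(\{a\}, j') \in T$, so $\{a\} \in S^\circ$ and $\{a\} \in T^\circ$. As $\{a\} \in G^\star$ was arbitrary, $G^\star \subseteq S^\circ$ and $G^\star \subseteq T^\circ$, whence $S, T \in \CS_G$ by the maximality characterisation. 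Consequently no pair lying outside $\CS_G \times \CS_G$ is related by $\sim_G$.

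There is no serious obstacle here. The one point needing care is the implication $G^\star \subseteq S^\circ \Rightarrow G \subseteq S^\circ$ used in the reflexivity part: it is not immediate from the definition of $G^\star$ and genuinely needs both simplex axioms, $\siii$ to push the singletons up into $\max(S)$ and then $\sii$ to pull the full set $A$ back down into $S$. The other thing not to forget is the appeal to maximality of $\CS_G$ when concluding membership in the second part.
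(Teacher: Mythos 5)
Your proof is correct and follows essentially the same route as the paper: reflexivity by pushing the singletons of $G^\star$ into $\max(S)$ via $\siii$ and then pulling each $A \in G$ back down via $\sii$, and the emptiness claim by using $\sii$ to extract a singleton of $G^\star$ from a shared $(A,i)$, which you phrase contrapositively where the paper argues by contradiction. Your explicit use of the maximality of $\CS_G$ to characterise it as all simplices with $G^\star \subseteq S^\circ$ is a point the paper leaves implicit, but the substance is identical.
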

\begin{proof}
We first show reflexivity. 
If $G=\emptyset$, then trivially $G \subseteq (S \cap S)^\circ$ for any $S$.
Assume   $G\neq\emptyset$.
Let $S\in \CS_G$. For each $B \in G$, 
we have to show that $B \in (S \cap S)^\circ$, i.e.,~that 
\begin{equation}\label{eq:sym:3}
\text{there exists $i$ with $(B,i) \in S$}.
\end{equation}
Let $(A,i):=\max(S)$.
Let $b \in B$. Because of  $G^\star \subseteq S^\circ$, 
there exists  $l$ such that $(\{b\},l) \in S$.
By $\siii$ we get $b \in A$. Since $b$ was arbitrary in $B$, we get $B \subseteq A$.
By $\sii$ we conclude that \eqref{eq:sym:3} holds and symmetry is established.

We now show that $\sim_G$ is empty otherwise.
Let $S$ be a simplex such that $~{G^\star \nsubseteq S^\circ}$ and let $T$ be an arbitrary simplex. 
Then there exists $a,A$ with $~{a \in A \in G}$  and $\{a\} \notin S^\circ$, i.e.,
\begin{equation}\label{eq:contra:2}
\text{for all $i$, $(\{a\},i) \notin S$.}
\end{equation}
Suppose towards a contradiction that 
\begin{equation}\label{eq:contra:1}
G \subseteq (S \cap T)^\circ
\end{equation}
Because of $A \in G$ we get $A \in  (S \cap T)^\circ$.
Hence $A \in S^\circ$, i.e., there exists $l$ with $(A,l) \in S$.
With S2 and $\{a\} \subseteq A$ we find that there exists $j$ with $(\{a\},j)\in S$. This is a contradiction to \eqref{eq:contra:2}.
Thus \eqref{eq:contra:1} cannot hold.
\qedhere
\end{proof}

\begin{corollary}\label{cor:equiv_rel}
$\sim_G$ is an equivalence relation on $\CS_G\times \CS_G$.
\end{corollary}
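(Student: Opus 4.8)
The plan is to simply assemble the two preceding lemmas, so this is essentially a bookkeeping argument rather than a new proof. First I would recall that the lemma immediately preceding Lemma~\ref{lem:simp_reflexive} shows that $\sim_G$ is symmetric and transitive as a relation on the collection of all simplices. Since symmetry and transitivity are inherited by the restriction of a relation to any subclass of its domain — if $(S,T),(T,U)$ lie in the restricted relation then in particular they lie in the original one, so the conclusion transfers — these two properties continue to hold for $\sim_G$ restricted to $\CS_G \times \CS_G$.

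Next I would invoke Lemma~\ref{lem:simp_reflexive}, which states precisely that $\sim_G$ is reflexive on $\CS_G \times \CS_G$ (and empty elsewhere). Combining reflexivity on $\CS_G \times \CS_G$ with the symmetry and transitivity inherited from the global relation, we conclude that $\sim_G$ restricted to $\CS_G \times \CS_G$ satisfies all three defining properties of an equivalence relation, which is exactly the claim. I do not expect any obstacle here; the only point that deserves an explicit word is the remark above that passing to the restriction preserves symmetry and transitivity, after which the corollary follows in two sentences.
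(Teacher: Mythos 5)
Your proposal is correct and matches the paper's (implicit) reasoning: the corollary is stated without proof precisely because it follows by combining the symmetry/transitivity lemma with Lemma~\ref{lem:simp_reflexive}, exactly as you do. The remark that restriction preserves symmetry and transitivity is the only detail worth mentioning, and you handle it properly.
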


\begin{lemma}\label{l:extendG:1}
Let $\BC$ be a complex and 
$S, T \in \BC$. Further, let $A \in (S \cap T)^\circ$ and $B \subseteq A$. 
We find $B \in (S \cap T)^\circ$.
\end{lemma}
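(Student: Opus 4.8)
The plan is to prove $B \in (S \cap T)^\circ$ directly, essentially by unpacking the definition of $(\cdot)^\circ$ and invoking Condition~$\mathsf{C}$ twice — once for $S$ and once for $T$ — to synchronise the natural-number index across both simplices. First I would use the hypothesis $A \in (S \cap T)^\circ$ to extract, by Definition~\ref{def:indist}, some $i \in \BN$ with $(A,i) \in S$ and $(A,i) \in T$. Since $S$ is a simplex and $B \subseteq A$, Condition~$\sii$ (unique downward closure) yields a unique $j \in \BN$ with $(B,j) \in S$. The goal is then to show that this same $j$ also works for $T$, i.e.\ $(B,j) \in T$, which immediately gives $B \in (S\cap T)^\circ$.

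For the transfer step, I would apply Condition~$\mathsf{C}$ to the pair $S, T \in \BC$: since $(A,i)$ lies in both $S$ and $T$, Condition~$\mathsf{C}$ tells us that for all $B' \subseteq A$ and all $k$, $(B',k) \in S$ iff $(B',k) \in T$. Instantiating this with $B' = B$ (which satisfies $B \subseteq A$) and $k = j$, and using $(B,j) \in S$ from the previous paragraph, we conclude $(B,j) \in T$. Hence $(B,j) \in S \cap T$, so $B \in (S \cap T)^\circ$, as desired.

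I do not expect any serious obstacle here; the lemma is a routine consequence of Condition~$\mathsf{C}$ together with $\sii$. The only subtlety worth stating carefully is why some index $j$ for $B$ in $S$ exists at all — this is exactly $\sii$ applied inside the simplex $S$ (we need $B \subseteq A$ and $(A,i)\in S$, both of which we have) — and why it is legitimate to use the \emph{same} $j$ on the $T$ side, which is precisely the ``iff'' in Condition~$\mathsf{C}$. One could alternatively derive a $j'$ for $B$ in $T$ independently via $\sii$ and then argue $j = j'$ using $\mathsf{C}$, but routing everything through the single application of $\mathsf{C}$ above is cleaner. Either way the proof is short, and it is the kind of bookkeeping lemma that will presumably be used repeatedly when relating $\sim_G$ to sub-simplices in the completeness argument.
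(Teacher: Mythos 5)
Your proposal is correct and follows essentially the same route as the paper's proof: extract an index $i$ with $(A,i)\in S\cap T$, use $\sii$ inside $S$ to obtain $j$ with $(B,j)\in S$, and then transfer this to $T$ via a single application of Condition~$\mathsf{C}$. The extra remarks about uniqueness of $j$ and the alternative derivation are fine but not needed; the argument as given matches the paper.
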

\begin{proof}
From $A \in (S \cap T)^\circ$, we obtain that there exists $i$ such that $(A,i) \in S$ and $(A,i) \in T$.
From $\sii$ we find that there exists $j$ such that $(B,j) \in S$.
Thus, by $\mathsf{C}$, we get  $(B,j) \in T$ and we conclude $B \in (S \cap T)^\circ$.
\end{proof}

\begin{corollary}\label{cor:equiv}
Let $G$ be an agent pattern and let $A,B \subseteq \ag$ such that $~{B\subseteq A \in G}$. It holds that
\[
 \sim_{G\cup \{B\}}     \ =\  \sim_{G}.
\]
\end{corollary}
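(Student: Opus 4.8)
The plan is to read off the claim directly from the definition of $\sim_G$ (Definition~\ref{def:indist}) and from Lemma~\ref{l:extendG:1}. Fix the complex $\BC$ on whose simplices the relations $\sim_{G}$ and $\sim_{G\cup\{B\}}$ are defined, and let $S,T\in\BC$ be arbitrary. Unfolding the definition, $S\sim_{G}T$ means $G\subseteq (S\cap T)^\circ$ and $S\sim_{G\cup\{B\}}T$ means $G\cup\{B\}\subseteq (S\cap T)^\circ$, so it suffices to show that these two set inclusions are equivalent.

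For the inclusion $\sim_{G\cup\{B\}}\subseteq\sim_{G}$ there is nothing to do: if $G\cup\{B\}\subseteq (S\cap T)^\circ$ then a fortiori $G\subseteq (S\cap T)^\circ$, since $G\subseteq G\cup\{B\}$. For the converse inclusion $\sim_{G}\subseteq\sim_{G\cup\{B\}}$, assume $G\subseteq (S\cap T)^\circ$. Since $A\in G$ by hypothesis, we get $A\in (S\cap T)^\circ$, and then $B\subseteq A$ together with Lemma~\ref{l:extendG:1} yields $B\in (S\cap T)^\circ$. Combining this with $G\subseteq (S\cap T)^\circ$ gives $G\cup\{B\}\subseteq (S\cap T)^\circ$, i.e.~$S\sim_{G\cup\{B\}}T$. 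Since $S,T$ were arbitrary, the two relations coincide.

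I do not expect any real obstacle here; the only point worth flagging is that Lemma~\ref{l:extendG:1} — and hence Condition~$\mathsf{C}$ — is genuinely used: the downward-closure axiom $\sii$ alone only guarantees that $B$ occurs in $S$ and in $T$ with \emph{some} indices, whereas we need the \emph{same} index in both, which is exactly what Condition~$\mathsf{C}$ supplies. This is also why the corollary is stated for relations over the simplices of a fixed complex rather than for arbitrary simplices.
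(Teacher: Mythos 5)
Your proof is correct and follows exactly the route the paper intends: the inclusion $\sim_{G\cup\{B\}}\subseteq\sim_{G}$ is immediate from the definition (equivalently, anti-monotonicity, Lemma~\ref{l:anti:_1}), and the converse is precisely an application of Lemma~\ref{l:extendG:1}, which is why the statement appears as a corollary of that lemma with no separate proof. Your side remark is also accurate: Condition~$\mathsf{C}$, packaged in Lemma~\ref{l:extendG:1}, is what guarantees $B$ occurs with the \emph{same} index in both simplices, so the relations are indeed understood over the simplices of a fixed complex.
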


\begin{lemma}\label{lem:std_knowledge}
Let $G$ be an agent pattern. It holds that
\begin{equation}\label{eq:D}
 \bigcap\limits_{B\in G}\sim_{\{B\}} \ = \ \sim_G. \tag{$\mathsf{D}$}
\end{equation}
\end{lemma}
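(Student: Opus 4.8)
The plan is to prove the identity by a chain of biconditionals obtained purely by unfolding Definition~\ref{def:indist}. Recall that for an agent pattern $H$ and simplices $S,T$ we have $S \sim_H T$ iff $H \subseteq (S\cap T)^\circ$, and that for a singleton $\{B\}$ this specialises to $S \sim_{\{B\}} T$ iff $B \in (S\cap T)^\circ$, i.e.\ iff there is some $i \in \BN$ with $(B,i) \in S$ and $(B,i) \in T$.

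I would then fix arbitrary simplices $S$ and $T$ and reason as follows: $(S,T) \in \bigcap_{B\in G}\sim_{\{B\}}$ holds iff $S \sim_{\{B\}} T$ for every $B \in G$, iff $B \in (S\cap T)^\circ$ for every $B \in G$, iff $G \subseteq (S\cap T)^\circ$, iff $S \sim_G T$. Every step is a literal rewriting of the previous one, so one gets both inclusions at once. The point worth making explicit is that the index $i$ witnessing $(B,i) \in S \cap T$ is allowed to depend on $B$; no coherence between the witnesses for different members of $G$ is required, which is exactly why the equality holds without any side condition — in contrast with the failure of standard group knowledge for $\kappa$-models mentioned in the introduction.

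The only place that calls for a remark is the limiting case $G = \emptyset$, which is permitted by the definition of an agent pattern. There the right-hand side $\sim_\emptyset$ is the total relation on simplices, since $\emptyset \subseteq (S\cap T)^\circ$ holds trivially, and the left-hand side is an empty intersection of relations, which is likewise the total relation; so the two sides coincide in this case as well. I do not anticipate any genuine obstacle: the lemma is in essence a reformulation of the definition of $\sim_G$, and its role is conceptual, recording that simplicial models based on semi-simplicial sets validate the principle of standard group knowledge expressed by the axiom~$\mathsf{D}$.
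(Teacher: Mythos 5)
Your proof is correct and follows essentially the same route as the paper's: a chain of biconditionals unfolding Definition~\ref{def:indist}, from $(S,T)\in\bigcap_{B\in G}\sim_{\{B\}}$ through $B\in(S\cap T)^\circ$ for each $B\in G$ to $G\subseteq(S\cap T)^\circ$ and hence $S\sim_G T$. The extra remarks on the $B$-dependent witnesses and the case $G=\emptyset$ are harmless additions to what is otherwise the paper's own argument.
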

\begin{proof}
$(S,T) \in \bigcap_{B\in G}\sim_{\{B\}}$ if{f}
for each  $B\in G$, we have  $B \in (S \cap T)^\circ$ if{f} 
$G\subseteq(S\cap T)^\circ$   if{f} $S \sim_G T$.
\qedhere
\end{proof}

\begin{remark}[Standard group knowledge]
The property \eqref{eq:D} is also referred to as standard group knowledge.
\end{remark}

\begin{lemma}[Anti-Monotonicity]\label{l:anti:_1}
$G \subseteq H$ implies $\sim_H \subseteq \sim_G$.
\end{lemma}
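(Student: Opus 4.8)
The plan is to unfold the definition of $\sim_G$ from Definition~\ref{def:indist} and reduce the claim to transitivity of set inclusion; no genuinely simplicial input is needed. Recall that $S \sim_G T$ holds exactly when $G \subseteq (S \cap T)^\circ$, so $\sim_H \subseteq \sim_G$ is the assertion that every pair of simplices witnessing $G \subseteq H \subseteq (S\cap T)^\circ$ also witnesses $G \subseteq (S\cap T)^\circ$.

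Concretely, I would argue as follows. Let $(S,T) \in \sim_H$ be arbitrary, i.e.\ $S$ and $T$ are simplices with $H \subseteq (S \cap T)^\circ$. Since $G \subseteq H$ by hypothesis, transitivity of $\subseteq$ gives $G \subseteq (S \cap T)^\circ$, which is precisely $S \sim_G T$, i.e.\ $(S,T) \in \sim_G$. As $(S,T)$ was an arbitrary element of $\sim_H$, this establishes $\sim_H \subseteq \sim_G$.

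An alternative route is to invoke Lemma~\ref{lem:std_knowledge}: from $G \subseteq H$ we immediately get $\bigcap_{B \in H} \sim_{\{B\}} \subseteq \bigcap_{B \in G} \sim_{\{B\}}$, and by \eqref{eq:D} the left-hand side equals $\sim_H$ while the right-hand side equals $\sim_G$. Either way, I expect no real obstacle: the statement is purely definitional, and the only point worth noting is that the argument works pointwise for each fixed pair $(S,T)$, so we do not even need monotonicity of $(\,\cdot\,)^\circ$ in its set argument.
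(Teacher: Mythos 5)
Your main argument is correct and is essentially the paper's own proof: unfold Definition~\ref{def:indist} and use $G \subseteq H \subseteq (S\cap T)^\circ$ for each pair $(S,T)\in\,\sim_H$. The alternative via Lemma~\ref{lem:std_knowledge} also works, but the direct definitional route is what the paper does and nothing more is needed.
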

\begin{proof}
Assume $G \subseteq H$. For any two simplices $S$ and $T$ with $S\sim_H T$, we have $G \subseteq H \subseteq (S\cap T)^\circ$ by Definition \ref{def:indist} and hence $S \sim_G T$. \qedhere
\end{proof}

The next lemma states that adding synergy to an agent pattern makes it stronger in the sense that it can distinguish more simplices. 

\begin{lemma}\label{lem:clo}
Let $H_1, H_2,\ldots, H_n \subseteq \ag$ with $n\geq2$
We have 
\[
 \sim_{\{H_1 \cup H_2, \ldots, H_n\}}     \ \subseteq\  \sim_{\{H_1, H_2, \ldots, H_n\}}.
\]
\end{lemma}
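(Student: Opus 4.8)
The plan is to reduce the claim to Lemma~\ref{l:extendG:1}. Note first that Lemma~\ref{l:extendG:1} (and hence Corollary~\ref{cor:equiv}) is stated for simplices of a common complex~$\BC$; this is the standing context in which $\sim_G$ is considered here, and it is genuinely needed (see below). So fix a complex~$\BC$ and two simplices $S,T\in\BC$ with $S\sim_{\{H_1\cup H_2, H_3,\ldots,H_n\}} T$, which by Definition~\ref{def:indist} means $\{H_1\cup H_2, H_3,\ldots,H_n\}\subseteq (S\cap T)^\circ$. The goal is to show $\{H_1, H_2, H_3,\ldots,H_n\}\subseteq (S\cap T)^\circ$, i.e.\ $S\sim_{\{H_1,H_2,H_3,\ldots,H_n\}} T$.

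\textbf{Main step.} For each $k\in\{3,\ldots,n\}$ the membership $H_k\in (S\cap T)^\circ$ is immediate from the hypothesis. For $H_1$, we have $H_1\subseteq H_1\cup H_2$ and $H_1\cup H_2\in (S\cap T)^\circ$, so Lemma~\ref{l:extendG:1} applied with $A:=H_1\cup H_2$ and $B:=H_1$ yields $H_1\in (S\cap T)^\circ$; by the same argument with $B:=H_2$ we get $H_2\in (S\cap T)^\circ$. Hence $\{H_1,H_2,H_3,\ldots,H_n\}\subseteq (S\cap T)^\circ$, as required. Alternatively, one can argue purely at the level of relations: since $H_1\subseteq H_1\cup H_2$ and $H_2\subseteq H_1\cup H_2$, two applications of Corollary~\ref{cor:equiv} give
\[
\sim_{\{H_1\cup H_2, H_3,\ldots,H_n\}}\ =\ \sim_{\{H_1\cup H_2, H_1, H_2, H_3,\ldots,H_n\}},
\]
and then Anti-Monotonicity (Lemma~\ref{l:anti:_1}), applied to the inclusion of patterns $\{H_1,H_2,H_3,\ldots,H_n\}\subseteq\{H_1\cup H_2, H_1, H_2, H_3,\ldots,H_n\}$, gives $\sim_{\{H_1\cup H_2, H_1, H_2, H_3,\ldots,H_n\}}\subseteq \sim_{\{H_1,H_2,H_3,\ldots,H_n\}}$; combining the two yields the claim.

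\textbf{Expected obstacle.} There is no real difficulty in the argument itself; the one point that must not be overlooked is that the statement fails for arbitrary simplices that do not lie in a common complex. For instance, with $\ag=\{a,b\}$ the sets $S=\{ab0, a0, b0\}$ and $T=\{ab0, a1, b1\}$ are simplices with $ab0\in (S\cap T)^\circ$ but $a0,a1\notin S\cap T$, so $S\sim_{\{\{a,b\}\}} T$ while $S\not\sim_{\{\{a\},\{b\}\}} T$; note $\{S,T\}$ is not a complex, precisely because Condition~$\mathsf{C}$ would be violated. Thus the proof must route through Condition~$\mathsf{C}$, which it does via Lemma~\ref{l:extendG:1} (resp.\ Corollary~\ref{cor:equiv}); everything else is bookkeeping about membership in $(S\cap T)^\circ$.
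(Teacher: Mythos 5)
Your proof is correct and takes essentially the same route as the paper's: the paper likewise derives $\sim_{\{H_1 \cup H_2, \ldots, H_n\}} \ =\ \sim_{\{H_1 \cup H_2, H_1, H_2, \ldots, H_n\}} \ \subseteq\ \sim_{\{H_1, H_2, \ldots, H_n\}}$ from Lemma~\ref{l:extendG:1} (equivalently Corollary~\ref{cor:equiv}) together with anti-monotonicity (Lemma~\ref{l:anti:_1}), which is exactly your alternative argument, and your element-level version is just an unfolding of the same idea. Your observation that the simplices must lie in a common complex, so that Condition~$\mathsf{C}$ enters via Lemma~\ref{l:extendG:1}, correctly identifies the implicit context of the paper's statement.
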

\begin{proof}
From Lemma~\ref{l:extendG:1} and Lemma~\ref{l:anti:_1}  we find that
\[
\sim_{\{H_1 \cup H_2, \ldots, H_n\}}   \ =\  \sim_{\{H_1 \cup H_2, H_1, H_2, \ldots, H_n\}}  \ \subseteq \  \sim_{\{H_1, H_2, \ldots, H_n\}}.  
\]
\qedhere
\end{proof}

In traditional Kripke semantics, distributed knowledge of a set of agents is modelled by considering the indistinguishability relation that is given by the intersection of the indistinguishability relations of the individual agents.
The following lemma states that in our framework, this intersection corresponds to the agent pattern consisting of singleton sets for each agent.

\begin{lemma}\label{lem:dist_knowledge}
Let $G \subseteq \ag$ and $H = \bigcup_{a\in G}\{\{a\}\}$. We have 
\[
 \bigcap\limits_{a\in G}\sim_{\{\{a\}\}} \ = \ \sim_H.
\]
\end{lemma}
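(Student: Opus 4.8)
The plan is to recognise the claim as a direct instance of Lemma~\ref{lem:std_knowledge}. I would apply that lemma to the agent pattern $H = \bigcup_{a \in G}\{\{a\}\}$, which immediately gives
\[
\bigcap\limits_{B \in H}\sim_{\{B\}} \ = \ \sim_H.
\]
It then remains only to match the index sets of the two intersections. The map $a \mapsto \{a\}$ is a bijection from $G$ onto $H$: injectivity of $a \mapsto \{a\}$ is immediate, and surjectivity holds by the definition of $H$. Hence every factor $\sim_{\{B\}}$ with $B \in H$ is exactly a factor $\sim_{\{\{a\}\}}$ with $a \in G$, and conversely, so
\[
\bigcap\limits_{a \in G}\sim_{\{\{a\}\}} \ = \ \bigcap\limits_{B \in H}\sim_{\{B\}}.
\]
Chaining the two displayed equalities yields $\bigcap_{a\in G}\sim_{\{\{a\}\}} = \sim_H$, as required.

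There is essentially no obstacle here; the only point requiring a little care is bookkeeping of the nesting levels, since $H$ is written as a union of singletons of singletons and one must not conflate $a$, $\{a\}$, and $\{\{a\}\}$. Once the bijection $a \mapsto \{a\}$ between $G$ and $H$ is made explicit, the two large intersections are literally indexed by the same family of relations. Alternatively, one could bypass Lemma~\ref{lem:std_knowledge} and argue directly from Definition~\ref{def:indist}: a pair $(S,T)$ lies in the left-hand intersection iff for every $a \in G$ there is some $i$ with $(\{a\},i) \in S \cap T$, iff $H \subseteq (S \cap T)^\circ$, iff $S \sim_H T$; but this just re-proves the special case of Lemma~\ref{lem:std_knowledge} already available, so invoking the lemma is the cleaner route.
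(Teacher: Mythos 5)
Your proposal is correct. The only difference from the paper is the route: the paper proves Lemma~\ref{lem:dist_knowledge} directly from Definition~\ref{def:indist}, unwinding the intersection exactly as in your ``alternative'' remark --- $(S,T)$ lies in $\bigcap_{a\in G}\sim_{\{\{a\}\}}$ iff $\{a\}\in(S\cap T)^\circ$ for every $a\in G$, iff $H\subseteq(S\cap T)^\circ$, iff $S\sim_H T$ --- which is essentially a re-run of the proof of Lemma~\ref{lem:std_knowledge} with the reindexing built in. You instead invoke Lemma~\ref{lem:std_knowledge} as a black box applied to the agent pattern $H$ and then reindex the intersection via the bijection $a\mapsto\{a\}$ between $G$ and $H$; this is legitimate since $H$ is a valid agent pattern (each $\{a\}$ is a nonempty subset of $\ag$), and your care about the nesting levels $a$, $\{a\}$, $\{\{a\}\}$ is exactly the right bookkeeping. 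The two arguments buy essentially the same thing; yours avoids duplicating the definitional computation, while the paper's self-contained version makes the statement readable without flipping back to the earlier lemma. Either is acceptable.
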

\begin{proof}
$(S,T) \in \bigcap_{a\in G}\sim_{\{\{a\}\}}$ if{f}
for each  $a\in G$, we have  $\{a\} \in (S \cap T)^\circ$ if{f} (by the definition of $H$)
$H\subseteq(S\cap T)^\circ$   if{f} $S \sim_H T$.
\end{proof}

\section{Logic}\label{sec:logic}
Let $\P$ be a countable set of atomic propositions. The logic of synergistic knowledge is a normal modal logic that includes a modality $[G]$ for each agent pattern $G$.  
Formulas of the language of synergistic knowledge $\CL$ are inductively defined by the following grammar:
\[
\phi ::= p \mid  \lnot  \phi    \mid \phi \land \phi \mid [G] \phi
\]
where $p \in \P$ and $G$ is an agent pattern. The remaining Boolean connectives are defined as usual. In particular, we set $\bot:= p \land \lnot p$ for some fixed $p \in \P$, and we write $\lalive(G)$ for $\neg[G]\bot$. If $G$ is an agent pattern, then $G^C$ denotes its complement. That is, 
\[
G^C := \{H \in \pow(\ag)\setminus \{\emptyset\} \mid  \nexists B\in G. H \subseteq B\}.
\]
Moreover, we define
\[
\dead(G) := \bigwedge_{B \in G}\neg \lalive(\{B\}).
\]
Notice that $\dead(G) \not \equiv \neg \lalive(G)$. Indeed, $\dead(G)$ expresses that for each $B\in G$, the pattern $\{B\}$ is dead, whereas $\neg\lalive(G)$ is true if some $\{B\}\subseteq G$ is dead.
The axiom system $\syn$ consists of the axioms:

\begin{gather}
\text{all propositional tautologies}\tag{Taut}
\label{eq:taut:1}\\
[G](\phi \rightarrow \psi) \rightarrow([G]\phi \rightarrow [G]\psi)\tag{K}
\label{eq:taut:1}\\
\phi \rightarrow [G]\neg[G]\neg\phi \tag{B}
\label{eq:B:1}\\
[G]\phi \to [G][G]\phi \tag{4}
\label{eq:4:1}\\
\lalive(G) \to ([G]\phi \to \phi) \tag{T}
\label{eq:T:1}\\
\lalive(G) \land \dead(G^C) \land \phi \rightarrow [G]( \dead(G^C)\rightarrow \phi) \tag{P}
\label{eq:P}\\
\bigvee_{G \subseteq \pow(\ag) \setminus \{\emptyset\}}\lalive(G) \tag{NE}
\label{eq:NE}\\
[G]\phi \rightarrow [H]\phi \quad\text{if $G\subseteq H$}\tag{Mono}
\label{eq:mono:1}\\
[G\cup \{B\}]\phi \rightarrow [G]\phi \quad\text{if there exists  $A \in G$ and $B\subseteq A$} \tag{Equiv}
\label{eq:union:1}\\
\lalive(G) \land \lalive (H) \to \lalive (G \cup H )\tag{Union}
\label{eq:union:1}\\
\lalive(G) \to \lalive(\{A \cup B\})  \quad\text{if $A,B \in G$} \tag{Clo}\label{eq:clo:1}
\end{gather}
and the inference rules modus ponens (MP) and $[G]$-necessitation ($[G]$-Nec). We write $\vdash \varphi$ to denote that $\varphi \in \CL$ can be deduced in the system $\syn$.
\[
  \begin{prooftree} 
    \hypo{A}
    \hypo {A\rightarrow B}
    \infer2{B}
  \end{prooftree}  \qquad(\text{MP})\qquad
  \begin{prooftree}
    \hypo{A}
    \infer1{[G]A}
  \end{prooftree}\qquad(\text{$[G]$-Nec})
\]
Notice that 
\begin{equation}\label{eq:loc}
\lalive(G) \land \dead(G^C) \land \phi \rightarrow [G]\phi,
\end{equation}
is not derivable in $\syn$, because, as mentioned in Remark~\ref{rem:loc}, complexes can contain sub-worlds. In order to derive Equation \eqref{eq:loc} we would have to add 
\[
\lalive(G) \land \dead(G^C) \rightarrow [G]\dead(G^C)
\]
to $\syn$ (cf. Goubault et al.~\cite{goubault2023simplicial}).
However, if $~{\ag \in G}$, we have $G^C = \emptyset$, i.e.,
\[
\lalive(G) \land \phi \rightarrow [G]\phi
\]
because the empty conjunction evaluates to $\top$.

\begin{definition}[Simplicial model]
A simplicial model $\CC = (\BC,L)$ is a pair such that
\begin{enumerate}
\item $\BC$ is a complex;
\item $L:\BC\rightarrow \pow(\P)$ is a valuation.
\end{enumerate}
\end{definition}

\begin{definition}[Truth]\label{def:truth}
Let $\CC = (\BC,L)$ be a simplicial model, $S \in \BC$, and $\phi \in \CL$.
We define $\CC,S \Vdash_{\sigma} \phi$ inductively by

\begin{align*}
  & \CC,S \Vdash_{\sigma}  p		&\text{if{f}}\qquad & p \in L(S)\\
    &\CC,S \Vdash_{\sigma}  \neg \phi 			&\text{if{f}}\qquad 	& \CC,S  \not \Vdash_{\sigma} \phi \\
  & \CC,S \Vdash_{\sigma} \phi \land \psi  	&\text{if{f}}\qquad 	&\CC,S \Vdash_{\sigma} \phi \text{ and } \CC,S \Vdash_{\sigma} \psi\\
      &\CC,S \Vdash_{\sigma}  [G] \phi \qquad &\text{if{f}} \qquad &S \sim_G T \text{ implies } \CC,T \Vdash_{\sigma} \phi \quad\text{for all $T \in \BC$}.
\end{align*}
\end{definition}

We write $\CC \Vdash_{\sigma} \phi$, if  $\CC,w \Vdash_{\sigma} \phi$ for all $S \in \BC$.
A formula $\phi$ is $\sigma$-valid, denoted by $ \Vdash_{\sigma} \phi$, if $\CC \Vdash_{\sigma}\phi$ for all models $\CC$. Whenever it is clear from the context we omit the subscript $\sigma$.

Corollary~\ref{cor:alive} is an immediate consequence of previously introduced Corollary~\ref{cor:equiv_rel}. It relates the formula $\lalive(G)$ to the structure of the underlying complex as expected.

\begin{corollary}\label{cor:alive}
Let $\CC$ be a simplicial model. For any agent pattern $G$, we find
\[
\CC,S \Vdash_\sigma \lalive (G) \quad\text{iff}\quad S\sim_G S.
\]
\end{corollary}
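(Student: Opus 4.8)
The plan is to unfold the abbreviation $\lalive(G) = \neg[G]\bot$ together with the truth clause for $[G]$, and then to read off the claim from the reflexivity behaviour of $\sim_G$ established in Lemma~\ref{lem:simp_reflexive} and Corollary~\ref{cor:equiv_rel}.

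First I would observe that $\CC,T \Vdash_{\sigma} \bot$ fails for every $T \in \BC$, since $\bot$ abbreviates $p \wedge \neg p$. Hence, by Definition~\ref{def:truth}, $\CC,S \Vdash_{\sigma} [G]\bot$ holds if and only if there is \emph{no} $T \in \BC$ with $S \sim_G T$ (the implication ``$S \sim_G T \Rightarrow \CC,T \Vdash_{\sigma}\bot$'' can hold for all $T$ only vacuously). Equivalently, $\CC,S \Vdash_{\sigma} \lalive(G)$ if and only if there exists $T \in \BC$ with $S \sim_G T$. It therefore suffices to show that some simplex of $\BC$ is $\sim_G$-indistinguishable from $S$ precisely when $S \sim_G S$.

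The direction from right to left is immediate: if $S \sim_G S$, then $T := S$ witnesses the existential claim. For the converse, suppose $S \sim_G T$ for some $T \in \BC$. Then $\sim_G$ is non-empty on the pair $(S,T)$, so by Lemma~\ref{lem:simp_reflexive} (``empty otherwise'') we have $S,T \in \CS_G$, and reflexivity of $\sim_G$ on $\CS_G \times \CS_G$ yields $S \sim_G S$; this is exactly the sense in which the corollary is immediate from Corollary~\ref{cor:equiv_rel}. Alternatively, one can argue directly: any $A \in (S\cap T)^\circ$ satisfies $A \in S^\circ = (S\cap S)^\circ$, so $G \subseteq (S\cap T)^\circ$ gives $G \subseteq (S\cap S)^\circ$, i.e.\ $S \sim_G S$.

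There is no genuine obstacle here; the only step needing a little care is the vacuous-quantifier reasoning that converts $\neg[G]\bot$ into the existence of a $\sim_G$-indistinguishable simplex, and invoking the ``empty otherwise'' clause of Lemma~\ref{lem:simp_reflexive} (or, equivalently, the direct containment $(S\cap T)^\circ \subseteq (S\cap S)^\circ$) to pass from an arbitrary witness $T$ back to $S$ itself.
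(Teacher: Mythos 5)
Your proof is correct and matches the paper's intent: the paper treats this corollary as an immediate consequence of Corollary~\ref{cor:equiv_rel} (equivalently Lemma~\ref{lem:simp_reflexive}), and your unfolding of $\lalive(G)=\neg[G]\bot$ into the existence of some $T$ with $S\sim_G T$, followed by passing from the witness $T$ back to $S\sim_G S$ (via the ``empty otherwise'' clause or the containment $(S\cap T)^\circ\subseteq (S\cap S)^\circ$), is exactly the argument the paper leaves implicit.
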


Soundness of $\syn$ with respect to simplicial models follows as usual. We present the proof of completeness in Section \ref{sec: completeness of simplicial models}.

\begin{theorem}[Soundness]\label{thm:simp_soundness}
$\vdash \varphi$ implies $\Vdash_\sigma \varphi$.
\end{theorem}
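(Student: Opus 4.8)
The plan is to prove soundness by the standard induction on the length of a derivation in $\syn$. For each axiom I must show it is $\sigma$-valid, and for each inference rule I must show it preserves $\sigma$-validity. The rule cases are routine: \eqref{eq:taut:1}-type propositional reasoning and (MP) are immediate from the truth clauses for the Boolean connectives, and $[G]$-Nec is immediate from the truth clause for $[G]$, since if $\CC,T\Vdash_\sigma\varphi$ for all $T\in\BC$ then in particular this holds for all $T$ with $S\sim_G T$. So the bulk of the work is checking the axioms one by one in an arbitrary simplicial model $\CC=(\BC,L)$ at an arbitrary simplex $S\in\BC$.

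The modal-logic-flavoured axioms follow from the fact that, by Corollary~\ref{cor:equiv_rel}, $\sim_G$ restricted to the simplices on which it is nonempty is an equivalence relation. Concretely: $(\mathrm{K})$ holds because $[G]$ is a normal box operator over the relation $\sim_G$; $(\mathrm{B})$ holds by symmetry of $\sim_G$; $(4)$ holds by transitivity of $\sim_G$; and $(\mathrm{T})$, in the guarded form $\lalive(G)\to([G]\varphi\to\varphi)$, holds because by Corollary~\ref{cor:alive} the antecedent $\lalive(G)$ is equivalent to $S\sim_G S$, so under that hypothesis we may instantiate the universal quantifier in the truth clause for $[G]\varphi$ at $T:=S$. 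For $(\mathrm{Mono})$ I use Lemma~\ref{l:anti:_1}: if $G\subseteq H$ then $\sim_H\subseteq\sim_G$, so the set of simplices reachable via $\sim_H$ is contained in that reachable via $\sim_G$, and hence $[G]\varphi$ implies $[H]\varphi$. For $(\mathrm{Equiv})$ I use Corollary~\ref{cor:equiv}: if $B\subseteq A\in G$ then $\sim_{G\cup\{B\}}\,=\,\sim_G$, so $[G\cup\{B\}]\varphi$ and $[G]\varphi$ have identical truth sets and the implication is trivial.

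The remaining axioms are about $\lalive$ and the structural conditions on complexes. $(\mathrm{NE})$: every simplex $S\in\BC$ is nonempty and has a maximal element $(A,i)$; taking $G:=\{A\}$ (or $G:=S^\circ\setminus\{\emptyset\}$ if one prefers) gives $A\in(S\cap S)^\circ$, hence $S\sim_G S$, hence $\CC,S\Vdash_\sigma\lalive(G)$ by Corollary~\ref{cor:alive}; since the disjunction ranges over all agent patterns, it holds at $S$. $(\mathrm{Union})$: from $\CC,S\Vdash_\sigma\lalive(G)$ and $\CC,S\Vdash_\sigma\lalive(H)$ we get $S\sim_G S$ and $S\sim_H S$, i.e.\ $G\subseteq S^\circ$ and $H\subseteq S^\circ$, hence $G\cup H\subseteq S^\circ$, i.e.\ $S\sim_{G\cup H}S$, which by Corollary~\ref{cor:alive} is $\lalive(G\cup H)$. $(\mathrm{Clo})$: if $A,B\in G$ and $\CC,S\Vdash_\sigma\lalive(G)$ then $A,B\in S^\circ$, so there are $i,j$ with $(A,i),(B,j)\in S$; by $\sii$ applied inside $S$ and uniqueness, $(A\cup B,k)\in S$ for some $k$ precisely when $A\cup B$ is contained in $\max(S)$ — here I must be a little careful, and the cleanest route is Lemma~\ref{lem:clo} together with Corollary~\ref{cor:alive}: $S\sim_{\{A\cup B\}}S$ follows from $S\sim_{\{A,B\}}S$ once we note $A\cup B\subseteq\max(S)$ because both $A$ and $B$ are, using $\siii$; then apply $\sii$ to get the needed index. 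For $(\mathrm{P})$, assume $\CC,S\Vdash_\sigma\lalive(G)\land\dead(G^C)\land\varphi$ and let $T$ with $S\sim_G T$; I must show $\CC,T\Vdash_\sigma\dead(G^C)\to\varphi$, so further assume $\CC,T\Vdash_\sigma\dead(G^C)$ and aim for $\CC,T\Vdash_\sigma\varphi$. The idea is that $\dead(G^C)$ at both $S$ and $T$ forces $S^\circ$ and $T^\circ$ to contain no set strictly outside $G$ (beyond what $G$ forces), and combined with $S\sim_G T$ and downward closure this pins $\max(S)$ and $\max(T)$ to the same set; then Lemma~\ref{lem:maxcomplex} gives $S=T$ and $\varphi$ transfers. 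I expect this $(\mathrm{P})$ case to be the main obstacle, since it requires unwinding the definitions of $G^C$, $\dead$, and $\lalive$ carefully and invoking $\sii$, $\siii$, Condition~$\mathsf{C}$, and Lemma~\ref{lem:maxcomplex} in concert; all the other cases are short. Finally I assemble these facts into the induction to conclude $\vdash\varphi$ implies $\Vdash_\sigma\varphi$.
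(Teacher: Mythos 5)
Your overall strategy is exactly the paper's: induction on derivations, with Corollary~\ref{cor:alive} handling (T), (NE), (Union), (Clo), Lemma~\ref{l:anti:_1} handling (Mono), Corollary~\ref{cor:equiv} handling (Equiv), and the standard arguments for the remaining modal axioms and rules. Those cases are all correct as you state them (your direct argument for (Union) via $G\cup H\subseteq S^\circ$ is even slightly shorter than the paper's appeal to Lemma~\ref{lem:std_knowledge}, and your (Clo) argument via $\siii$ and $\sii$ is the paper's own).

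The gap is the (P) case, which you yourself leave as a sketch, and as written its final step does not go through. You say that $\dead(G^C)$ together with $S\sim_G T$ ``pins $\max(S)$ and $\max(T)$ to the same set; then Lemma~\ref{lem:maxcomplex} gives $S=T$'', but Lemma~\ref{lem:maxcomplex} requires $\max(S)=\max(T)$ as elements of $\agsi$, i.e.\ the same set \emph{and} the same index; two distinct simplices can share the same top set (e.g.\ $\langle abc0\rangle$ and $\langle abc1\rangle$ in Example~\ref{example:consensus number}), so ``same set'' alone proves nothing. Two sub-arguments must be supplied. First, that $\lalive(G)\land\dead(G^C)$ forces the set component of $\max(S)$ to be a specific $A$ determined by $G$: the paper shows $G$ has a unique $\subseteq$-maximal element (if $B_1,\dots,B_n$ were several maximal elements, their union $B$ would satisfy $\lalive(\{B\})$ at $S$ by $\siii$ and $\sii$, yet $B\in G^C$, contradicting $\dead(G^C)$), and then by $\siii$ this $A$ is the top set of $S$; the same reasoning applies to $T$ after noting $\CC,T\Vdash_\sigma\lalive(G)$ via transitivity of $\sim_G$ and Corollary~\ref{cor:alive}. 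Second, the shared index: from $S\sim_G T$ and $A\in G$ there is a single $i$ with $(A,i)\in S\cap T$, and since $A$ is the top set of both simplices, $\si$ gives $\max(S)=(A,i)=\max(T)$; only now does Lemma~\ref{lem:maxcomplex} yield $S=T$ and hence $\CC,T\Vdash_\sigma\varphi$.
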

\begin{proof}
We only show (T), (P), (NE), (Union), (Clo), (Mono), (Equiv), and ($[G]$-Nec). Let $\CC = (\BC,L)$ be an arbitrary model.
\begin{enumerate}
\item (T): Consider a world $S\in \BC$ and assume that $~{\CC, S \Vdash_\sigma \lalive(G)}$ and $~{\CC,S \Vdash_\sigma [G]\varphi}$. By Corollary \ref{cor:alive}, we find $S \sim_G S$ and thus $\CC,S \Vdash_\sigma \varphi$ because $\CC,S \Vdash_\sigma [G]\varphi$.
\item (P): Consider a world $S\in \BC$ and assume $~{\CC, S\Vdash \lalive(G) \land \dead(G^C) \land \phi}$ for some $\phi \in \CL$. By assumption, $G$ must have an unique maximal element. Indeed, towards a contradiction, assume that it is not the case and $G$ has the maximal elements $B_1,B_2,...,B_n$. Consider now 
\[
B = \bigcup^n_{i=1} B_i.
\]
It is straightforward to verify that $\CC, S \Vdash \lalive(\{B\})$. However, since $~{B\in G^C}$, by assumption, it holds that $\CC,S \Vdash \dead(\{B\})$ which is a contradiction. Additionally, by $\siii$, the maximal element of $G$, say $A$, is the set of all agents alive in $S$. Let $T\in \BC$ be such that $S\sim_G T$ and assume $~{\CC, T \Vdash \dead(G^C)}$. By transitivity of $\sim_G$ and Corollary~\ref{cor:alive}, we have that $~{\CC, T \Vdash \lalive(G)}$. Thus, by the same reasoning as before, $A$ is the set of all alive agents for $T$ as well, i.e., $\max(S) = (A,i)$ and $\max(T)= (A,i)$ for some $i\in \mathbb{N}$. Finally, by Lemma~\ref{lem:maxcomplex}, we find that $S=T$, and thus $\CC,T \Vdash \phi$.
\item (NE): Follows because simplices are not empty.
\item (Union):  Consider a world $S\in \BC$ and assume that $~{\CC, S \Vdash_\sigma \lalive(G)}$ and $~{\CC, S \Vdash_\sigma \lalive(H)}$. By Corollary \ref{cor:alive} it holds that $S\sim_G S$ and $~{S\sim_H S}$. By Lemma \ref{lem:std_knowledge} we have 
\[
\sim_{G\cup H} = \bigcap_{B \in G\cup H} \sim_{\{B\}} = \left( \bigcap_{B\in G}\sim_{\{B\}} \right) \cap \left( \bigcap_{B\in H}\sim_{\{B\}} \right),
\]
and thus $S \sim_{G\cup H} S$ and $~{\CC,S\Vdash_\sigma \lalive(G\cup H)}$.
\item (Clo): Consider a world $S\in \BC$ and assume that $~{\CC, S \Vdash_\sigma \lalive(G)}$, and let $A,B\in G$. By Lemma \ref{l:extendG:1} we find $S \sim_{\{A\}} S$ and $S\sim_{\{B\}} S$, i.e., there exist $i,j\in \mathbb{N}$ such that $(A,i) \in S$ and $(B,j)\in S$. Furthermore, let $(C,k) = \max(S)$. By $\siii$, we find $A \subseteq C$ as well as $B\subseteq C$, and thus $A \cup B \subseteq C$. Since $S$ is downwards closed by $\sii$, there exists $k\in \mathbb{N}$ such that $(\{A\cup B\},k)\in S$. Hence, $S \sim_{\{A \cup B\}}S$ and $\CC,S \Vdash_\sigma \lalive(\{A \cup B\})$ by Lemma \ref{lem:alive}.
\item (Mono): Follows from Lemma \ref{l:anti:_1}.
\item (Equiv): Follows from Corollary \ref{cor:equiv}.
\end{enumerate}
Lastly, we show ($[G]$-Nec). Let $A \in \mathcal{L}$ and assume that $A$ is $\sigma$-valid, i.e., for any simplicial model $\CC = (\BC,L)$ and $S\in \BC$, we have $\CC, S \Vdash_\sigma A$. Let $S\in \BC$ be arbitrary. By assumption, it holds that for all $T$ with $S \sim_G T$, we have $~{\CC,T \Vdash_\sigma A}$. Thus, $~{\CC,S \Vdash_\sigma [G]A}$ by the definition of truth. Since $S$ was arbitrary, we have $~{\CC \Vdash_\sigma [G]A }$. Lastly, since $\CC$ was arbitrary, $[G]A$ is $\sigma$-valid, i.e., $~{\Vdash_\sigma [G]A}$.
\end{proof}

\section{Examples}\label{sec:examples}
This section illustrates the application of our logic to distributed systems with the help of two examples. We interpret synergy as having access to shared primitives in both examples. In Example \ref{example:consensus number}, the shared primitive is a consensus object and in Example \ref{example:dining cryptographers} it is a shared coin. Given three agents, Example \ref{example:consensus number} captures the idea that for some applications, the agent pattern must include the area of the triangle and not just its edges. Thus, Example \ref{example:consensus number} shows the difference between mutual and pairwise synergy. A similar example can be found in the extended abstract of this work~\cite{DBLP:conf/clar/CachinLS23}. Example \ref{example:dining cryptographers} demonstrates that the patterns $\{\{a,b\},\{a,c\}\}$, $\{\{a,b\},\{b,c\}\}$, and $\{\{b,c\},\{a,c\}\}$ are weaker than the pattern $\{\{a,b\},\{a,c\}, \{b,c\}\}$. In other words, pairwise synergy between all agents is stronger than pairwise synergy among some agents.

Regarding notation, from now on we will omit the set parentheses for agent patterns whenever it is clear from the context and write for example $[abc,ab,ac]$ instead of $[\{\{a,b,c\},\{a,b\},\{a,c\}\}]$.

\begin{example}[Consensus number]\label{example:consensus number}
A $n$-consensus protocol is implemented by $n$ processes that communicate through shared objects. The processes each start with an input of either $1$ or $0$ and must decide a common value. A consensus protocol must ensure that 

\begin{enumerate}
\item Consistency: all processes must decide on the same value.
\item Wait-freedom: each process must decide after a finite number of steps.
\item Validity: the common decided value was proposed by some process.
\end{enumerate}

Herlihy \cite{DBLP:journals/toplas/Herlihy91} defines the consensus number of an object $O$ as the largest $n$ for which there is a consensus protocol for $n$ processes that only uses finitely many instances of $O$ and any number of atomic registers. It follows from the definition that  no combination of objects with a consensus number of $k<n$ can implement an object with a consensus number of $n$. 

We can represent the executions of a $n$-consensus protocol as a tree in which one process moves at a time. By validity and wait-freedom, the initial state of the protocol must be bivalent (i.e.,~it is possible that $0$ or $1$ are decided), and there must exist a state from which on all successor states are univalent. Hence, the process that moves first in such a state decides the outcome of the protocol. Such a state is called a critical state.

In order to show that an object has a consensus number strictly lower than $k$, we derive a contradiction by assuming that there is a valid implementation of a $k$-consensus protocol. Next, we maneuver the protocol into a critical state and show that the processes will not be able to determine which process moved first. Therefore, for some process $P$, there exist two indistinguishable executions in which $P$ decides differently. However, if the object has a consensus number of $k$, the processes will be able to tell who moved first.

Synergetic knowledge is able to describe the situation from the critical state onwards. We interpret an element $\{p_1,...,p_k\}$ of a synergy pattern $G$ as the processes $p_1$ up to $p_k$ having access to objects with a consensus number of $k$. For each process $p_i$, we define a propositional variable $\mathsf{move}_i$ that is true if $p_i$ moved first at the critical state. Furthermore, we define
\[
\varphi_i := \mathsf{move}_i \land \bigwedge_{ 1\leq j \leq n \text{ and } j \neq i}\neg \mathsf{move}_j,
\]
i.e., if $\varphi_i$ is true, then the $i$-th process moved first.
Let $\CC = (\BC,L)$ be a model, if $\CC \Vdash_\sigma [G]\varphi_1 \lor [G]\varphi_2 \lor \dots \lor [G]\varphi_n$ holds in the model, then it is always possible for the processes in $G$ to tell who moved first. Lastly, if $G$ has $n$ agents, we have for any $G'$ with less than $n$ agents
\[
\CC \not \Vdash_\sigma [G']\varphi_1 \lor [G']\varphi_2 \lor \dots \lor [G']\varphi_n,
\]
which means that the access to objects with a consensus number of $n$ is required. 

For three agents $a,b$ and $c$, the model $\CC = (\BC,L)$ is given by 

\[\BC = \left\{
\begin{Bmatrix} abc0 \\ ab0 \\ bc0 \\ ac0 \\ a0,b0,c0 \end{Bmatrix},
\begin{Bmatrix} abc1 \\ ab0 \\ bc0 \\ac0 \\ a0,b0,c0  \end{Bmatrix},
\begin{Bmatrix} abc2 \\ ab0 \\ bc0 \\ ac0 \\ a0,b0,c0  \end{Bmatrix}
\right\}
\]
with a valuation $L$ representing that someone moved first, i.e.,
\[
\CC, \langle abc0 \rangle \Vdash_\sigma \varphi_a \qquad \CC, \langle abc1 \rangle \Vdash_\sigma \varphi_b \qquad \CC, \langle abc2 \rangle \Vdash_\sigma \varphi_c.
\]
It is easy to check that $\langle abc0 \rangle \sim_{ab,ac,bc} \langle abc1 \rangle $ and hence, having access to an object with consensus number 2 is not enough in order to distinguish those worlds. However, 
\[
\CC \Vdash_\sigma [abc]\varphi_a \lor [abc]\varphi_b \lor [abc]\varphi_c
\] 
is true and shows that access to objects with consensus number 3 suffices.
\end{example}

\begin{figure}[ht]
\begin{center}
\begin{tikzpicture}[scale=0.65]

\node[circle, draw] at (3,0) (c) [label = below:$ $]{$c0$};
\node[circle, draw] at (0,5.196) (b)[label = above:$ $]{$b0$};
\node[circle, draw] at (-3,0) (a) [label = left:$ $]{$a0$};

\draw[-] (c) to node[midway, below, rotate = -60]{$bc0$} (b);
\draw[-] (c) to node[midway, below, rotate = 0]{$ac0$} (a);
\draw[-] (b) to node[midway, below, rotate = 60]{$ab0$} (a);

\draw[-] (a) to[out=315, in=225]  node[midway, below, rotate = 0]{$ac1$} (c);
\draw[-] (a) to[out=115, in=180]  node[midway, below, rotate = 60]{$ab1$}(b);
\draw[-] (c) to[out=60, in=0]   node[midway, below, rotate = -60]{$bc1$}(b);

\end{tikzpicture}
\caption{Dining cryptographers model.}\label{fig:dining_cryptographers}
\end{center}
\end{figure}
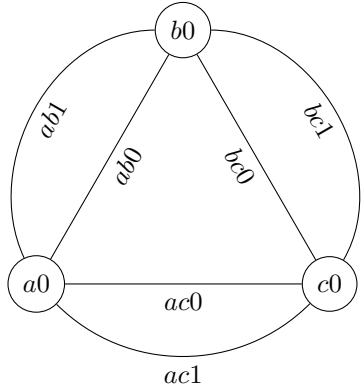

\begin{example}[Dining cryptographers]\label{example:dining cryptographers}
The dining cryptographers problem, proposed by Chaum \cite{DBLP:journals/joc/Chaum88}, illustrates how a shared-coin primitive can be used by three cryptographers (i.e.,~agents) to find out whether their employer or one of their peers paid for the dinner. However, if their employer did not pay, the payer wishes to remain anonymous. 

For the sake of space, we do not give a full formalisation of the dining cryptographers problem.~Instead, we solely focus on the ability of agreeing on a coin-flip and the resulting knowledge.~In what follows, we will provide a model in which the agents $a,b$ and $c$ can determine whether or not their employer paid if and only if they have pairwise access to a shared coin.

Let the propositional variable $p$ denote that their employer paid.
We interpret an agent pattern  $G = \{\{a,b\}\}$ as $a$ and $b$, having access to a shared coin. Our model $\CC = (\BC,L)$, depicted in Figure \ref{fig:dining_cryptographers}, is given by the complex 

\begin{align*}
\BC = \begin{Bmatrix} &\begin{Bmatrix} abc0 \\ ab0 \\ bc0 \\ ac0 \\ a0,b0,c0 \end{Bmatrix},
\begin{Bmatrix} abc1 \\ ab1 \\ bc0 \\ ac0 \\ a0,b0,c0  \end{Bmatrix},
\begin{Bmatrix} abc2 \\ ab0 \\ bc1 \\ ac0 \\ a0,b0,c0  \end{Bmatrix},
\begin{Bmatrix} abc3 \\ ab0 \\ bc0 \\ ac1 \\ a0,b0,c0 \end{Bmatrix},\\
\\
&\begin{Bmatrix} abc4 \\ ab1 \\ bc1 \\ ac0 \\ a0,b0,c0  \end{Bmatrix},
\begin{Bmatrix} abc5 \\ ab1 \\ bc0 \\ ac1 \\ a0,b0,c0  \end{Bmatrix},
\begin{Bmatrix} abc6 \\ ab0 \\ bc1 \\ ac1 \\ a0,b0,c0 \end{Bmatrix},
\begin{Bmatrix} abc7 \\ ab1 \\ bc1 \\ ac1 \\ a0,b0,c0  \end{Bmatrix} \end{Bmatrix}
\end{align*}
and the valuation $L$ is chosen such that
\begin{align*}
&p \in L(\langle abc0 \rangle),
&p \not\in L(\langle abc1 \rangle), & &
&p \not\in L(\langle abc2 \rangle),
&p \not\in L(\langle abc3 \rangle),\\
&p \in L(\langle abc4 \rangle),
&p \in L(\langle abc5 \rangle),& &
&p \in L(\langle abc6 \rangle),
&p \not\in L(\langle abc7 \rangle).
\end{align*}
Consider the agent pattern $G = \{\{a,b\},\{a,c\}, \{b,c\}\}$, then 
\begin{equation}
\CC  \Vdash_\sigma [G]p \lor [G]\neg p,
\end{equation}
i.e.,~in any world, if all agents have pairwise access to shared coins, they can know the value of $p$. Furthermore, for each $H \subsetneq G$ and each $S\in \BC$ 
\begin{equation}\label{eq:dining_cryptographers}
\CC,S \not \Vdash_\sigma [H]p \lor [H]\neg p.
\end{equation}
Notice that (\ref{eq:dining_cryptographers}) states that there is no world, where an agent pattern $H$ can know whether $p$ or $\neg p$, and hence, it is stronger than $\CC \not \Vdash_\sigma [H]p \lor [H]\neg p$.
\end{example}

\section{Completeness of $\syn$}\label{sec: completeness of simplicial models}
In order to show completeness of $\syn$ with respect to simplicial models, we take a detour via Kripke-models and focus on $\syn$ without the axiom (P), which is hereafter denoted by $\synminus$. We start by introducing $\kappa$-models which represent common multi-agent models  in which \eqref{eq:D} is not necessarily satisfied. Thus, we can employ the standard techniques to show that $\synminus$ is sound and complete with respect to $\kappa$-models. Next, we present $\delta$-models, which are $\kappa$-models that satisfy \eqref{eq:D}. The proof that $\synminus$ is complete with respect to $\delta$-models is more challenging and requires the so-called unravelling method. Once we showed soundness and completeness of $\synminus$ with respect to those models, we show that $\syn$ is sound and complete with respect to their proper\footnote{Properness will be defined later.} versions. Lastly, we proceed with proving completeness with respect to simplicial models by relating them to proper $\delta$-models.

\subsection{$\kappa$-models}\label{subsec:kmodel}
In this section, we introduce $\kappa$-models and show that $\synminus$ is sound and complete with respect to $\kappa$-models. Definition \ref{def:pre-model} introduces pre-models, which are our most simple models. All subsequent models are pre-models.

\begin{definition}[Pre-model]\label{def:pre-model}
A pre-model $\CM = (W,\sim,V)$ is a tuple where 
\begin{enumerate}
\item $W$ is a set of possible worlds;
\item $\sim$ is a function that assigns to each agent pattern $G$ a symmetric and transitive relation $\sim_G$ on $W$;
\item $V:W \rightarrow \pow(\P)$ is a valuation.
\end{enumerate}
\end{definition}

\begin{remark}[Notation]
In order to avoid confusion due to the overloading of $\sim_G$, we use lower-case letters for worlds of a pre-model, i.e., $w \sim_G v$, and capital letters for simplices, i.e.,~$S\sim_G T$.
\end{remark}

\begin{definition}[Truth]\label{def:truth}
Let $\CM = (W,\sim,V)$ be a pre-model, $w \in W$, and$~{\phi \in \CL}$.
We define $\CM,w \Vdash \phi$ inductively by
\begin{align*}
  & \CM,w \Vdash  p		&\text{if{f}}\qquad & p \in V(w)\\
    &\CM,w \Vdash  \neg \phi 			&\text{if{f}}\qquad 	& \CM,w  \not \Vdash \phi \\
  & \CM,w \Vdash \phi \land \psi  	&\text{if{f}}\qquad 	&\CM,w \Vdash \phi \text{ and } \CM,w \Vdash \psi\\
      &\CM,w \Vdash  [G] \phi \qquad &\text{if{f}} \qquad &w \sim_G v \text{ implies } \CM,v \Vdash \phi \quad\text{for all $v \in W$}.      
\end{align*}
\end{definition}
We write $\CM \Vdash \phi$, if  $\CM,w \Vdash \phi$ for all $w \in W$.

\begin{definition}\label{def:alive}
Let  $\CM = (W,\sim, V)$ be a pre-model, we define
\[
\alive(G)_\CM = \{ w \mid {w\sim_G w} \}.
\] 
\end{definition}

If the pre-model is clear from the context, we omit the subscript $\CM$ and write $\alive(G)$ instead of $\alive(G)_\CM$.

\begin{lemma}\label{lem:sound:alive}
Let $\CM = (W,\sim,V)$ be a pre-model. It holds that
\[
\CM,w \Vdash \lalive(G) \quad\text{iff}\quad w\in\alive(G)_\CM.
\]
\end{lemma}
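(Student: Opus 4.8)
The plan is to unfold the definition of $\lalive(G)$ and reduce the claim to a routine calculation using the semantics of the modality $[G]$. Recall that $\lalive(G)$ is defined as $\neg[G]\bot$, where $\bot := p \land \neg p$ for the fixed atom $p$. So I want to show that $\CM, w \Vdash \neg[G]\bot$ holds if and only if $w \in \alive(G)_\CM$, i.e.,~if and only if $w \sim_G w$.

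First I would observe that $\CM, w \Vdash \neg[G]\bot$ is by Definition~\ref{def:truth} equivalent to $\CM, w \not\Vdash [G]\bot$, which in turn is equivalent to the existence of some $v \in W$ with $w \sim_G v$ and $\CM, v \not\Vdash \bot$. Since $\bot$ is unsatisfiable in every pre-model (no world can satisfy $p \land \neg p$), the condition $\CM, v \not\Vdash \bot$ is vacuously true for all $v$. Hence $\CM, w \Vdash \lalive(G)$ holds if and only if there exists some $v \in W$ with $w \sim_G v$.

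It then remains to argue that the existence of \emph{some} $v$ with $w \sim_G v$ is equivalent to $w \sim_G w$ itself. The direction from $w \sim_G w$ to the existence of such a $v$ is immediate (take $v = w$). For the converse, suppose $w \sim_G v$ for some $v$. By symmetry of $\sim_G$ (which holds in every pre-model by Definition~\ref{def:pre-model}), we get $v \sim_G w$, and then by transitivity of $\sim_G$ applied to $w \sim_G v$ and $v \sim_G w$ we obtain $w \sim_G w$, i.e.,~$w \in \alive(G)_\CM$. This completes the proof.

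There is no real obstacle here: the statement is essentially a bookkeeping lemma that records how the syntactic abbreviation $\lalive(G)$ interacts with the (merely symmetric and transitive, not necessarily reflexive) relation $\sim_G$. The only point worth care is the use of symmetry together with transitivity to pass from $w \sim_G v$ back to $w \sim_G w$ — this is exactly why $\lalive(G)$ is the right syntactic proxy for ``the relation $\sim_G$ is reflexive at $w$'' even though $\sim_G$ is not assumed reflexive globally.
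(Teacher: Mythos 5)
Your proof is correct and follows essentially the same route as the paper: unfold $\lalive(G)$ as $\neg[G]\bot$, note that $\bot$ is never satisfied so $\CM,w\not\Vdash[G]\bot$ amounts to the existence of some $v$ with $w\sim_G v$, and then recover $w\sim_G w$ via symmetry and transitivity of $\sim_G$ (the converse direction taking $v=w$). No gaps; the emphasis on symmetry plus transitivity substituting for reflexivity is exactly the point the paper's proof makes as well.
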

\begin{proof}

We first show that
\[
\CM,w \Vdash \lalive(G) \quad\text{implies}\quad w\in \alive(G)_\CM.
\]
Assume $\CM,w \Vdash \lalive(G)$, i.e.,~$\CM,w \not \Vdash [G]\bot $. By the definition of truth, there must exist $v\in W$ with $w\sim_Gv$ and $\CM,v \Vdash \top$. By symmetry,  we have $v\sim_G w$ and by transitivity we have $w\sim_Gw$. Hence, $w\in \alive(G)_\CM$. We now show~that

\[
w\in\alive(G)_\CM \quad\text{implies}\quad \CM,w \Vdash \lalive(G).
\]
Assume $w\in\alive(G)_\CM$. By definition of $\alive(G)_\CM$, we have that $w\sim_Gw$. Therefore, $\CM,w \not \Vdash [G]\bot$ by the definition of truth.
\end{proof}

\begin{definition}[$\kappa$-model]\label{def:k model}
Let  $\CM = (W,\sim, V)$ be a pre-model. $\CM$ is called a $\kappa$-model if and only if, for all agent patterns $G$ and $H$:
\begin{enumerate}[$\mathsf{K}1$:]
\item $\alive(G)_\CM  \cap \alive(H)_\CM  \subseteq \alive(G\cup H)_\CM $;
\item $\alive(G)_\CM  \subseteq \alive(\{A\cup B\})_\CM $ for $A,B\in G$;
\item $\sim_H \subseteq \sim_G$, if $G\subseteq H$;
\item $ \sim_G \subseteq \sim_{G\cup \{B\}} $ if there exists $A\in G$ with $B\subseteq A$;
\itemne for all $w\in W$, there exists an agent pattern $G$ such that $w\sim_G w$.
\end{enumerate}
\end{definition}
A formula $\phi$ is $\kappa$-valid, denoted by $\Vdash_\kappa \phi$, if $\CM \Vdash \phi$ for all $\kappa$-models $\CM$.
\begin{lemma}\label{lem:k4}
Let $\CM = (W,\sim,V)$ be a $\kappa$-model. Furthermore, let $G$ be an agent pattern such that there exists $A\in G$ with $B\subseteq A$. It holds that  $\sim_G \subseteq \sim_{\{B\}}$. 
\end{lemma}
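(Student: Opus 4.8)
The plan is to chain the two properties \kfour (closure under adding subpatterns, $\sim_G \subseteq \sim_{G\cup\{B\}}$) and \kthree (anti-monotonicity, $\sim_H \subseteq \sim_G$ when $G \subseteq H$) with the hypothesis that some $A \in G$ satisfies $B \subseteq A$. Concretely, by \kfour applied with this $A$ and $B$, we get $\sim_G \subseteq \sim_{G \cup \{B\}}$. Then, since $\{B\} \subseteq G \cup \{B\}$, \kthree gives $\sim_{G \cup \{B\}} \subseteq \sim_{\{B\}}$. Composing these two inclusions yields $\sim_G \subseteq \sim_{\{B\}}$, as desired.

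First I would state the hypothesis cleanly: fix $A \in G$ with $B \subseteq A$ (this is exactly the side condition under which \kfour is applicable). Then I would invoke \kfour to obtain $\sim_G \subseteq \sim_{G \cup \{B\}}$. Next, observing that $G \subseteq G \cup \{B\}$ trivially, I would invoke \kthree to obtain $\sim_{G \cup \{B\}} \subseteq \sim_{\{B\}}$. Finally, transitivity of set inclusion closes the argument. The whole proof is two lines and requires no case analysis.

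There is essentially no obstacle here — this is a routine bookkeeping lemma that packages the interaction between the two structural axioms of $\kappa$-models into a single convenient inclusion. The only thing to be careful about is matching the side conditions exactly: \kfour requires the existence of $A \in G$ with $B \subseteq A$, which is precisely what is assumed, so the application is immediate. If anything, the "hard part" is purely presentational: making sure the reader sees that $\{B\} \subseteq G \cup \{B\}$ is what licenses the use of \kthree in the second step. I would write the proof as a short displayed chain of inclusions, something like
\[
\sim_G \ \subseteq\ \sim_{G \cup \{B\}} \ \subseteq\ \sim_{\{B\}},
\]
with the first inclusion justified by \kfour and the second by \kthree, and then conclude.
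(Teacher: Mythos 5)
Your proof is correct and is exactly the paper's argument: apply $\kfour$ to get $\sim_G \subseteq \sim_{G\cup\{B\}}$, then $\kthree$ (with $\{B\}\subseteq G\cup\{B\}$) to get $\sim_{G\cup\{B\}} \subseteq \sim_{\{B\}}$, and compose. No differences worth noting.
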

\begin{proof}
By $\kthree$ and $\kfour$ we have $\sim_G \subseteq \sim_{G\cup \{B\}} \subseteq \sim_{\{B\}}$.\qedhere
\end{proof}

\begin{remark}
Let $\CM = (W,\sim,V)$ be a $\kappa$-model and let $G$ be an agent pattern as in Lemma \ref{lem:k4}. By $\kthree$, we have $\sim_G = \sim_{G \cup \{B\}}$.
\end{remark}

The property $\kone$ ensures that for each world, there exists a maximal alive agent pattern and $\ktwo$ forces $\lalive(G)$ to be downwards closed. $\kthree$ guarantees that an agent pattern $G$ cannot know more than its supersets, and $\kfour$ states that adding subpatterns to $G$ does not strengthen its knowledge. Moreover, $\mathsf{NE}$ ensures that there are no empty-worlds, i.e., worlds in which no agent is alive. However as shown  below in Example \ref{example:no std dk}, $\kappa$-models do not necessarily satisfy \eqref{eq:D}, i.e., 
\begin{equation*}
 \bigcap\limits_{B\in G}\sim_{\{B\}} \ = \ \sim_G. 
\end{equation*}

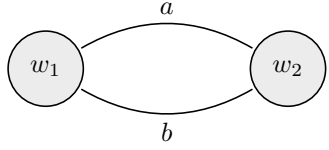
\begin{figure}[ht]
\begin{center}
\begin{tikzpicture}[modal]
  \node[world] (w) {$w_1$};
  \node[world] (v) [right=2.2cm of w] {$w_2$};
  \path[-] (w) edge[bend left] node[above] {$a$} (v);
   \path[-] (w) edge[bend right] node[below] {$b$} (v);
\end{tikzpicture}
\end{center}
\caption{The $\kappa$-model used in Example \ref{example:no std dk} does not satisfy standard group knowledge. Reflexive arrows are implicit.}\label{fig:kappa_std}
\end{figure}

\begin{example}\label{example:no std dk}
Let $\ag = \{a,b\}$ and consider the $\kappa$-model $\CM$ shown in Figure~\ref{fig:kappa_std}, where reflexive arrows are implicit. The agents $a$ and $b$ cannot distinguish between the worlds $w_1$ and $w_2$, i.e., $(w_1,w_2)\in \sim_{\{a\}}$ and $(w_1,w_2)\in \sim_{\{b\}}$. However, together they are able to tell the two worlds apart. That is, $(w_1,w_2) \not\in \sim_G$ for any agent pattern $G$ that contains the set $\{\{a\},\{b\}\}$. Hence, $\CM$ does not satisfy standard group knowledge.
\end{example}
Lastly, notice, that $\kthree$ and $\kfour$ affect the formula $\lalive(G)$ as well. For example, the following formula is $\kappa$-valid
\[
\lalive(G) \rightarrow \lalive(\{B\}) \quad \text{ if there exists $A\in G$ and $B \subseteq A$.}
\]
Indeed, let $\CM = (W,\sim,V)$ be a $\kappa$-model and assume $\CM,w \Vdash \lalive(G)$. By definition, there exists $v\in W$ with $w\sim_G v$ and $\CM,v \Vdash \top$. By symmetry and transitivity of $\sim_G$ we obtain $w\sim_G w$. By Lemma $\ref{lem:k4}$ we have that $(w,w)\in \sim_{\{B\}}$ and thus $\CM,w \Vdash \lalive(\{B\})$. 

\begin{theorem}[Soundness]\label{Thm:sound_kappa}
$\synminus$ is sound with respect to $\kappa$-models.
\end{theorem}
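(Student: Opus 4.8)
The plan is to fix an arbitrary $\kappa$-model $\CM = (W,\sim,V)$ and a world $w \in W$, and to check that each axiom schema of $\synminus$ is true at $w$ and that the two inference rules preserve $\kappa$-validity. The propositional tautologies and axiom (K) are handled exactly as in the classical case, directly from Definition~\ref{def:truth}. For (MP), validity is preserved by the usual argument; for ($[G]$-Nec), if $A$ is $\kappa$-valid then in particular $\CM,v \Vdash A$ for every $v$ with $w \sim_G v$, so $\CM,w \Vdash [G]A$, and since $\CM,w$ were arbitrary, $[G]A$ is $\kappa$-valid.

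For the modal axioms (B) and (4), the point worth stressing is that they need only symmetry and transitivity of $\sim_G$, both built into Definition~\ref{def:pre-model}, and in particular no reflexivity. For (B): if $\CM,w \Vdash \phi$ and $w \sim_G v$, then $v \sim_G w$ by symmetry, so $w$ itself witnesses $\CM,v \not\Vdash [G]\neg\phi$; hence $\CM,w \Vdash [G]\neg[G]\neg\phi$. For (4): if $\CM,w \Vdash [G]\phi$, $w \sim_G v$, and $v \sim_G u$, then $w \sim_G u$ by transitivity, so $\CM,u \Vdash \phi$, giving $\CM,w \Vdash [G][G]\phi$.

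The remaining axioms are the semantic counterparts of the defining conditions of a $\kappa$-model, translated through Lemma~\ref{lem:sound:alive}, which identifies $\CM,w \Vdash \lalive(G)$ with $w \in \alive(G)_\CM$, i.e.\ with $w \sim_G w$. Axiom (T) then follows: $\lalive(G)$ at $w$ yields $w \sim_G w$, so $[G]\phi$ at $w$ forces $\phi$ at $w$. Axiom (NE) is immediate from condition $\mathsf{NE}$ together with Lemma~\ref{lem:sound:alive}. Axiom (Mono) is the semantic form of $\kthree$, and (Equiv) is the semantic form of $\kfour$, using also $\kthree$ to get $\sim_G = \sim_{G \cup \{B\}}$ as in the remark after Lemma~\ref{lem:k4}. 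Finally (Union) translates $\kone$ and (Clo) translates $\ktwo$, in each case via Lemma~\ref{lem:sound:alive}.

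I do not expect a real obstacle here: once the $\kappa$-model conditions have been set up to mirror the axioms, this is a routine soundness check. The only step deserving genuine care is confirming that (B) and (4) survive in the absence of global reflexivity, which they do because each only ever ``bounces back'' along an edge of $\sim_G$ that is already present.
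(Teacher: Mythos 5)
Your proposal is correct and follows essentially the same route as the paper: each non-classical axiom is verified via Lemma~\ref{lem:sound:alive} against the corresponding $\kappa$-model condition ($\kone$--$\kfour$, $\mathsf{NE}$), with ($[G]$-Nec) handled by the standard argument. The only difference is cosmetic: you also spell out (B), (4), (K) and the tautologies, which the paper omits as routine (and your observation that (B) and (4) need only symmetry and transitivity, not reflexivity, is accurate); your passing appeal to $\kthree$ in the (Equiv) case is unnecessary, since $\kfour$ alone gives the needed inclusion $\sim_G \subseteq \sim_{G\cup\{B\}}$, but this is harmless.
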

\begin{proof}
We only show (T), (NE), (Union), (Clo), (Mono), (Equiv) and ($[G]$-Nec).

\begin{enumerate}
\item (T): Consider a world $w\in W$ and assume that $\CM,w\Vdash \lalive(G)$ and $\CM,w\Vdash [G]\phi$. By Lemma \ref{lem:sound:alive} we have $w\in \alive(G)$, i.e., $w\sim_Gw$. By the definition of truth we have $\CM,w \Vdash \phi$.

\item (NE): Let $w\in W$ be arbitrary. By $\mathsf{NE}$, there exists an agent pattern $G$ such that $w\sim_G w$. By Lemma \ref{lem:sound:alive}, $\CM, w \Vdash \lalive(G)$ and thus 
\[
\CM, w \Vdash \bigvee_{G \subseteq \pow(\ag)\setminus \{\emptyset\}}\lalive(G).
\]

\item (Union): Assume $\CM,w \Vdash \lalive(G)$ and $\CM,w \Vdash \lalive(H)$. By Lemma~\ref{lem:sound:alive}, $w\in \alive(G)\cap\alive(H)$. By $\kone$, $w\in \alive(G\cup H)$ and by Lemma~\ref{lem:sound:alive} $\CM,w \Vdash \lalive(G \cup H)$.

\item (Clo): Assume $\CM,w \Vdash \lalive(G)$ and let $A,B \in G$. By Lemma \ref{lem:sound:alive}, we have $w\in \alive(G)$. By $\ktwo$, $w\in \alive(\{A\cup B\})$ and by Lemma \ref{lem:sound:alive} we obtain $\CM,w \Vdash \lalive(\{A\cup B\})$.

\item (Mono): Assume that $G\subseteq H$ for arbitrary $G,H$. Let $w\in W$ be arbitrary such that $\CM,w \Vdash [G]\phi$. By the definition of truth, $w\sim_G v$ implies $\CM,v \Vdash \phi$ for all $v\in W$. By $\kthree$ we have that $\sim_H \subseteq \sim_G$, i.e., $w\sim_Hv$ implies $w\sim_G v$. Thus, $\CM,v \Vdash \phi$ whenever $w\sim_H v$. Therefore, it follows that $\CM,w \Vdash [H]\phi$.

\item (Equiv): Assume that for an arbitrary $G$, there exists $A\in G$ with $B\subseteq A$. Further, let $w\in W$ be arbitrary such that $\CM,w \Vdash [G \cup \{B\}]\phi$. Therefore, for all $v\in W$, $w\sim_{G \cup \{B\}} v$ implies $\CM,v \Vdash \phi$ by assumption. By $\kfour$ we have that $\sim_G \subseteq \sim_{G\cup \{B\}}$ and thus $w\sim_G v$ implies $w\sim_{G\cup \{B\}}v$. Hence, $\CM,v \Vdash \phi$ whenever $w\sim_G v$ and thus $\CM,w\Vdash [G]\phi$.
\end{enumerate}

Lastly, we show ($G$-Nec). Let $A \in \mathcal{L}$ and assume $\Vdash_\kappa A$. We need to show that $[G]A$ is $\kappa$-valid. Let $\CM = (W,\sim,V)$ be an arbitrary $\kappa$-model. By assumption, $\CM ,w \Vdash A$ for all $w\in W$. Thus, for any $v\in W$ with $w \sim_G v$, it holds that $\CM,v \Vdash A$. By the definition of truth, $\CM, w \Vdash [G]A$, and since $w\in W$ was arbitrary, $\CM \Vdash [G]A$. Moreover, due to $\CM$ being arbitrary, $[G]A$ is $\kappa$-valid.
\end{proof}

In what follows, we set up the machinery to prove completeness of $\CL$ with respect to $\kappa$-models.

\begin{definition}
Let $G$ be an agent pattern and let $\Gamma \subseteq \CL$, we define
\[
\Gamma \setminus [G] := \{\phi \mid [G]\phi\in \Gamma \}.
\]
\end{definition}

\begin{definition}[Consistent set]
A set $\Gamma \subseteq \CL$ is consistent if and only if $\Gamma \not \vdash \bot$. $\Gamma$ is maximal consistent if none of its proper supersets is consistent.
\end{definition}
 
\begin{definition}[Canonical Model]\label{def:cmodel}
The canonical model $\CM^c = (W^c,\sim^c,V^c)$ for $\CL$ is defined as 
\begin{enumerate}
\item $W^c := \{\Gamma \subseteq \CL \mid \Gamma \text{ is a maximal consistent set}\}$ is the set of possible worlds;
\item $\sim^c$ is a function that assigns to each agent pattern $G$ a relation \[
\sim^c_G := \{(\Gamma, \Delta) \in W^c\times W^c \mid \Gamma \setminus [G] \subseteq \Delta\};
\]
\item $V^c : \P \rightarrow \pow(W^c)$ is a function defined by
\[
V^c(p) := \{\Gamma \in W^c \mid p \in \Gamma\}.
\]
\end{enumerate}
\end{definition}

\begin{lemma}\label{lem:alive}
Let $\CM^c = (W^c,\sim^c,V^c)$ be the canonical model, $G$ be an agent pattern, and $\Gamma \in W^c$, then
\[
\Gamma \in \alive(G)_{\CM^c} \quad \text{iff}\quad \lalive(G)\in \Gamma.
\]
\end{lemma}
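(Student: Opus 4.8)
The plan is to prove the biconditional by the standard canonical-model argument, handling each direction separately and using the definitions of $\sim^c_G$ and $\alive(G)_{\CM^c}$ together with Lemma~\ref{lem:sound:alive}.

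\textbf{From $\lalive(G) \in \Gamma$ to $\Gamma \in \alive(G)_{\CM^c}$.} Recall $\alive(G)_{\CM^c} = \{\Gamma \mid \Gamma \sim^c_G \Gamma\}$, so it suffices to show $\Gamma \setminus [G] \subseteq \Gamma$. Assume for contradiction that there is some $\psi$ with $[G]\psi \in \Gamma$ but $\psi \notin \Gamma$; since $\Gamma$ is maximal consistent, $\neg\psi \in \Gamma$. The key step is the standard fact that $\Gamma \setminus [G]$ is consistent whenever $\lalive(G) \in \Gamma$: if $\Gamma \setminus [G] \vdash \bot$, then by compactness $\vdash (\phi_1 \land \cdots \land \phi_n) \rightarrow \bot$ for finitely many $\phi_i$ with $[G]\phi_i \in \Gamma$, whence by $[G]$-Nec, axiom \eqref{eq:taut:1} (the (K) axiom) and propositional reasoning $\vdash [G]\phi_1 \land \cdots \land [G]\phi_n \rightarrow [G]\bot$, so $[G]\bot \in \Gamma$, i.e.\ $\neg\lalive(G) \in \Gamma$, contradicting $\lalive(G) \in \Gamma$ and consistency of $\Gamma$. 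So $\Gamma \setminus [G]$ is consistent; extend $\Gamma \setminus [G] \cup \{\neg\psi\}$ to a maximal consistent set --- wait, this shows $\sim^c_G$ has a successor, not reflexivity. Instead, argue directly: I claim $\Gamma \setminus [G] \subseteq \Gamma$. Suppose $[G]\psi \in \Gamma$. By axiom (T) in the form $\lalive(G) \rightarrow ([G]\psi \rightarrow \psi)$ together with $\lalive(G) \in \Gamma$ and $[G]\psi \in \Gamma$, modus ponens gives $\psi \in \Gamma$. Hence $\Gamma \sim^c_G \Gamma$, i.e.\ $\Gamma \in \alive(G)_{\CM^c}$.

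\textbf{From $\Gamma \in \alive(G)_{\CM^c}$ to $\lalive(G) \in \Gamma$.} Now $\Gamma \sim^c_G \Gamma$, i.e.\ $\Gamma \setminus [G] \subseteq \Gamma$. Suppose towards a contradiction that $\lalive(G) \notin \Gamma$; then $[G]\bot \in \Gamma$, so $\bot \in \Gamma \setminus [G] \subseteq \Gamma$, contradicting the consistency of $\Gamma$. Hence $\lalive(G) \in \Gamma$.

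\textbf{Main obstacle.} The only subtle point is making sure the right axioms are invoked: the forward direction is exactly axiom (T) applied inside the maximal consistent set, and the backward direction is immediate from consistency. Both halves are routine once one unpacks that $\alive(G)_{\CM^c}$ is the reflexive fragment of $\sim^c_G$ and that $\sim^c_G$ is given syntactically by $\Gamma \setminus [G] \subseteq \Delta$. I expect no genuine difficulty here; the lemma is a bookkeeping step that will be used later to transfer the structural conditions $\kone$--$\kfour$ and $\mathsf{NE}$ from the proof system to the canonical model.
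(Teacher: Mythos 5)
Your proof is correct and follows essentially the same route as the paper: the direction from $\lalive(G)\in\Gamma$ to $\Gamma\sim^c_G\Gamma$ is exactly the appeal to axiom (T) inside the maximal consistent set, and the converse is the same consistency argument via $[G]\bot\in\Gamma$. The abandoned Lindenbaum-style detour in your first paragraph is unnecessary, but since you explicitly discard it and give the correct direct argument, nothing is lost.
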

\begin{proof}
We first show that
\[
\Gamma \in \alive(G)_{\CM^c} \quad \text{implies}\quad \lalive(G)\in \Gamma.
\]
Assume $\Gamma \in \alive(G)_{\CM^c}$. By Definition \ref{def:alive}, $\Gamma \sim^c_G \Gamma$. Towards a contradiction, assume that $\lalive(G) \not\in \Gamma$, i.e.,~$[G]\bot \in \Gamma$ by the maximal consistency of $\Gamma$. Since $\Gamma \sim^c_G \Gamma$,~i.e., $\Gamma \setminus [G]\subseteq \Gamma$, this yields $\bot \in \Gamma$, which contradicts the consistency of $\Gamma$. Thus $\lalive(G) \in \Gamma$.
We now show that
\[
\lalive(G)\in \Gamma \quad \text{implies}\quad \Gamma \in \alive(G)_{\CM^c}.
\]
Assume $\lalive(G)\in \Gamma$. We need to show $\Gamma \sim^c_G\Gamma$. Let $\phi \in \Gamma \setminus [G]$,~i.e.,~$[G]\phi \in \Gamma$. Since 
\[
\lalive(G) \to ([G]\phi \to \phi)
\]
is an axiom of $\synminus$ and by the maximal consistency of $\Gamma$, it follows that$~{\phi \in \Gamma}$, i.e., $\Gamma\sim^c_G\Gamma$. By Definition \ref{def:alive}, we obtain $\Gamma\in \alive(G)_{\CM^c}$.
\end{proof}

\begin{lemma}
$\CM^c$ is a $\kappa$-model.
\end{lemma}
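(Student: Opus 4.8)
The plan is to check, in turn, that $\CM^c$ meets the conditions in Definition~\ref{def:pre-model} and then those in Definition~\ref{def:k model}. For the pre-model part only the requirement that each $\sim^c_G$ be symmetric and transitive needs an argument ($W^c$ is plainly a set and $V^c$ a valuation); for the $\kappa$-model part we must verify $\kone$, $\ktwo$, $\kthree$, $\kfour$, and $\mathsf{NE}$. The recurring tool is Lemma~\ref{lem:alive}, which translates $\Gamma \in \alive(G)_{\CM^c}$ into $\lalive(G) \in \Gamma$, together with the usual closure properties of maximal consistent sets (they contain every theorem, are closed under modus ponens, and contain exactly one of $\psi$, $\neg\psi$).

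Transitivity and symmetry are the textbook canonical-model arguments specialised to our operators. For transitivity, assume $\Gamma \setminus [G] \subseteq \Delta$ and $\Delta \setminus [G] \subseteq \Theta$; if $[G]\phi \in \Gamma$, then axiom (4) yields $[G][G]\phi \in \Gamma$, hence $[G]\phi \in \Delta$, hence $\phi \in \Theta$, so $\Gamma \setminus [G] \subseteq \Theta$. For symmetry, assume $\Gamma \setminus [G] \subseteq \Delta$ and $[G]\psi \in \Delta$; if $\psi \notin \Gamma$ then $\neg\psi \in \Gamma$, and axiom (B) (applied to $\neg\psi$, using that $[G]$ is normal, so that $[G]\neg\neg\psi$ and $[G]\psi$ are interderivable) puts $[G]\neg[G]\psi \in \Gamma$, whence $\neg[G]\psi \in \Delta$, contradicting $[G]\psi \in \Delta$; hence $\psi \in \Gamma$ and $\Delta \setminus [G] \subseteq \Gamma$.

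The five $\kappa$-conditions each reduce to one axiom. For $\kthree$, if $G \subseteq H$, $\Gamma \setminus [H] \subseteq \Delta$, and $[G]\phi \in \Gamma$, then (Mono) gives $[H]\phi \in \Gamma$ and hence $\phi \in \Delta$, so $\sim^c_H \subseteq \sim^c_G$. For $\kfour$, if $B \subseteq A \in G$, $\Gamma \setminus [G] \subseteq \Delta$, and $[G\cup\{B\}]\phi \in \Gamma$, then (Equiv) gives $[G]\phi \in \Gamma$ and hence $\phi \in \Delta$, so $\sim^c_G \subseteq \sim^c_{G\cup\{B\}}$. For $\kone$, if $\Gamma \in \alive(G)_{\CM^c}\cap\alive(H)_{\CM^c}$ then $\lalive(G),\lalive(H) \in \Gamma$ by Lemma~\ref{lem:alive}, so (Union) gives $\lalive(G\cup H) \in \Gamma$, and Lemma~\ref{lem:alive} again gives $\Gamma \in \alive(G\cup H)_{\CM^c}$. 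For $\ktwo$, from $\lalive(G) \in \Gamma$ and $A,B \in G$, axiom (Clo) gives $\lalive(\{A\cup B\}) \in \Gamma$, hence $\Gamma \in \alive(\{A\cup B\})_{\CM^c}$. For $\mathsf{NE}$, the disjunction in axiom (NE) is a theorem, hence belongs to every $\Gamma \in W^c$, so some $\lalive(G) \in \Gamma$, and Lemma~\ref{lem:alive} yields $\Gamma \sim^c_G \Gamma$.

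No step is a genuine obstacle: the proof is the standard canonical-model verification, and the only place demanding a little care is symmetry, where one must absorb the double negation inside axiom (B) using normality of $[G]$. Everything else is a one-axiom-per-condition check, routed through Lemma~\ref{lem:alive} wherever an $\alive$-set is involved.
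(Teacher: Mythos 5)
Your proof is correct and follows essentially the same route as the paper's: symmetry via (B), transitivity via (4), and one axiom per $\kappa$-condition ($\kone$ via (Union), $\ktwo$ via (Clo), $\kthree$ via (Mono), $\kfour$ via (Equiv), $\mathsf{NE}$ via (NE)), with Lemma~\ref{lem:alive} mediating every condition stated in terms of $\alive$. Your explicit treatment of the double negation in the (B) instance for the symmetry case is, if anything, slightly more careful than the paper's own wording of that step.
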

\begin{proof} We show that $\CM^c$ satisfies all properties of $\kappa$-models.
\begin{enumerate}
\item Symmetry: Assume $\Gamma \setminus [G] \subseteq \Delta$. We need to show $\Delta \setminus [G] \subseteq \Gamma$. Let $\phi \in \Delta \setminus [G]$, i.e.,~$[G]\phi \in \Delta$, and assume towards a contradiction that $\phi \not \in \Gamma$. Since $\Gamma$ is a maximal consistent set and (B) is an axiom of $\synminus$, we have $\neg \phi \in \Gamma$ as well as $[G]\neg[G]\neg\phi\in \Gamma$ by the maximal consistency of $\Gamma$. Therefore, $\neg[G]\neg\phi \in \Gamma\setminus [G]$ and thus $\neg[G]\neg\phi \in \Delta$. This is a contradiction because we assumed $[G]\phi \in \Delta$. Hence, we conclude $\phi \in \Gamma$ which shows that $\Delta \setminus [G] \subseteq \Gamma$.

\item Transitivity: Assume $\Gamma \sim^c_G \Delta$ and $ \Delta \sim^c_G \Phi$. We need to show $\Gamma \sim^c_G \Phi$. By assumption, we have $\Gamma \setminus [G] \subseteq \Delta$ and $\Delta \setminus [G] \subseteq \Phi$. Let $\phi \in \Gamma \setminus [G]$, i.e., $[G]\phi \in \Gamma$. Since $\Gamma$ is maximally consistent and because (4) is an axiom of $\synminus$, we have $[G][G]\phi \in \Gamma$, and thus $[G]\phi \in \Delta$. Further, since $\Delta \sim^c_G \Phi$, we have $\phi \in \Phi$ and hence $\Gamma \setminus [G] \subseteq \Phi$.

\item $\kone$: Assume $\Gamma \in \alive(G)$ and $\Gamma \in \alive(H)$. By Lemma \ref{lem:alive} it follows that $\lalive(G) \in \Gamma$ and $\lalive(H) \in \Gamma$. Since $\Gamma$ is a maximal consistent set and (Union) is an axiom of $\synminus$ it follows that $\lalive(G\cup H) \in \Gamma$. By Lemma~\ref{lem:alive} we have $\Gamma \in \alive(G\cup H)$.

\item $\ktwo$: Assume $A,B\in G$ and let $\Gamma \in \alive(G)$. By Lemma \ref{lem:alive} it follows that $\lalive(G) \in \Gamma$. Since $\Gamma$ is a maximal consistent set and (Clo) is an axiom of $\synminus$ it follows that $\lalive(\{ A \cup B\}) \in \Gamma$ by the maximal consistency of $\Gamma$.  By Lemma \ref{lem:alive} we have $\Gamma \in \alive(\{A \cup B\})$.

\item $\kthree$: Assume $(\Gamma,\Delta) \in \sim^c_H$, we need to show that $(\Gamma,\Delta) \in \sim^c_G$. Let $~{\phi \in \Gamma \setminus [G]}$, i.e., $[G]\phi \in \Gamma$. Since $\Gamma$ is maximally consistent and (Mono) is an axiom of $\synminus$ we have that $[H]\phi \in \Gamma$. Since we assumed $\Gamma \setminus [H] \subseteq \Delta$ we have$~{\phi \in \Delta}$.

\item $\kfour$: Assume $(\Gamma,\Delta) \in \sim^c_{G}$, we need to show that $(\Gamma,\Delta) \in \sim^c_{G\cup\{B\}}$. Let $\phi \in \Gamma \setminus [G \cup \{B\}]$, i.e., $[G \cup \{B\}]\phi \in \Gamma$. Since $\Gamma$ is maximally consistent and (Equiv) is an axiom of $\synminus$ we have that $[G]\phi \in \Gamma$. Since we assumed $\Gamma \setminus [G] \subseteq \Delta$ we have $\phi \in \Delta$.

\item $\mathsf{NE}$: Let $\Gamma \in W^c$ be arbitrary. Since $\Gamma$ is maximally consistent and (NE) is an axiom of $\synminus$, there exists an agent pattern $G$ such that $\lalive(G) \in \Gamma$. By Lemma \ref{lem:alive}, $\Gamma \sim^c_G \Gamma$.
\qedhere
\end{enumerate}
\end{proof}

The truth lemma is standard and completeness follows as usual.
\begin{lemma}[Truth lemma]
Let $\CM^c = (W^c,\sim^c,V^c)$ be the canonical model. For each world $\Gamma\in W^c$ and each formula $\phi\in \CL$ we have
\[
\CM^c, \Gamma \Vdash \phi\text{ iff } \phi \in \Gamma.
\]
\end{lemma}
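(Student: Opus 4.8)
The plan is to prove the truth lemma by induction on the structure of the formula $\phi$. The base case $\phi = p$ holds immediately by the definition of $V^c$: we have $\CM^c,\Gamma \Vdash p$ iff $\Gamma \in V^c(p)$ iff $p \in \Gamma$. The Boolean cases $\phi = \neg\psi$ and $\phi = \psi_1 \land \psi_2$ follow in the standard way from the induction hypothesis together with the fact that each $\Gamma \in W^c$ is a maximal consistent set (so $\neg\psi \in \Gamma$ iff $\psi \notin \Gamma$, and $\psi_1 \land \psi_2 \in \Gamma$ iff both conjuncts are in $\Gamma$, using that $\syn^{-}$ contains all propositional tautologies and is closed under MP).

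The interesting case is $\phi = [G]\psi$. For the direction $[G]\psi \in \Gamma \Rightarrow \CM^c,\Gamma \Vdash [G]\psi$: suppose $[G]\psi \in \Gamma$ and let $\Delta \in W^c$ with $\Gamma \sim^c_G \Delta$, i.e. $\Gamma \setminus [G] \subseteq \Delta$. Then $\psi \in \Gamma \setminus [G] \subseteq \Delta$, so by the induction hypothesis $\CM^c,\Delta \Vdash \psi$; since $\Delta$ was arbitrary, $\CM^c,\Gamma \Vdash [G]\psi$. For the converse, I argue contrapositively: assume $[G]\psi \notin \Gamma$. I must produce a maximal consistent $\Delta$ with $\Gamma \setminus [G] \subseteq \Delta$ and $\psi \notin \Delta$. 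The key step is to show that the set $(\Gamma \setminus [G]) \cup \{\neg\psi\}$ is consistent. If it were not, then finitely many $\psi_1,\dots,\psi_n$ with each $[G]\psi_i \in \Gamma$ would satisfy $\vdash (\psi_1 \land \dots \land \psi_n) \to \psi$; applying $[G]$-necessitation and the axiom (K) (distributivity of $[G]$ over implication, together with the derived conjunction rule for $[G]$), we get $\vdash ([G]\psi_1 \land \dots \land [G]\psi_n) \to [G]\psi$, and since each $[G]\psi_i \in \Gamma$ and $\Gamma$ is maximal consistent, this forces $[G]\psi \in \Gamma$, a contradiction. Hence $(\Gamma \setminus [G]) \cup \{\neg\psi\}$ is consistent, and by Lindenbaum's lemma it extends to a maximal consistent set $\Delta$; then $\Gamma \sim^c_G \Delta$ and, by the induction hypothesis applied to $\psi$ (using $\neg\psi \in \Delta$, hence $\psi \notin \Delta$), $\CM^c,\Delta \not\Vdash \psi$, so $\CM^c,\Gamma \not\Vdash [G]\psi$.

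The main obstacle, such as it is, is the modal case, and within it the consistency argument for $(\Gamma \setminus [G]) \cup \{\neg\psi\}$, which relies on the normality of each modality $[G]$ — i.e. on having (K), $[G]$-Nec, and the derived rule that $\vdash \chi_1 \to \chi_2$ implies $\vdash [G]\chi_1 \to [G]\chi_2$ and the analogous fact for finite conjunctions. All of these are available from the axioms of $\syn^{-}$, so no genuine difficulty arises; the remaining structural axioms ($\mathsf{B}$, $\mathsf{4}$, $\mathsf{T}$, Mono, Equiv, Union, Clo, NE) play no role in the truth lemma itself — they were used earlier to verify that $\CM^c$ is a $\kappa$-model. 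I would also note that Lindenbaum's lemma (every consistent set extends to a maximal consistent one) is used freely here, as is standard. Completeness of $\syn^{-}$ with respect to $\kappa$-models then follows immediately: if $\not\vdash \varphi$, then $\{\neg\varphi\}$ is consistent, extends to some $\Gamma \in W^c$, and by the truth lemma $\CM^c,\Gamma \not\Vdash \varphi$, while $\CM^c$ is a $\kappa$-model, so $\varphi$ is not $\kappa$-valid.
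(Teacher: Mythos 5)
Your proof is correct and is exactly the standard canonical-model argument (induction on $\phi$, maximal consistency for the Boolean cases, and the existence lemma via normality of $[G]$ plus Lindenbaum for the modal case) that the paper itself invokes when it states that ``the truth lemma is standard''; the only axioms needed are (Taut), (K), MP and $[G]$-Nec, as you note. No discrepancy with the paper's intended route.
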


\begin{theorem}[Completeness]\label{thm:kappacomplete}
$\synminus$ is complete with respect to $\kappa$-models.
\end{theorem}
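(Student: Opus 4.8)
The plan is to argue by contraposition using the canonical model $\CM^c$ constructed above, which we have already shown to be a $\kappa$-model and for which the truth lemma holds. Suppose $\not\vdash \varphi$. Then the singleton set $\{\neg\varphi\}$ is consistent: otherwise $\neg\varphi \vdash \bot$, hence $\vdash \neg\neg\varphi$ by propositional reasoning, hence $\vdash \varphi$, a contradiction.

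Next I would invoke Lindenbaum's lemma, the standard fact that every consistent set of $\CL$-formulas extends to a maximal consistent set; the proof is the usual transfinite (here, since $\P$ is countable and the agent patterns form a finite set, countable) enumeration of all formulas, adding each formula or its negation while preserving consistency. Applying this to $\{\neg\varphi\}$ yields a maximal consistent set $\Gamma$ with $\neg\varphi \in \Gamma$, so $\varphi \notin \Gamma$ by maximal consistency, and $\Gamma \in W^c$ by the definition of the canonical model.

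Finally, by the truth lemma we have $\CM^c, \Gamma \Vdash \varphi$ iff $\varphi \in \Gamma$, so $\CM^c, \Gamma \not\Vdash \varphi$. Since $\CM^c$ is a $\kappa$-model, this witnesses $\not\Vdash_\kappa \varphi$. Contraposing, $\Vdash_\kappa \varphi$ implies $\vdash \varphi$, which is the claimed completeness. There is no real obstacle here: the entire weight of the argument has been shifted into the verification that $\CM^c$ is a $\kappa$-model and into the truth lemma, both already available; the only ingredient not yet spelled out is Lindenbaum's lemma, which is entirely standard and uses only the inference rules (MP) and closure under the propositional axioms.
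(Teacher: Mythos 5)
Your proposal is correct and is exactly the argument the paper intends when it says ``completeness follows as usual'': the canonical model (already shown to be a $\kappa$-model), Lindenbaum's lemma applied to $\{\neg\varphi\}$, and the truth lemma yield a world falsifying any non-theorem. No gaps; the approach matches the paper's.
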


\subsection{$\delta$-models}\label{subsec:dmodel}
As demonstrated in Example \ref{example:no std dk}, the previously introduced $\kappa$-models do not necessarily satisfy \eqref{eq:D}. In this section, we introduce $\delta$-models which satisfy it. Furthermore, we show that $\synminus$ is sound and complete with respect to $\delta$-models.

\begin{definition}[$\delta$-model]\label{def:d model}
A pre-model  $\CM = (W,\sim, V)$ is called a $\delta$-model if and only if 
\begin{enumerate}
\itemktwo $\alive(G)_\CM  \subseteq \alive(\{A\cup B\})_\CM $ for $A,B\in G$;
\itemkthree $\sim_H \subseteq \sim_G$, if $G\subseteq H$;
\itemkfour $\sim_G \subseteq \sim_{G\cup \{B\}}$ if there exists $A\in G$ with $B\subseteq A$;
\itemne for all $w\in W$, there exists an agent pattern $G$ such that $w\sim_G w$;
\itemd $\sim_G = \cap_{B\in G} \sim_{\{B\}}$.
\end{enumerate}
\end{definition}
A formula $\phi$ is $\delta$-valid, denoted by $\Vdash_\delta \phi$, if $\CM \Vdash \phi$ for all $\delta$-models $\CM$. Moreover, observe that $\dsf$ implies $\kone$ and thus, $\delta$-models are $\kappa$-models. This implies that $\synminus$ is sound with respect to $\delta$-models. For a pre-model ${~\CM = (W,\sim,V)}$, we call $F = (W,\sim)$ its frame and sometimes write $\CM = (F,V)$ instead. Furthermore, we say that a frame is a $\kappa$-frame, or a $\delta$-frame, in order to indicate which properties the frame satisfies. In what follows, we will show that $\synminus$ is complete with respect to $\delta$-models.

\begin{definition}[History]\label{def:history}
Let $F  = (W,\sim)$ be a frame. A history $h$ is a non-empty and finite sequence of triples $(w,G,v)$ where 
\begin{enumerate}
\item $w\sim_G v$ and $G$ is maximal under set inclusion. That means, there does not exist an agent pattern $G'$ with $G \subsetneq G'$ and $w \sim_{G'}v$;
\item if $(w',G',v')$ is the successor of $(w,G,v)$, then $v=w'$.
\end{enumerate} 
\end{definition}

We write $\ell(h)$ to denote the last world of a history $h$. That is, if $(w,G,v)$ is the last element of $h$, then $\ell(h) = v$. Furthermore, $h \parallel (\ell(h),G,v)$ denotes the extension of $h$ with $(\ell(h),G,v)$. The set of all histories over a frame $F$ is denoted by $H_F$. 

\begin{lemma}[Downwards closure]\label{lem:closure}
Let $F=(W,\sim)$ be a $\kappa$-frame and consider a history $h\in H_F$ and let $ (w,U,v)$ be an element of $h$. If there exists $A\in U$ with $B\subseteq A$, then $B\in U$.
\end{lemma}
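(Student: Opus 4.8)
Let $F=(W,\sim)$ be a $\kappa$-frame, $h \in H_F$, and $(w,U,v)$ an element of $h$. If there exists $A \in U$ with $B \subseteq A$, then $B \in U$.

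The plan is to derive this directly from the maximality clause in the definition of a history together with property $\kfour$ of $\kappa$-frames. By Definition~\ref{def:history}(1), since $(w,U,v)$ occurs in $h$, we have $w \sim_U v$ and $U$ is maximal under set inclusion among agent patterns $U'$ with $w \sim_{U'} v$. Suppose for contradiction that there exists $A \in U$ with $B \subseteq A$ but $B \notin U$. Then $U \subsetneq U \cup \{B\}$, so $U \cup \{B\}$ is a proper extension of $U$, and to contradict maximality it suffices to show $w \sim_{U \cup \{B\}} v$.

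The key step is: because $A \in U$ and $B \subseteq A$, property $\kfour$ gives $\sim_U \subseteq \sim_{U \cup \{B\}}$. Since $w \sim_U v$, we conclude $w \sim_{U \cup \{B\}} v$. This contradicts the maximality of $U$ in clause (1) of the history definition, so no such $B$ can exist, i.e., $B \in U$.

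I do not expect any real obstacle here; the lemma is essentially an unpacking of the definitions. The only thing to be careful about is the edge case $B = A$, but then $B = A \in U$ holds trivially and there is nothing to prove; and the case where $U \cup \{B\} = U$ already (which is exactly $B \in U$), which again is the desired conclusion. So the argument splits cleanly: either $B \in U$ directly, or $U \cup \{B\}$ is a strict superset that still relates $w$ and $v$ via $\kfour$, contradicting maximality.
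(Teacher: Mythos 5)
Your proof is correct and follows essentially the same route as the paper: assume $B \notin U$, apply $\kfour$ to get $w \sim_{U \cup \{B\}} v$, and contradict the maximality of $U$ from the definition of a history. The extra edge-case discussion is harmless but not needed.
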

\begin{proof}
Towards a contradiction, suppose that $B \not \in U$. Consider the agent pattern $U' = U\cup \{B\}$. Clearly $U\subsetneq U'$ and $w\sim_{U'}v$ because of $\kfour$. This contradicts $U$ being maximal under set inclusion and thus, $B\in U$.
\end{proof}

\begin{definition}[$\rightarrow_G$]\label{def:arrow_G}
Let $F  = (W,\sim)$ be a frame and consider $h,h'\in H_F$. For an agent pattern $G$, we define $\rightarrow_G \subseteq H_F \times H_F$ as follows
\[ 
h\rightarrow_G h' \quad \text{if{f}} \quad h' = h \parallel (\ell(h),U,\ell(h')) \text{ and } G\subseteq U.
\]
\end{definition}

\begin{lemma}\label{lem:arrow_properties}
Let $F = (W,\sim)$ be a $\kappa$-frame and consider $h,h' \in H_F$ with $~{h\rightarrow_G h'}$ for some agent pattern $G$. The following hold:
\begin{enumerate}
\item if $H \subseteq G$, then $h\rightarrow_H h'$;
\item if there exists $A\in G$ and $B\subseteq A$, then $h\rightarrow_{\{B\}}h'$;
\item if $h\rightarrow_H h'$, then $h\rightarrow_{G\cup H}h'$.
\end{enumerate}
\end{lemma}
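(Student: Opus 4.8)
The statement to prove is Lemma~\ref{lem:arrow_properties}, which collects three closure properties of the relation $\rightarrow_G$ on histories. All three are direct consequences of unwinding Definition~\ref{def:arrow_G} together with set-theoretic monotonicity and the downwards-closure behaviour recorded in Lemma~\ref{lem:arrow_properties}'s hypotheses (via $\kfour$ and Lemma~\ref{lem:closure}).

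\medskip

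\noindent\textbf{Proof plan.} Suppose $h \rightarrow_G h'$. By Definition~\ref{def:arrow_G} this means $h' = h \parallel (\ell(h), U, \ell(h'))$ for some agent pattern $U$ with $G \subseteq U$; moreover, since $(\ell(h), U, \ell(h'))$ is the last element of the history $h'$, the pair $(\ell(h), U, \ell(h'))$ satisfies the history conditions of Definition~\ref{def:history}, in particular $U$ is maximal under set inclusion among patterns relating $\ell(h)$ and $\ell(h')$. I would fix this decomposition once at the start and reuse it for all three items.

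\begin{enumerate}
\item For the first item, assume $H \subseteq G$. Then $H \subseteq G \subseteq U$, so the very same witnessing decomposition $h' = h \parallel (\ell(h), U, \ell(h'))$ shows $h \rightarrow_H h'$ directly from Definition~\ref{def:arrow_G}. No frame axiom is needed here.
\item For the second item, assume there exists $A \in G$ with $B \subseteq A$. Since $G \subseteq U$ we have $A \in U$, and $A \in U$ with $B \subseteq A$; now invoke Lemma~\ref{lem:closure} (downwards closure) applied to the element $(\ell(h), U, \ell(h'))$ of the history $h'$ to conclude $B \in U$, hence $\{B\} \subseteq U$. Again the same decomposition witnesses $h \rightarrow_{\{B\}} h'$.
\item For the third item, assume additionally $h \rightarrow_H h'$. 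The key point is that the last triple of $h'$ is uniquely determined by $h$ and $h'$ (it is simply the last element of the sequence $h'$), so the witness pattern for $h \rightarrow_H h'$ must be the \emph{same} $U$ as for $h \rightarrow_G h'$. Thus $G \subseteq U$ and $H \subseteq U$, whence $G \cup H \subseteq U$, and the decomposition witnesses $h \rightarrow_{G \cup H} h'$.
\end{enumerate}

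\medskip

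\noindent\textbf{Main obstacle.} The only subtlety, and the thing I would take care to state explicitly, is that in item~3 the pattern $U$ recorded in the final triple of $h'$ is unique — this is what lets us conclude that both $G$ and $H$ are subsets of one and the same $U$ rather than of two possibly different maximal patterns. This uniqueness is immediate from the definition of a history as a sequence (its last element is literally a single triple), but it is the load-bearing observation. Item~2 also quietly relies on Lemma~\ref{lem:closure}, which is where the frame axiom $\kfour$ enters; items~1 and~3 are pure set theory once the decomposition is fixed.
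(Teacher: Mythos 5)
Your proposal is correct and follows essentially the same route as the paper: fix the decomposition $h' = h \parallel (\ell(h),U,\ell(h'))$ with $G \subseteq U$, then use $H \subseteq G \subseteq U$ for item~1, Lemma~\ref{lem:closure} to get $\{B\} \subseteq U$ for item~2, and the fact that both hypotheses refer to the same $U$ (the unique last triple of $h'$) to get $G \cup H \subseteq U$ for item~3. Your explicit remark that the witness $U$ in item~3 is uniquely determined is exactly the observation the paper leaves implicit.
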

\begin{proof}
By assumption, $h' = h\parallel(w,U,v)$ with $G\subseteq U$. For the first case, we have $H\subseteq G \subseteq U$ and thus $h\rightarrow_H h'$ by Definition \ref{def:arrow_G}. For the second case, we have $\{B\}\subseteq U$ by Lemma \ref{lem:closure} and thus $h\rightarrow_{\{B\}}h'$ by Definition \ref{def:arrow_G}. Lastly, assume $h\rightarrow_G h'$ and $h\rightarrow_H h'$, i.e., $G\subseteq U$ and $H\subseteq U$. Hence $G\cup H \subseteq U$ by the properties of set union, and therefore $h\rightarrow_{G\cup H}h'$ by Definition \ref{def:arrow_G}.
\end{proof}

\begin{definition}[$U(F)$]\label{def:u(f)}
Let $F= (W,\sim)$ be a frame. We define the unravelled frame $U(F) = (H_F,\{\approx_G\}_{G\subseteq \pow(\ag)\setminus \{\emptyset\}})$  where
$\approx_G$ is the transitive closure of the symmetric closure of $\rightarrow_G$, i.e., $\approx_G = (\rightarrow_G \cup \rightarrow^{-1}_G)^*$.
\end{definition}

\begin{definition}[$U(\CM)$]\label{def:u(f)-model}
Let $F = (W,\sim)$ be a frame and consider the pre-model $\CM= (F,V)$. We call $U(\CM) = (U(F), L)$ with 
\[
h\in L(p) \quad \text{iff} \quad \ell(h) \in V(p)
\]
the unravelled model of $\CM$.
\end{definition}

\begin{definition}{($R$-path)}
Let $R$ be a relation on a set $X$. A $R$-path from $x_1$ to $x_n$ is a sequence
\[
\tau = (x_1,x_2),(x_2,x_3),\cdots,(x_{n-2},x_{n-1}),(x_{n-1},x_n)
\]
with $(x_i,x_{i+1})\in R$ for $1\leq i\leq n-1$.
\end{definition}

\begin{definition}
Let $R$ be a relation on a set $X$. We define 
\[
R\circ R := \{(x,y)\in X \times X \mid \text{ There exists } z\in X \text{ with } (x,z)\in R \text{ and } (z,y)\in R \}.
\]
We abbreviate $\overbrace{(\cdots (R\circ R) \circ R \cdots )}^{n-1 \text{ times}}\circ R)$ to $R^n$.
\end{definition}

\begin{remark}
Let $R$ be relation on a set $X$, then there is a $R$-path of length $n$ from $a$ to $b$ if and only if $(a,b)\in R^n$.
\end{remark}

\begin{corollary}\label{cor:path}
Let $F$ be a $\kappa$-frame. The following two are equivalent
\begin{enumerate}
\item $(h,h')\in \approx_G$;
\item there exists a $(\rightarrow_G \cup \rightarrow^{-1}_G)$-path $\tau$ from $h$ to $h'$.
\end{enumerate}
\end{corollary}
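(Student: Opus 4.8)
The plan is to read this off directly from Definition \ref{def:u(f)} together with the two general facts about relations stated immediately before the corollary. I would first set $S := \rightarrow_G \cup \rightarrow^{-1}_G$, the symmetric closure of $\rightarrow_G$ on the set $H_F$, so that by Definition \ref{def:u(f)} we have $\approx_G = S^{*}$, the transitive closure of $S$. The only ingredient I need beyond this is the elementary identity $S^{*} = \bigcup_{n \ge 1} S^{n}$, which turns the assertion ``$(h,h') \in \approx_G$'' into ``$(h,h') \in S^{n}$ for some $n \ge 1$''.

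For the implication (1) $\Rightarrow$ (2) I would take $n \ge 1$ with $(h,h') \in S^{n}$ and invoke the preceding Remark (the characterisation of $R^{n}$ by $R$-paths of length $n$) to obtain an $S$-path from $h$ to $h'$; by the definition of $S$ this is precisely a $(\rightarrow_G \cup \rightarrow^{-1}_G)$-path, which is statement (2). For the implication (2) $\Rightarrow$ (1) I would run the same Remark in the opposite direction: a $(\rightarrow_G \cup \rightarrow^{-1}_G)$-path from $h$ to $h'$ of length $n$ gives $(h,h') \in S^{n} \subseteq \bigcup_{m \ge 1} S^{m} = \approx_G$.

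I do not expect a genuine obstacle here, since both directions are pure bookkeeping on top of facts already stated in the excerpt; the one point that needs a little care is the reading of ``transitive closure'', namely that the union runs over $n \ge 1$, so that the statement concerns $(\rightarrow_G \cup \rightarrow^{-1}_G)$-paths with at least one edge. I would also note in passing that the $\kappa$-frame hypothesis is not actually used in this corollary — it is carried along only because $H_F$ and $\approx_G$ were introduced in that setting — although it does become relevant for the lemmas (such as Lemma \ref{lem:arrow_properties}) that this path characterisation is meant to feed into.
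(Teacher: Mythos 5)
Your proof is correct and matches the paper's intent exactly: the corollary is stated without proof precisely because it is the unpacking of $\approx_G = (\rightarrow_G \cup \rightarrow^{-1}_G)^*$ via the identity $S^* = \bigcup_{n\geq 1} S^n$ and the preceding remark relating $R^n$ to $R$-paths of length $n$, which is what you do. Your side observations (paths have at least one edge, and the $\kappa$-frame hypothesis is not actually needed here) are also accurate.
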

For brevity we refer to $(\rightarrow_G \cup \rightarrow^{-1}_G)$-paths as $\rightarrow_G$-paths.

\begin{remark}\label{rem:h_reflexive}
If $F  = (W,\sim)$ is a frame, then a $\rightarrow_G$-path from $h$ to $h'$ implies the  existence of a $\sim_G$-path from $\ell(h)$ to $\ell(h')$. It follows by transitivity that $\ell(h) \sim_G \ell(h')$.
\end{remark}

\begin{theorem}
Let $F = (W,\sim)$ be a $\kappa$-frame. Then $U(F)$ is a $\kappa$-frame.
\end{theorem}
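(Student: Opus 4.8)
The plan is to check, one condition at a time, that $U(F) = (H_F,\{\approx_G\}_G)$ meets the definition of a $\kappa$-frame: each $\approx_G$ symmetric and transitive, and $\kone$--$\kfour$ together with $\mathsf{NE}$. Symmetry and transitivity are immediate from Definition~\ref{def:u(f)}, since $\approx_G$ is the transitive closure of a symmetric relation. For $\kthree$ and $\kfour$ I would work at the level of the generating relations and then pass to closures. If $G\subseteq H$, Lemma~\ref{lem:arrow_properties}(1) gives $\rightarrow_H\,\subseteq\,\rightarrow_G$; since forming the symmetric closure and then the transitive closure are monotone, $\approx_H\,\subseteq\,\approx_G$. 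If there is $A\in G$ with $B\subseteq A$, then for any $h\rightarrow_G h'$ Lemma~\ref{lem:arrow_properties}(2) gives $h\rightarrow_{\{B\}}h'$ and Lemma~\ref{lem:arrow_properties}(3) then gives $h\rightarrow_{G\cup\{B\}}h'$, so $\rightarrow_G\,\subseteq\,\rightarrow_{G\cup\{B\}}$ and hence $\approx_G\,\subseteq\,\approx_{G\cup\{B\}}$.

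The crux is a bridging fact: for every history $h\in H_F$ and every agent pattern $G$,
\[
h\approx_G h \quad\text{if and only if}\quad \ell(h)\sim_G\ell(h).
\]
Left to right is Corollary~\ref{cor:path} plus Remark~\ref{rem:h_reflexive}: a $\rightarrow_G$-path from $h$ to $h$ yields a $\sim_G$-path from $\ell(h)$ to $\ell(h)$, and transitivity of $\sim_G$ closes it up. For the converse, put $v=\ell(h)$ and assume $v\sim_G v$. Since $\ag$ is finite there are only finitely many agent patterns, so I may set $U^\star:=\bigcup\{G'\mid v\sim_{G'}v\}$, a finite union over a family that contains $G$; iterating $\kone$ of $F$ across this family gives $v\sim_{U^\star}v$, and $U^\star$ is by construction the largest such pattern, hence maximal under set inclusion. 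Thus $h':=h\parallel(v,U^\star,v)$ is a genuine history, and $G\subseteq U^\star$ gives $h\rightarrow_G h'$, so the two-step walk $h\rightarrow_G h'\rightarrow_G^{-1}h$ witnesses $h\approx_G h$. Given this equivalence, $\kone$, $\ktwo$ and $\mathsf{NE}$ for $U(F)$ are read off at $\ell(h)$ from the corresponding properties of $F$: if $h\approx_G h$ and $h\approx_H h$ then $v\sim_G v$ and $v\sim_H v$, hence $v\sim_{G\cup H}v$ by $\kone$ of $F$, hence $h\approx_{G\cup H}h$; $\ktwo$ is identical using $\ktwo$ of $F$; and for $\mathsf{NE}$, applying $\mathsf{NE}$ of $F$ to $v=\ell(h)$ yields a pattern $G$ with $v\sim_G v$, whence $h\approx_G h$.

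I expect the converse half of the bridging fact to be the only real obstacle. One cannot simply append $(v,G,v)$ to $h$, because the definition of a history requires the pattern in each triple to be maximal under set inclusion, and $G$ need not be; the repair is the construction of the maximal self-indistinguishability pattern $U^\star$ at $v$, which is exactly where finiteness of $\ag$ and property $\kone$ of the base frame are needed. Everything else is routine monotonicity-of-closure bookkeeping.
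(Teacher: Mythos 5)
Your proof is correct and takes essentially the same route as the paper: symmetry and transitivity come from the closure definition, $\mathsf{K}3$ and $\mathsf{K}4$ are pushed through the generating relations $\rightarrow_G$ via Lemma~\ref{lem:arrow_properties}, and $\mathsf{K}1$, $\mathsf{K}2$, $\mathsf{NE}$ are obtained by reading off $\ell(h)\sim_G\ell(h)$ (Remark~\ref{rem:h_reflexive}) and appending a reflexive triple with a maximal pattern to the history. Your bridging fact and the explicit construction of $U^\star$ simply make precise the existence of a maximal pattern $U\supseteq G\cup H$ with $\ell(h)\sim_U\ell(h)$, which the paper's proof uses implicitly when it declares $h\parallel(\ell(h),U,\ell(h))$ a valid history.
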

\begin{proof}
Observe  that $\approx_G = (\rightarrow_G\cup\rightarrow_G^{-1})^*$ is transitive and symmetric, because the transitive closure of a symmetric relation is transitive and symmetric. Moreover, $\mathsf{NE}$ follows because histories are not empty.
\begin{itemize}
\item $\kone$: Assume $(h,h)\in \approx_G$ and $(h,h)\in \approx_H$. By Remark~\ref{rem:h_reflexive}, $\ell(h) \sim_G \ell(h)$ as well as $\ell(h) \sim_H \ell(h)$. Since $F$ satisfies $\kone$, it holds that $\ell(h) \sim_{G \cup H} \ell(h)$. Hence, $h^* = h\parallel (\ell(h),U,\ell(h))$ with $G\cup H \subseteq U$ is a valid history and $(h,h^*)\in \approx_{G\cup H}$. We obtain $(h,h)\in \approx_{G\cup H}$ by symmetry and transitivity.

\item $\ktwo$: Assume $(h,h)\in \approx_G$ and let $A,B\in G$. By assumption and Remark~\ref{rem:h_reflexive},  $\ell(h)\sim_G \ell(h)$. Because $F$ satisfies $\ktwo$, we have $\ell(h) \sim_{\{A\cup B\}}\ell(h)$. Thus $h^* = h \parallel (\ell(h),U,\ell(h))$ with $\{A\cup B\}\subseteq U$ is a valid history which implies that $(h,h)\in \approx_{\{A\cup B\}}$ by symmetry and transitivity.

\item $\kthree$:  Assume $(h,h')\in \approx_H$ and $G\subseteq H$. By Corollary \ref{cor:path}, there must exist a $\rightarrow_H$-path $\tau$ from $h$ to $h'$. Let $(s,s') \in \tau$ be arbitrary. Since $F$ satisfies $\kthree$, Lemma~\ref{lem:arrow_properties} implies that $s\approx_G s'$. Therefore, $\tau$ is a $\rightarrow_G$-path and $(h,h')\in \approx_G$ by Corollary \ref{cor:path}.

\item $\kfour$:  Assume $(h,h')\in \approx_G$ and that there exists $A\in G$ with $B\subseteq A$. By Corollary \ref{cor:path} there exists a $\rightarrow_G$-path $\tau$ from $h$ to $h'$. Let $(s,s') \in \tau$ be arbitrary. Since $F$ satisfies $\kfour$, Lemma \ref{lem:arrow_properties} implies that $s\approx_{G\cup \{B\}}s'$. Therefore, $\tau$ is a $\rightarrow_{G\cup \{B\}}$-path  and $(h,h')\in \approx_{G\cup \{B\}}$ by Corollary \ref{cor:path}.
\qedhere
\end{itemize}
\end{proof}

We call a $R$-path from $x_1$ to $x_n$ non-redundant if and only if  $x_i \neq x_{i+2}$ for $1 \leq i <n-1$. Further, we often write 
\[
\tau = x_1Rx_2Rx_3 \cdots x_{n-1}Rx_n
\]
instead of $\tau = (x_1,x_2),(x_2,x_3),\cdots,(x_{n-2},x_{n-1}),(x_{n-1},x_n)$.  If $R = S\cup S^{-1}$, we will use $S$ and $S^{-1}$ in a path instead of $R$. We say that a $\rightarrow_G$-path has a \emph{change of direction}, if $(x_i,x_{i+1})\in \rightarrow_G$ (respectively $(x_i,x_{i+1})\in \rightarrow^{-1}_G$) and $(x_{i+1},x_{i+2})\in \rightarrow^{-1}_G$ (respectively $(x_{i+1},x_{i+2})\in \rightarrow_G$). In what follows, we will write $h \leftarrow_G h'$ instead of $h \rightarrow_G^{-1} h'$ for better readability.

\begin{lemma}\label{lem:prefix}
Let $F=(W,\sim)$ be a $\kappa$-frame and consider the unravelled frame $U(F)$. If $(h_1,h_n)\in \approx_G$, then $h_1$ and $h_n$ have a common prefix.
\end{lemma}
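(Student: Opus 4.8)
The statement asserts that if two histories $h_1$ and $h_n$ are related by $\approx_G$, then one is a prefix of the other (sharing a common prefix). The natural approach is induction on the length $n$ of a $\rightarrow_G$-path from $h_1$ to $h_n$, guaranteed to exist by Corollary~\ref{cor:path}. The key structural fact to exploit is the definition of $\rightarrow_G$: a single step $h \rightarrow_G h'$ means $h'$ is obtained from $h$ by appending exactly one triple, so $h$ is a prefix of $h'$ of length exactly one less. Dually, $h \leftarrow_G h'$ means $h'$ is a one-shorter prefix of $h$.

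First I would set up the induction on path length. For the base case ($h_1 = h_n$, or a single edge), the claim is immediate: along a single $\rightarrow_G$ or $\leftarrow_G$ edge, one history is a prefix of the other. For the inductive step, consider a $\rightarrow_G$-path $\tau = h_1 R h_2 R \cdots R h_n$ where $R = {\rightarrow_G} \cup {\leftarrow_G}$. The heart of the argument is to analyze what happens at a \emph{change of direction}. If $\tau$ has no change of direction, it is monotone --- a chain of appends (or a chain of deletions) --- and then $h_1$ and $h_n$ are directly prefix-comparable since each history is a linear sequence and prefixes are totally ordered by length among comparable histories. If $\tau$ does have a change of direction at position $i$, there are two cases: either $h_{i-1} \rightarrow_G h_i \leftarrow_G h_{i+1}$ (a "peak", where $h_i$ is strictly longer than both neighbours) or $h_{i-1} \leftarrow_G h_i \rightarrow_G h_{i+1}$ (a "valley"). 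In the valley case, $h_i$ is a prefix of both $h_{i-1}$ and $h_{i+1}$; since $h_{i-1} = h_i \parallel (\ell(h_i), U, \ell(h_{i-1}))$ and $h_{i+1} = h_i \parallel (\ell(h_i), U', \ell(h_{i+1}))$ both extend $h_i$ by exactly one triple starting from $\ell(h_i)$, and in fact $\ell(h_{i-1}) \sim_G \ell(h_i)$ and we can push through maximality/uniqueness considerations --- but actually the cleaner move is: in the peak case, $h_{i-1}$ and $h_{i+1}$ are both prefixes of $h_i$, hence (being initial segments of one sequence) prefix-comparable to each other, so we can shortcut the path, replacing $h_{i-1} R h_i R h_{i+1}$ by a single $R$-step (or equality) between $h_{i-1}$ and $h_{i+1}$, strictly shortening the path and invoking the induction hypothesis.

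The peak case is the easy one to shortcut. The main obstacle is the \textbf{valley case}: $h_{i-1} \leftarrow_G h_i \rightarrow_G h_{i+1}$, where $h_i$ is a common prefix of $h_{i-1}$ and $h_{i+1}$, each obtained by appending one triple. Here $h_{i-1}$ and $h_{i+1}$ need \emph{not} be prefix-comparable a priori --- they could branch at $h_i$. I would handle this by showing the two appended triples must in fact agree, forcing $h_{i-1} = h_{i+1}$ and again shortening the path. The appended triples are $(\ell(h_i), U, \ell(h_{i-1}))$ with $G \subseteq U$ and $(\ell(h_i), U', \ell(h_{i+1}))$ with $G \subseteq U'$; both $U, U'$ are maximal under inclusion with $\ell(h_i) \sim_\bullet \ell(h_{i-1})$ resp.\ $\ell(h_i)\sim_\bullet\ell(h_{i+1})$. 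This is where I expect the real work: I would need to argue, likely using the $\delta$-model property $\dsf$ together with $\kthree$/$\kfour$ and the maximality clause in Definition~\ref{def:history}, that a valley cannot actually occur in a non-redundant path, or that when it does the branches coincide --- so that every $\approx_G$-path can be normalized to a monotone one (all appends then all deletions, or vice versa), from which prefix-comparability of the endpoints is immediate. Concretely, I would first reduce to non-redundant paths, then show a non-redundant $\rightarrow_G$-path has at most one change of direction and it must be a peak (valleys contradict non-redundancy since both one-step extensions of $h_i$ compatible with being on a non-redundant path are forced equal), yielding the "up then down" shape $h_1 \rightarrow_G \cdots \rightarrow_G h_k \leftarrow_G \cdots \leftarrow_G h_n$ with $h_k$ a common prefix of $h_1$ and $h_n$ --- which is exactly the conclusion.
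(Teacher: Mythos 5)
Your overall scaffolding (pass to a non-redundant $\rightarrow_G$-path via Corollary~\ref{cor:path}, analyse changes of direction, exploit that each step appends or deletes exactly one triple) is the same as the paper's, but you have the two configurations the wrong way round, and the part you defer as ``the real work'' rests on a false claim. The configuration that collapses is the \emph{peak} $h_{i-1}\rightarrow_G h_i\leftarrow_G h_{i+1}$: both neighbours arise from $h_i$ by deleting its unique last triple, so $h_{i-1}=h_{i+1}$, contradicting non-redundancy. The \emph{valley} $h_{i-1}\leftarrow_G h_i\rightarrow_G h_{i+1}$ does not collapse and does not contradict non-redundancy: the appended triples $(\ell(h_i),U,\ell(h_{i-1}))$ and $(\ell(h_i),U',\ell(h_{i+1}))$ need not coincide. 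Concretely, take any history $h$ with $\ell(h)=w$ and two distinct worlds $v_1\neq v_2$ with $w\sim_U v_1$ and $w\sim_{U'} v_2$ for maximal $U,U'\supseteq G$; then $h\parallel(w,U,v_1)$ and $h\parallel(w,U',v_2)$ are distinct, $\approx_G$-related through the valley at $h$, and neither is a prefix of the other. This example also refutes your proposed normal form: if every non-redundant path had the ``up then down'' shape $h_1\rightarrow_G\cdots\rightarrow_G h_k\leftarrow_G\cdots\leftarrow_G h_n$, the endpoints would both be prefixes of $h_k$ and hence prefix-comparable, which is false here; note also that in such a shape $h_k$ is a common \emph{extension}, not a common prefix, so it would not deliver the stated conclusion anyway. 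The root of the problem is your opening paraphrase of the lemma (``one is a prefix of the other''): the lemma only asserts a common prefix, precisely because valleys do occur and create genuine branching.

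The correct argument is the mirror image of your final plan, and needs none of $\mathsf{D}$, $\mathsf{K}3$/$\mathsf{K}4$ or the maximality clause: if neither endpoint is a prefix of the other, a non-redundant path must change direction at least once; the first change cannot be a peak (by the collapse above), so it is a valley at some $h_i$; any later change of direction would be a peak, again impossible; hence the path is $h_1\leftarrow_G\cdots\leftarrow_G h_i\rightarrow_G\cdots\rightarrow_G h_n$, and the valley bottom $h_i$ is a prefix of both $h_1$ and $h_n$, which is exactly the common prefix the lemma asserts. Your shortcut-and-induct treatment of peaks is fine and essentially equivalent to excluding them by non-redundancy, but as it stands the valley case --- the only case that matters --- is handled by an incorrect claim, so the proposal has a genuine gap.
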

\begin{proof}
If one history is a prefix of the other, the claim follows trivially. Hence, we assume that neither of them is a prefix of the other.
Corollary \ref{cor:path} ensures that there exists a $\rightarrow_G$-path $\tau = (h_1,h_2),\cdots,(h_{n-1},h_n)$, which we can assume to be non-redundant. Moreover, due to our first assumption, $\tau$ must have at least one change of direction. In order to show that a common prefix of $h_1$ and $h_n$ exists, we show that $\tau$ has exactly one change of direction, and that change is of the form $h_{i-1} \leftarrow h_i \rightarrow h_{i+1}$ with $i>1$.

\indent Let $h_i$ be the history at which the first change of direction occurs. Notice, that this implies that $i>1$. First, we observe that this change of direction cannot be of the form $h_{i-1} \rightarrow h_i \leftarrow h_{i+1}$, because this would imply $h_{i-1} = h_{i+1}$ and contradict $\tau$ being non-redundant. Therefore, the first change of direction must be of the form $h_{i-1} \leftarrow h_i \rightarrow h_{i+1}$. A subsequent change of direction would be of the form $h_{i+k-1} \rightarrow_G h_{i+k} \leftarrow_G h_{i+k+1}$ for $k\geq 1$ which contradicts that $\tau$ is non-redundant. Therefore, $h_i$ is a common prefix.
\end{proof}

\begin{lemma}\label{lem:idontknow}
Let $F = (W,\sim)$ be a $\kappa$-frame and consider the unravelled frame $U(F)$. For any two agent patterns $G$ and $H$, the following holds
\[
(h,h')\in \approx_G \text{ and } (h,h')\in \approx_H \text{ implies } (h,h')\in \approx_{G\cup H}.
\]
\end{lemma}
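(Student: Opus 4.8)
The statement to prove is that in the unravelled frame $U(F)$ of a $\kappa$-frame $F$, if $(h,h') \in \approx_G$ and $(h,h') \in \approx_H$, then $(h,h') \in \approx_{G \cup H}$. My plan is to reduce everything to the structural characterisation of $\approx$-related histories provided by Lemma~\ref{lem:prefix}. That lemma tells us that whenever two histories are $\approx_K$-related (for any $K$), they share a common prefix, and the proof of that lemma actually shows more: the connecting non-redundant $\rightarrow_K$-path has at most one change of direction, of the shape $h_{i-1} \leftarrow_K h_i \rightarrow_K h_{i+1}$, so that the common prefix is exactly one of the two histories involved (when one is a prefix of the other) or the "peak'' history $h_i$ sitting at the change of direction. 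In all cases, $h$ and $h'$ are obtained from a common history by appending a single triple of the form $(w,U,v)$ (or are in the trivial relation where $h = h'$, or one extends the other by one step).

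**Key steps.** First I would dispose of the degenerate cases: if $h = h'$, the conclusion is immediate since $(h,h) \in \approx_{G\cup H}$ follows from reflexivity of $\approx$ at $h$ — but note that $\approx$ need not be reflexive everywhere, so I actually need to extract from $(h,h)\in\approx_G$ (via Remark~\ref{rem:h_reflexive}) that $\ell(h) \sim_G \ell(h)$, and similarly $\ell(h) \sim_H \ell(h)$; then $\kone$ in $F$ gives $\ell(h)\sim_{G\cup H}\ell(h)$, so $h \parallel (\ell(h),U,\ell(h))$ with $G\cup H \subseteq U$ is a valid history, yielding $(h,h)\in\approx_{G\cup H}$ by symmetry and transitivity. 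This is exactly the argument used for $\kone$ in the preceding theorem, so I would just cite it. Next, the main case: $h\neq h'$. By Lemma~\ref{lem:prefix} and its proof, $h$ and $h'$ have a common prefix, and more precisely either (a) one of $h,h'$ is a prefix of the other, or (b) both arise as one-step extensions of a common history $h_i$. I would argue that in fact, tracing the $\rightarrow_G$-path and the $\rightarrow_H$-path carefully, in every case $h'$ is reached from $h$ by exactly the moves described, and at each appended triple $(w,U,v)$ we have $G \subseteq U$ along the $\approx_G$-witness and $H \subseteq U'$ along the $\approx_H$-witness. The crux is to line up these two witnesses: because a history is a sequence whose extending triples are determined, the triple appended to the common prefix to reach $h$ (and the one to reach $h'$) is the *same* triple when viewed from the $G$-path and the $H$-path — so its second component $U$ satisfies both $G \subseteq U$ and $H \subseteq U$, hence $G\cup H \subseteq U$, hence the same path is a $\rightarrow_{G\cup H}$-path by Lemma~\ref{lem:arrow_properties}(3), giving $(h,h')\in\approx_{G\cup H}$ by Corollary~\ref{cor:path}.

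**Main obstacle.** The delicate point is that the $\approx_G$-witnessing path and the $\approx_H$-witnessing path need not be literally the same sequence of histories a priori — $\approx$ is a transitive-symmetric closure, so there could be long detours. The real content is showing that one may *normalise* both witnesses to the canonical non-redundant path of Lemma~\ref{lem:prefix}, and that this canonical path is uniquely determined by the endpoints $h,h'$ (it is: it goes down from $h$ to the common prefix and back up to $h'$, and the histories visited are precisely the prefixes of $h$ and of $h'$ between them). Once uniqueness of the canonical path is in hand, the triples appearing on it are fixed, and the only freedom is in the agent pattern $U$ labelling each step; since $U$ must contain $G$ (from the $G$-path) and must contain $H$ (from the $H$-path) — here I use that every step along the canonical path is of the form $s \parallel (\ell(s),U,\ell(t))$ and the same $U$ works because the history $t$ determines its last triple — we get $G \cup H \subseteq U$ uniformly. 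I expect writing the uniqueness/normalisation argument cleanly to be the bulk of the work; everything after it is a one-line application of Lemma~\ref{lem:arrow_properties}(3) and Corollary~\ref{cor:path}.
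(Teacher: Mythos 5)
Your proposal is correct and follows essentially the same route as the paper's proof: both reduce to the down-up structure supplied by Lemma~\ref{lem:prefix} and exploit the fact that the triples appended along the ascents are the triples of $h$ and $h'$ themselves, so their patterns must contain both $G$ and $H$ and hence $G\cup H$, after which Lemma~\ref{lem:arrow_properties} and Corollary~\ref{cor:path} conclude. The differences are minor: the paper sidesteps your uniqueness/normalisation step by simply taking the longer of the two common prefixes $h''_G$ and $h''_H$ (both are prefixes of $h$, hence comparable), whereas you additionally handle the degenerate case $h=h'$ explicitly via Remark~\ref{rem:h_reflexive} and $\kone$, a case the paper's write-up glosses over.
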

\begin{proof}
Assume $h\approx_G h'$ and $h\approx_H h'$. By Lemma \ref{lem:prefix} the following paths exist
\begin{multicols}{2}
\begin{itemize}
\item $h''_G \rightarrow_G \cdots \rightarrow_G h$,
\item $h''_G \rightarrow_G \cdots \rightarrow_G h'$,
\item $h''_H \rightarrow_H \cdots \rightarrow_H h$, and
\item $h''_H \rightarrow_H \cdots \rightarrow_H h'$.
\end{itemize}
\end{multicols}
\noindent
Observe that either $h''_G = h''_H$ or one of the histories is a proper prefix of the other. If $h''_G \neq h''_H$, let $h''$ be the longer history. If they are of the same length, fix either $h'' = h''_G$ or $h'' = h''_H$. We can write $h$ and $h'$ as
\[
h = h'' \parallel (w_1,G_1,w_2) \parallel \cdots \parallel (w_{n-1}, G_n,w_n), \text{ and }
\] 
\[
h' = h'' \parallel (w'_1,G'_1,w'_2) \parallel \cdots \parallel (w'_{m-1} G'_m,w'_m).
\]
By Definition \ref{def:u(f)} we have $G\subseteq G_i,G'_j$ and $H\subseteq G_i,G'_j$ for all $i,j\geq 0$. Thus, by Lemma \ref{lem:arrow_properties} and Corollary \ref{cor:path}, we have $h\approx_{G\cup H}h'$.
\end{proof}

\begin{lemma}\label{lem:std_group_knowledge}
Let $F  = (W,\sim)$ be a $\kappa$-frame. $U(F)$ satisfies $\dsf$.
\end{lemma}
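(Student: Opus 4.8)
The plan is to establish the equality $\approx_G = \bigcap_{B\in G}\approx_{\{B\}}$ on $U(F)$ by proving the two inclusions separately. The inclusion $\approx_G \subseteq \bigcap_{B\in G}\approx_{\{B\}}$ is immediate: by the theorem just proved, $U(F)$ is a $\kappa$-frame, hence it satisfies $\kthree$; since $\{B\}\subseteq G$ for every $B\in G$, this gives $\approx_G\subseteq\approx_{\{B\}}$ for each such $B$, and intersecting over all $B\in G$ yields the claim.

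For the converse inclusion $\bigcap_{B\in G}\approx_{\{B\}}\subseteq\approx_G$, the key ingredient is Lemma~\ref{lem:idontknow}, which tells us that $(h,h')\in\approx_{G_1}$ and $(h,h')\in\approx_{G_2}$ together imply $(h,h')\in\approx_{G_1\cup G_2}$. Because $\ag$ is finite, the agent pattern $G$ is a finite subset of $\pow(\ag)\setminus\{\emptyset\}$; I would enumerate it as $G=\{B_1,\dots,B_k\}$ and induct on $k$, the cases $k\le 1$ being immediate since then the intersection is literally $\approx_G$ (or the trivial relation). For the inductive step, assume $(h,h')\in\approx_{\{B_i\}}$ for every $i\le k$. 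Applying the induction hypothesis to the pattern $\{B_1,\dots,B_{k-1}\}$ gives $(h,h')\in\approx_{\{B_1,\dots,B_{k-1}\}}$, and combining this with $(h,h')\in\approx_{\{B_k\}}$ through Lemma~\ref{lem:idontknow} (with $G_1=\{B_1,\dots,B_{k-1}\}$ and $G_2=\{B_k\}$) yields $(h,h')\in\approx_{\{B_1,\dots,B_k\}}=\approx_G$, which closes the induction.

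Since the substantive work — the analysis of $\rightarrow_G$-paths and common prefixes behind Lemma~\ref{lem:prefix} and Lemma~\ref{lem:idontknow} — has already been carried out, I do not expect a genuine obstacle here: the statement is essentially a finite-iteration corollary of Lemma~\ref{lem:idontknow} together with $\kthree$ for $U(F)$. The only points needing a moment of care are that $G$ is finite (so the induction terminates), which follows from finiteness of $\ag$, and that each intermediate set $\{B_1,\dots,B_j\}$ is itself a legitimate agent pattern, so that Lemma~\ref{lem:idontknow} genuinely applies to it.
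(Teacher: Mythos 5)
Your proposal is correct and follows essentially the same route as the paper: the inclusion $\approx_G \subseteq \bigcap_{B\in G}\approx_{\{B\}}$ via $\kthree$ for $U(F)$, and the converse by enumerating the (finite) pattern $G$ and applying Lemma~\ref{lem:idontknow} inductively to the partial unions. No gaps.
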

\begin{proof}
$\approx_G \subseteq \bigcap_{B\in G} \approx_{\{B\}}$ follows directly by $\kthree$. For the other direction, let $G = \{A_1,...,A_n\}$ be an agent pattern and consider the sets
\[
B_1 = \{A_1\} \text{ and } B_i = \{A_i\}\cup B_{i-1} \text{ for } 2\leq i\leq n.
\]
If $(u,v)\in \bigcap_{B\in G}\approx_{\{B\}}$, then $(u,v)\in \approx_{A_i}$ for $1\leq i \leq n$. Applying Lemma~\ref{lem:idontknow} inductively yields  $(u,v)\in \approx_{B_i}$,  for $1\leq i\leq n$. Since $B_n = G$, we obtain $(u,v)\in \approx_G$.
\end{proof}

\begin{corollary}
If $F$ is a $\kappa$-frame, then $U(F)$ is a $\delta$-frame.
\end{corollary}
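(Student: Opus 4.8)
The plan is to simply assemble the two results that have already been proved. By Definition~\ref{def:d model}, a $\delta$-frame is a frame of a pre-model that satisfies $\ktwo$, $\kthree$, $\kfour$, $\mathsf{NE}$, and $\dsf$. First I would invoke the preceding theorem, which shows that whenever $F$ is a $\kappa$-frame the unravelled frame $U(F)$ is again a $\kappa$-frame; in particular $U(F)$ satisfies $\ktwo$, $\kthree$, $\kfour$, and $\mathsf{NE}$ (and also $\kone$, which is not needed here since $\dsf$ implies $\kone$, as noted after Definition~\ref{def:d model}). It then remains only to check $\dsf$, i.e.\ that $\approx_G = \bigcap_{B\in G}\approx_{\{B\}}$ for every agent pattern $G$, and this is exactly the statement of Lemma~\ref{lem:std_group_knowledge}. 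Putting these together yields all of the defining properties of a $\delta$-frame, so $U(F)$ is a $\delta$-frame.

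There is essentially no obstacle at this point: the substantive work has already been carried out. The closure of $\approx_G$ under passing to supersets and to subpatterns (hence $\kthree$, $\kfour$) and under unions of alive patterns (hence $\kone$, $\ktwo$) was handled in the proof that $U(F)$ is a $\kappa$-frame, using Corollary~\ref{cor:path} and Lemma~\ref{lem:arrow_properties}; the upgrade from $\kappa$ to $\delta$ — the standard group knowledge equation $\dsf$ — was obtained in Lemma~\ref{lem:std_group_knowledge} via an inductive application of Lemma~\ref{lem:idontknow} along a chain $B_1\subseteq B_2\subseteq\cdots\subseteq B_n = G$. The only thing worth verifying explicitly is that the list $\{\ktwo,\kthree,\kfour,\mathsf{NE},\dsf\}$ in Definition~\ref{def:d model} is fully covered by these two results, which it is. Hence the corollary is immediate.
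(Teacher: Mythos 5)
Your proposal is correct and matches the paper's intent exactly: the corollary follows by combining the preceding theorem (showing $U(F)$ is a $\kappa$-frame, hence satisfies $\ktwo$, $\kthree$, $\kfour$, $\mathsf{NE}$) with Lemma~\ref{lem:std_group_knowledge} (which gives $\dsf$), and your observation that $\kone$ is subsumed by $\dsf$ is also consistent with the remark after Definition~\ref{def:d model}. Nothing further is needed.
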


\begin{definition}[Functional bisimulation]
Let $\CM = (W_\CM,\sim^\CM,V_\CM)$, and \linebreak $\CN = (W_\CN,\sim^\CN,V_\CN)$ be two pre-models. $f: W_\CM \rightarrow W_\CN$ is a functional bisimulation if and only if
\begin{enumerate}
\item $\textbf{Atom}$: for all $w\in W_\CM$, $V_\CM(w) = V_\CN(f(w)) $.
\item $\textbf{Forth}$: for any agent pattern $G$, $w\sim^\CM_G v$ implies $f(w)\sim^\CN_G f(v)$.
\item $\textbf{Back}$:  for any agent pattern $G$, $f(w) \sim^\CN_G v'$ implies that there exists $w'$ with $f(w')=v'$ such that $w \sim^\CM_Gw'$.
\end{enumerate}
\end{definition}

\begin{lemma}
If $f: W_\CM \rightarrow W_\CN$ is a functional bisimulation, then for any formula $\varphi$ we have $\CM,w\Vdash \varphi$ if and only if $\CN,f(w) \Vdash \varphi$.
\end{lemma}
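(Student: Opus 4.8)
The plan is to prove the invariance lemma by induction on the structure of $\varphi$, exactly as in the classical modal logic setting for bisimulations, adapted to the family of indistinguishability relations $\{\sim_G\}_G$. I would state the claim as: for every formula $\varphi \in \CL$ and every $w \in W_\CM$, we have $\CM, w \Vdash \varphi$ if and only if $\CN, f(w) \Vdash \varphi$.

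First I would handle the base case $\varphi = p$ for $p \in \P$: here $\CM, w \Vdash p$ iff $p \in V_\CM(w)$ iff $p \in V_\CN(f(w))$ (by the \textbf{Atom} condition) iff $\CN, f(w) \Vdash p$. The Boolean cases $\varphi = \neg \psi$ and $\varphi = \psi \land \chi$ are immediate from the induction hypothesis and the clauses of Definition~\ref{def:truth}, since the truth conditions for $\neg$ and $\land$ are defined pointwise and do not refer to the accessibility relations.

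The modal case $\varphi = [G]\psi$ is the heart of the argument and uses \textbf{Forth} and \textbf{Back}. For the forward direction, assume $\CM, w \Vdash [G]\psi$; to show $\CN, f(w) \Vdash [G]\psi$, take any $v' \in W_\CN$ with $f(w) \sim^\CN_G v'$. By \textbf{Back}, there is $w' \in W_\CM$ with $f(w') = v'$ and $w \sim^\CM_G w'$; hence $\CM, w' \Vdash \psi$, and by the induction hypothesis $\CN, v' \Vdash \psi$. Since $v'$ was arbitrary, $\CN, f(w) \Vdash [G]\psi$. For the converse, assume $\CN, f(w) \Vdash [G]\psi$; to show $\CM, w \Vdash [G]\psi$, take any $w' \in W_\CM$ with $w \sim^\CM_G w'$. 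By \textbf{Forth}, $f(w) \sim^\CN_G f(w')$, so $\CN, f(w') \Vdash \psi$, and by the induction hypothesis $\CM, w' \Vdash \psi$. Since $w'$ was arbitrary, $\CM, w \Vdash [G]\psi$. This completes the induction.

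I do not expect any genuine obstacle here: the proof is the standard Hennessy–Milner-style bisimulation-invariance argument, and the only points requiring care are bookkeeping ones — making sure the induction hypothesis is applied at the right worlds ($w'$ on the $\CM$ side, $v'$ on the $\CN$ side) and that the surjectivity supplied by \textbf{Back} is used to turn a successor of $f(w)$ in $\CN$ back into a successor of $w$ in $\CM$. The slight wrinkle compared to the single-relation case is that the relations are indexed by agent patterns $G$, but since \textbf{Forth} and \textbf{Back} are quantified over all agent patterns $G$, each modal step uses the appropriate instance with no extra work.
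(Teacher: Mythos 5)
Your proof is correct and follows essentially the same route as the paper: induction on the structure of $\varphi$, with \textbf{Atom} for the base case and \textbf{Forth}/\textbf{Back} for the modal case. The only cosmetic difference is that the paper argues the modal step in the contrapositive form ($\CM,w\not\Vdash[G]\psi$ iff $\CN,f(w)\not\Vdash[G]\psi$), whereas you argue it positively — the same argument either way.
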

\begin{proof}
Let $f$ be a functional bisimulation. We show the claim by induction on the length of $\varphi$.
\begin{enumerate}
\item If $\varphi \in \P$, then the claim follows by $\textbf{Atom}$.
\item $\varphi = \neg \phi$ follows by the induction hypothesis.
\item $\varphi = \phi \land \psi$ follows by the induction hypothesis.
\item We will show equivalently that
\[
\CM,w\not\Vdash [G]\phi \quad \text{iff}\quad \CN,f(w)\not\Vdash [G]\phi.
\]
From left to right, assume $\CM,w\not\Vdash [G]\phi$. Hence, there exists $v\in W_\CM$ with $w\sim^\CM_G v$ and $\CM,v\not\Vdash \phi$. By \textbf{Forth} we have that $f(w)\sim^\CN_G f(v)$ and by the induction hypothesis we have $\CN,f(v)\not\Vdash \phi$. Therefore, it holds that $\CN, f(w) \not \Vdash [G]\phi$. For the other direction, assume $\CN,f(w)\not\Vdash [G]\phi$. Hence, there exists $v'\in W_\CN$ with $f(w) \sim^\CN_G v'$ and $\CN,v'\not\Vdash \phi$. By \textbf{Back} there exists $w'\in W_\CM$ such that $f(w') = v'$ and $w\sim^\CM_G w'$. By the induction hypothesis we obtain $\CM,w'\not\Vdash \phi$ and thus $\CM, w \not \Vdash [G]\phi$.\qedhere
\end{enumerate}
\end{proof}

Let $F = (W,\sim)$ be a $\kappa$-frame and consider a $\kappa$-model $\CM= (W,\sim,V)$ as well as its unravelled model $U(\CM) = (U(F),L)$. It is straightforward to show that the mapping $\mathsf{last}: H_F \rightarrow W$ that maps each $h\in H_F$ to $\ell(h) \in W$ is a functional bisimulation. We only show the claim for $\textbf{Back}$. Assume that $w\in H_F$ and $v'\in W$ such that $\ell(w) \sim_G v'$. It follows that, $w' = w\parallel (\ell(w),U,v')$ with $G \subseteq U$ is a valid history with $w' \approx_G w$ and $\ell(w') = v'$. Finally, the completeness of $\synminus$ with respect $\delta$-models is an immediate consequence.

\begin{theorem}\label{thm:delta_complete}
$\synminus$ is sound and complete with respect to $\delta$-models.
\end{theorem}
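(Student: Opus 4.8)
The plan is to treat soundness and completeness separately, in each case building on the corresponding result for $\kappa$-models. For soundness, recall that $\dsf$ implies $\kone$ (as noted right after Definition~\ref{def:d model}), so every $\delta$-model is a $\kappa$-model; hence if $\vdash\varphi$ then $\varphi$ is valid in all $\kappa$-models by Theorem~\ref{Thm:sound_kappa}, and in particular in all $\delta$-models. So soundness is immediate and requires no further work.

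For completeness I would argue by contraposition. Assume $\not\vdash\varphi$. By Theorem~\ref{thm:kappacomplete} there are a $\kappa$-model $\CM=(W,\sim,V)$ and a world $w\in W$ with $\CM,w\not\Vdash\varphi$. Set $F=(W,\sim)$ and pass to the unravelled model $U(\CM)=(U(F),L)$; by the corollary following Lemma~\ref{lem:std_group_knowledge}, $U(F)$ is a $\delta$-frame, so $U(\CM)$ is a $\delta$-model. Next I would produce a history $h\in H_F$ with $\ell(h)=w$: by $\mathsf{NE}$ there is an agent pattern $G_0$ with $w\sim_{G_0}w$, and since $\ag$ is finite there are only finitely many agent patterns, so we may pick $G\supseteq G_0$ maximal under set inclusion among those with $w\sim_G w$, whence the one-term sequence $h=(w,G,w)$ is a bona fide history with $\ell(h)=w$. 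Since $\mathsf{last}\colon H_F\to W$, $h\mapsto\ell(h)$, is a functional bisimulation (established just before the statement) and $\mathsf{last}(h)=w$, the lemma on functional bisimulations gives $U(\CM),h\not\Vdash\varphi$. Hence $\varphi$ is not $\delta$-valid, as required.

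I do not expect a genuine obstacle: the substantive work — that unravelling a $\kappa$-frame yields a $\delta$-frame (Lemmas~\ref{lem:prefix}--\ref{lem:std_group_knowledge} and the ensuing corollary) and that $\mathsf{last}$ is a functional bisimulation — is already in place, so this theorem is really a bookkeeping step that splices those pieces together. The one point to be careful about is that histories must be nonempty, which is why one needs the reflexive triple $(w,G,w)$ (with $G$ a maximal self-loop pattern, obtained from $\mathsf{NE}$ and finiteness of $\ag$) rather than an empty sequence in order to realise $w$ inside $U(F)$.
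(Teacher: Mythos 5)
Your proposal is correct and follows essentially the same route as the paper: soundness because every $\delta$-model is a $\kappa$-model, and completeness by unravelling the $\kappa$-countermodel from Theorem~\ref{thm:kappacomplete}, using that $U(F)$ is a $\delta$-frame and that $\mathsf{last}$ is a functional bisimulation. Your explicit construction of a one-step history $(w,G,w)$ with $G$ maximal (via $\mathsf{NE}$ and finiteness of $\ag$) just spells out a detail the paper leaves implicit in calling completeness ``an immediate consequence.''
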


\subsection{Completeness of $\syn$}\label{sec:syn_compl}
In this section, we will show that $\syn$ is sound and complete with respect to proper $\delta$-models. 
\begin{definition}
Let $\CM = (W,\sim,V)$ be a pre-model. For $w\in W$, we define
\[
\overline{w} :=  \{B \mid \exists G. B\in G \text{ and } w\in \alive(G)_\CM\}.
\]
\end{definition}

\begin{definition}[$\equiv$]
Let $\CM = (W,\sim,V)$ be a pre-model. We define the relation $\equiv$ on $W \times W$ as
\[
w \equiv v \quad\text{iff}\quad \overline{w}= \overline{v} \text{ and }  w\sim_{\overline{w}} v.
\]
\end{definition}

\begin{definition}[Proper]\label{def:proper}
Let $\CM = (W,\sim,V)$ be a pre-model. We say that $\CM$ is proper if and only if $w \equiv v$ implies $w = v$.
\end{definition}

If $G$ is an agent pattern, we denote the set of maximal elements of $G$ with $\max(G)$. Notice that $\max(\overline{w})$ always contains exactly one element.

\begin{remark}\label{rem:alive_dead}
Let $\CM = (W,\sim,V)$ be a $\kappa$-model. The following can be shown by using the properties of $\kappa$-models and the fact that $\overline{w}$ is maximal under subsets: 
\begin{enumerate}
\item  $\CM,w \Vdash \lalive(\overline{w})$ and $\CM,w \Vdash \dead(\overline{w}^C)$ are always the case;
\item $\CM,w \Vdash \lalive(G) \land \dead(G^C)$ if and only if $\max(G) = \max(\overline{w})$;
\item if $\max(G) = \max(\overline{w})$, then $w\sim_G v$ and $\CM,v \Vdash \dead(G^C)$ imply $\overline{w} = \overline{v}$.
\end{enumerate}
\end{remark}

\begin{definition}\label{def:proper_construction}
Let $\CM = (W,\sim,V)$ be a pre-model. We define the model $~{\CM^\rho = (W^\rho,\sim^\rho, V^\rho)}$ as
\begin{enumerate}
\item $W^\rho = W / \equiv$ is the set of equivalence classes of $\equiv$;
\item $[w]\sim^\rho_G [v]$ if and only if $ w \sim_G v$;
\item for any $p\in \mathsf{Prop}$, $[w]\in V^\rho(p)$ if and only if $w \in V(p)$.
\end{enumerate}
$\CM^\rho$ is well-defined, if, for any two worlds  $w,v\in W$ and $p\in \mathsf{Prop}$,
\[ 
w\equiv v \text{ implies }w\in V(p) \iff v\in V(p).
\]
\end{definition}

\begin{lemma}\label{lem:proper}
Let $\CM = (W,\sim,V)$ be a pre-model such that $\CM^\rho$ is well-defined. We find that:
\begin{enumerate}
\item $\CM^\rho$ is proper;
\item $\CM,w \Vdash \phi \quad \text{if and only if}\quad \CM^\rho, [w] \Vdash \phi$.
\end{enumerate}
\end{lemma}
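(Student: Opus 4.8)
The plan is to treat the two clauses separately: clause~1 (properness of $\CM^\rho$) is a direct unfolding of the definitions once we know how the constructions in $\CM^\rho$ relate to those in $\CM$, and clause~2 (modal equivalence) is a routine induction on the structure of $\phi$. So the first thing I would do is record a small bridging observation, and then the rest is bookkeeping.

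The bridging observation is: for every $w\in W$ and every agent pattern $G$, we have $[w]\in\alive(G)_{\CM^\rho}$ iff $w\in\alive(G)_\CM$. This is immediate from the definition of $\sim^\rho_G$, since $[w]\in\alive(G)_{\CM^\rho}$ means $[w]\sim^\rho_G[w]$, which holds iff $w\sim_G w$, i.e.\ $w\in\alive(G)_\CM$. Along the way one checks that $\sim^\rho_G$ really is well-defined on equivalence classes; when $\CM$ is a $\kappa$-model this follows from anti-monotonicity $\kthree$ together with the fact that $w\sim_G v$ forces $G\subseteq\overline{w}$, so that $w\equiv w'$ (hence $w\sim_{\overline{w}}w'$) transports $\sim_G$-links across $\equiv$-classes. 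From the observation we get $\overline{[w]}=\overline{w}$ for all $w$, where $\overline{[w]}$ is computed in $\CM^\rho$.

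For clause~1, suppose $[w]\equiv[v]$ holds in $\CM^\rho$; I must show $[w]=[v]$. By definition this means $\overline{[w]}=\overline{[v]}$ and $[w]\sim^\rho_{\overline{[w]}}[v]$. Using $\overline{[w]}=\overline{w}$, $\overline{[v]}=\overline{v}$, and the definition of $\sim^\rho$, this says exactly $\overline{w}=\overline{v}$ and $w\sim_{\overline{w}}v$, i.e.\ $w\equiv v$ in $\CM$. Since $W^\rho=W/{\equiv}$, this gives $[w]=[v]$, so $\CM^\rho$ is proper. For clause~2, I would induct on $\phi$: the atom case is the well-definedness hypothesis plus the definition of $V^\rho$ ($\CM^\rho,[w]\Vdash p$ iff $[w]\in V^\rho(p)$ iff $w\in V(p)$ iff $\CM,w\Vdash p$); negation and conjunction are immediate from the induction hypothesis; and for $\phi=[G]\psi$ we have that $\CM^\rho,[w]\Vdash[G]\psi$ iff $\CM^\rho,[v]\Vdash\psi$ for every $[v]$ with $[w]\sim^\rho_G[v]$, which by the definition of $\sim^\rho_G$ is the quantification over all $v\in W$ with $w\sim_G v$, and then by the induction hypothesis is equivalent to $\CM,v\Vdash\psi$ for all such $v$, i.e.\ to $\CM,w\Vdash[G]\psi$.

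The main obstacle is not any individual step but the preliminary care: before the induction in clause~2 may legitimately quantify over equivalence classes, one must be sure that $\sim^\rho_G$ (and with it $\alive(G)_{\CM^\rho}$, $\overline{[w]}$, and $V^\rho$) does not depend on the choice of representatives, and this is precisely where the structural properties of $\CM$ (anti-monotonicity, the shape of $\overline{w}$) are used. Once that is in place, both clauses are short.
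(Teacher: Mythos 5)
Your proof is correct and follows essentially the same route as the paper: clause 1 by unfolding $[w]\equiv[v]$ into $w\equiv v$ via $\overline{[w]}=\overline{w}$, and clause 2 by induction on $\phi$ with the modal case handled through the definitional equivalence $[w]\sim^\rho_G[v]$ iff $w\sim_G v$. Your additional check that $\sim^\rho_G$ does not depend on the choice of representatives (via $\kthree$ and $G\subseteq\overline{w}$) is left implicit in the paper and is a sensible precaution, since the lemma is only ever applied to models where that argument goes through.
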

\begin{proof}
In order to show that $\CM^\rho$ is proper, observe that $[w]\equiv[v]$ implies $w \equiv v$. Indeed, since $[w]\sim^\rho_G [v]$ if and only if $w\sim_G v$, it holds that $\overline{[w]}=\overline{w}$ and $w\sim_{\overline{w}} v$. Hence, $w$ and $v$ belong to the same equivalence class, i.e., $[w] = [v]$.

For the second claim, we show the direction from right to left by induction on the length of $\phi$. The other direction is symmetric. The base case follows because $\CM^\rho$ is well defined. The only case left is $\phi = [G]\psi$. Assume $~{\CM^\rho,[w] \Vdash [G]\psi}$. We need to show $\CM , w \Vdash [G]\psi$. Let $v\in W$ be such that  $w\sim_G v$, i.e., $[w] \sim^\rho_G [v]$. By the definition of truth $\CM^\rho, [v] \Vdash \psi$ and by the induction hypothesis $\CM,v \Vdash \psi$, which concludes the proof.
\end{proof}

Proving soundness and completeness of $\syn$ with respect to proper $\kappa$-models simply requires us to show that proper $\kappa$-models satisfy (P) and that the canonical model is proper.

\begin{lemma}[Soundness]
$\syn$ is sound with respect to proper $\kappa$-models.
\end{lemma}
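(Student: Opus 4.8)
The plan is to exploit that $\synminus$ has already been shown sound with respect to $\kappa$-models in Theorem~\ref{Thm:sound_kappa}, together with the trivial observation that every proper $\kappa$-model is in particular a $\kappa$-model. Hence every axiom of $\synminus$ is valid on all proper $\kappa$-models, and modus ponens and $[G]$-necessitation preserve validity by exactly the argument given in the proof of Theorem~\ref{Thm:sound_kappa} (that argument only uses the definition of truth, so it applies to any class of pre-models). The only remaining obligation is therefore to verify that the axiom~\eqref{eq:P},
\[
\lalive(G) \land \dead(G^C) \land \phi \rightarrow [G]( \dead(G^C)\rightarrow \phi),
\]
holds at every world of every proper $\kappa$-model.

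So I would fix a proper $\kappa$-model $\CM = (W,\sim,V)$, a world $w\in W$, and assume $\CM,w\Vdash \lalive(G)\land\dead(G^C)\land\phi$. Pick an arbitrary $v\in W$ with $w\sim_G v$ and $\CM,v\Vdash\dead(G^C)$; it then suffices to prove $v=w$, since $\CM,w\Vdash\phi$ would immediately give $\CM,v\Vdash\phi$. From $\CM,w\Vdash\lalive(G)\land\dead(G^C)$ and Remark~\ref{rem:alive_dead}(2) we obtain $\max(G)=\max(\overline{w})$, and then Remark~\ref{rem:alive_dead}(3), applied to $w\sim_G v$ and $\CM,v\Vdash\dead(G^C)$, yields $\overline{w}=\overline{v}$.

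It remains to strengthen $w\sim_G v$ to $w\sim_{\overline{w}} v$, so that $w\equiv v$ and properness (Definition~\ref{def:proper}) forces $w=v$. For this I would use that $\max(\overline{w})$ is a singleton $\{A\}$, that $A\in G$ since $\max(G)=\max(\overline{w})$, that $\overline{w}$ is downward closed under nonempty subsets (a consequence of $\kfour$, as already used in Remark~\ref{rem:alive_dead}) so every $B\in\overline{w}$ satisfies $B\subseteq A$, and that $G\subseteq\overline{w}$ because $\CM,w\Vdash\lalive(G)$ gives $w\in\alive(G)$ by Lemma~\ref{lem:sound:alive}. Enumerating $\overline{w}\setminus G$ and applying $\kfour$ once for each of its elements (each being a subset of $A\in G$, hence of a member of the pattern built up so far) gives $\sim_G\subseteq\sim_{G\cup\overline{w}}=\sim_{\overline{w}}$, hence $w\sim_{\overline{w}} v$. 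Together with $\overline{w}=\overline{v}$ this is precisely $w\equiv v$, and properness gives $w=v$, completing the verification of~\eqref{eq:P}.

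The hard part is this last step: organising the iterated application of $\kfour$ to pass from the arbitrary pattern $G$ to its downward closure $\overline{w}$, which rests on the two structural facts that $\overline{w}$ is generated by a single maximal element and that $G$ itself lies inside $\overline{w}$. Everything else is either inherited verbatim from the soundness proof of $\synminus$ or a direct invocation of Remark~\ref{rem:alive_dead}; in particular it is worth emphasising that it is the full strength of Remark~\ref{rem:alive_dead}, and not properness alone, that makes axiom~\eqref{eq:P} go through.
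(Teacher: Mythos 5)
Your proof is correct and takes essentially the same route as the paper's: reduce everything to axiom (P) via the soundness of $\synminus$ (Theorem~\ref{Thm:sound_kappa}), then use Remark~\ref{rem:alive_dead} to obtain $\max(G)=\max(\overline{w})$ and $\overline{w}=\overline{v}$, and invoke properness to conclude $w=v$. Your extra step upgrading $w\sim_G v$ to $w\sim_{\overline{w}}v$ by iterating $\kfour$ (using $G\subseteq\overline{w}$ and the unique maximal element of $\overline{w}$) just makes explicit a detail the paper elides when it writes ``i.e., $w\equiv v$''.
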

\begin{proof}
We showed the cases for $\synminus$ in the proof of Theorem~\ref{Thm:sound_kappa}. Hence, we only need to show the case for (P). Let $\CM = (W,\sim,V)$ be an arbitrary proper $\kappa$-model. Assume $\CM,w \Vdash \lalive(G) \land \dead(G^C) \land \phi$. By  Remark~\ref{rem:alive_dead}, we find that $\max(G) = \max(\overline{w})$. This implies that for any $v \in W$ with $w \sim_G v$ such that $\CM, v  \Vdash \dead(G^C)$, we have $\overline{w} = \overline{v}$, i.e., $w\equiv v$ and by the properness of $\CM$, it follows that $w=v$. Therefore $\CM ,v \Vdash \phi$.
\end{proof}

\begin{theorem}
The canonical model $\CM^c$ for $\syn$ is a proper $\kappa$-model.
\end{theorem}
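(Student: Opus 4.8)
The plan is to notice first that the canonical model $\CM^c = (W^c,\sim^c,V^c)$ built over the maximal $\syn$-consistent sets is still a $\kappa$-model: every clause (symmetry, transitivity, $\kone$--$\kfour$, $\mathsf{NE}$) was derived from the axioms (B), (4), (Union), (Clo), (Mono), (Equiv), (NE), all of which belong to $\syn$, so those proofs transfer verbatim, and Lemma~\ref{lem:alive} remains valid as well. Hence the only genuinely new task is to establish \emph{properness}, i.e.\ that $\Gamma \equiv \Delta$ implies $\Gamma = \Delta$ for $\Gamma,\Delta \in W^c$.

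As preparation I would pin down $\overline{\Gamma}$ inside $\CM^c$. Combining Lemma~\ref{lem:alive} with the $\kappa$-valid (hence $\syn$-provable, via Theorem~\ref{thm:kappacomplete}) implication $\lalive(G)\to\lalive(\{B\})$ for $B\in G$, one gets $B \in \overline{\Gamma}$ if and only if $\lalive(\{B\}) \in \Gamma$. Since $\ag$ is finite, $\overline{\Gamma}$ is a finite agent pattern $\{B_1,\dots,B_k\}$, and iterating the axiom (Union) on $\lalive(\{B_1\}),\dots,\lalive(\{B_k\}) \in \Gamma$ yields $\lalive(\overline{\Gamma}) \in \Gamma$. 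Furthermore, for each $C \in \overline{\Gamma}^C$ we have $C \notin \overline{\Gamma}$ (take $A = C$ in the definition of the complement), so $\lalive(\{C\}) \notin \Gamma$ and thus $\neg\lalive(\{C\}) \in \Gamma$; as $\overline{\Gamma}^C$ is finite this gives $\dead(\overline{\Gamma}^C) \in \Gamma$. (These two membership facts are the syntactic mirror of Remark~\ref{rem:alive_dead}.) Exactly the same holds with $\Delta$ in place of $\Gamma$.

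Now assume $\Gamma \equiv \Delta$, that is, $\overline{\Gamma} = \overline{\Delta} =: G$ and $\Gamma \sim^c_G \Delta$, the latter meaning $\Gamma \setminus [G] \subseteq \Delta$. Since $\Gamma$ and $\Delta$ are maximal consistent it suffices to prove $\Gamma \subseteq \Delta$. Take any $\phi \in \Gamma$. From $\lalive(G),\ \dead(G^C),\ \phi \in \Gamma$ and the axiom (P) we obtain $[G](\dead(G^C) \to \phi) \in \Gamma$, hence $\dead(G^C) \to \phi \in \Delta$ because $\Gamma \sim^c_G \Delta$. Since also $\dead(G^C) = \dead(\overline{\Delta}^C) \in \Delta$, modus ponens inside the maximal consistent set $\Delta$ gives $\phi \in \Delta$. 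Therefore $\Gamma = \Delta$, so $\equiv$ is the identity on $W^c$ and $\CM^c$ is proper, as required by Definition~\ref{def:proper}.

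The $\kappa$-model part is a bookkeeping observation with no real content; the work is in the middle paragraph and the application of (P). The step I expect to be the crux is realising that one must feed \emph{precisely} $G = \overline{\Gamma}$ into (P), so that $\dead(G^C)$ lands in both $\Gamma$ and $\Delta$ — this is exactly what allows $\phi$ to be transported along $\sim^c_G$ — and that forming $\lalive(\overline{\Gamma})$ and $\dead(\overline{\Gamma}^C)$ as actual formulas depends quietly on the finiteness of $\ag$.
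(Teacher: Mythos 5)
Your proposal is correct and follows essentially the same route as the paper: the $\synminus$ clauses carry over from the earlier canonical-model theorem, and properness is obtained by feeding $G=\overline{\Gamma}$ into axiom (P) so that $\dead(G^C)\to\phi$ is transported along $\sim^c_G$ and discharged in $\Delta$. The only difference is presentational: where the paper cites Remark~\ref{rem:alive_dead} and Lemma~\ref{lem:alive} to get $\lalive(G)\land\dead(G^C)\in\Gamma$, you derive these memberships syntactically via (Union) and the finiteness of $\ag$, which is a fine (slightly more explicit) way of establishing the same facts.
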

\begin{proof}
We showed the cases for $\synminus$ in the proof of Theorem~\ref{thm:kappacomplete}. Hence, it suffices to show that $\CM^c$ is proper. Let $\Gamma,\Delta \in W^c$ such that $\Gamma \equiv \Delta$, i.e., $\overline{\Gamma} = \overline{\Delta} = G$ and $\Gamma \sim^c_G\Delta$. We now show $\Gamma = \Delta$, i.e., for any $\phi \in \CL$, $\phi \in \Gamma$ if and only if $\phi \in \Delta$. We show the direction from left to right. The other direction is symmetric.  Assume $\phi \in \Gamma$. Since $\Gamma$ is a maximal consistent set, it follows by Remark~\ref{rem:alive_dead} and Lemma~\ref{lem:alive} that $\lalive(G)\land\dead(G^C)\land\phi\in \Gamma$. Furthermore, by (P), we have $[G](\dead(G^C) \rightarrow \phi) \in \Gamma$ and thus $\dead(G^C) \rightarrow \phi\in \Delta$. By assumption, $\dead(G^C)\in \Delta$ and thus $\phi\in \Delta$, because $\Delta$ is a maximal consistent set. 
\end{proof}

\begin{corollary}
$\syn$ is sound and complete with respect to proper $\kappa$-models.
\end{corollary}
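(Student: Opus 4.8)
The plan is to read the corollary off from the two results immediately above it, exactly as ``completeness follows as usual'' was asserted for $\synminus$. Soundness has already been established by the Soundness Lemma for proper $\kappa$-models, so only completeness needs an argument, and that argument is the standard canonical-model one instantiated with the $\syn$-canonical model.

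First I would argue by contraposition. Suppose $\not\vdash \varphi$ in $\syn$. Then $\{\neg\varphi\}$ is $\syn$-consistent, and by a Lindenbaum construction it extends to a maximal consistent set $\Gamma \in W^c$ with $\neg\varphi \in \Gamma$. By the Theorem just proved, the canonical model $\CM^c$ for $\syn$ is a proper $\kappa$-model. Next I would invoke the Truth Lemma --- the same statement proved for $\synminus$, whose proof uses only maximal consistency, the axiom (K), and $[G]$-necessitation, all of which belong to $\syn$, so it transfers verbatim --- to conclude $\CM^c, \Gamma \Vdash \neg\varphi$. Hence $\CM^c \not\Vdash \varphi$, and since $\CM^c$ is a proper $\kappa$-model, $\varphi$ is not $\kappa$-valid on the class of proper $\kappa$-models. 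The contrapositive yields completeness, and combining it with the Soundness Lemma finishes the corollary.

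I do not expect a genuine obstacle; the one point worth spelling out is why the $\synminus$ machinery applies without change. The Canonical Model definition, Lemma~\ref{lem:alive} identifying $\alive(G)_{\CM^c}$ with $\lalive(G)\in\Gamma$, and the Truth Lemma are all phrased for an arbitrary maximal-consistent-set construction and never mention (P); the extra axiom (P) enters only via the preceding Theorem, where it is precisely what makes $\CM^c$ proper. So the step that might look as if it needs fresh work --- passing from validity on proper $\kappa$-models back to derivability in $\syn$ --- is already supplied by previously proved lemmas applied to the $\syn$-canonical model, and the corollary is immediate.
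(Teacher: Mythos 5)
Your proposal is correct and follows exactly the route the paper intends: the corollary is read off from the preceding Soundness Lemma and the Theorem that the $\syn$-canonical model is a proper $\kappa$-model, with the $\synminus$ canonical-model machinery (Definition of $\CM^c$, Lemma~\ref{lem:alive}, Truth Lemma) transferring unchanged since none of it depends on (P). Nothing is missing.
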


Lemma \ref{lem:construction} below shows that 
the construction of Definition~\ref{def:proper_construction} can be applied to the unravelled canonical model. 

\begin{lemma}\label{lem:construction}
Let $\CM^c$ be the canonical model for $\syn$, and let $~{U(\CM^c) = (H,L)}$ be the unravelled canonical model. For any $h,h'\in H$ with $h \equiv h'$ and $p\in\mathsf{Prop}$ it holds that  
\[
h\in L(p) \quad \text{iff}\quad h' \in L(p).
\]
\end{lemma}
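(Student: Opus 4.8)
The plan is to push the statement down to the canonical model $\CM^c$ itself, for which properness has already been established, using the fact that passing to the last world of a history commutes with the predicate $\alive(\cdot)$.

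First I would record an auxiliary fact: for every history $h\in H$ and every agent pattern $G$,
\[
h\in\alive(G)_{U(\CM^c)} \quad\text{iff}\quad \ell(h)\in\alive(G)_{\CM^c}.
\]
This follows by chaining three equivalences: $h\in\alive(G)_{U(\CM^c)}$ iff $U(\CM^c),h\Vdash\lalive(G)$ by Lemma~\ref{lem:sound:alive}; since $\mathsf{last}:H\to W^c$ is a functional bisimulation from $U(\CM^c)$ to $\CM^c$ (shown just above this lemma) and $\lalive(G)$ is a formula, this holds iff $\CM^c,\ell(h)\Vdash\lalive(G)$; and finally, again by Lemma~\ref{lem:sound:alive}, iff $\ell(h)\in\alive(G)_{\CM^c}$. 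Unwinding the definition of $\overline{(\cdot)}$, this immediately gives $\overline{h}=\overline{\ell(h)}$ for every $h\in H$.

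Now suppose $h\equiv h'$, that is $\overline{h}=\overline{h'}$ — call this common agent pattern $G$ — and $h\approx_G h'$. By Remark~\ref{rem:h_reflexive}, $h\approx_G h'$ yields $\ell(h)\sim^c_G\ell(h')$. Combined with the auxiliary fact, $\overline{\ell(h)}=\overline{h}=G=\overline{h'}=\overline{\ell(h')}$ and $\ell(h)\sim^c_{\overline{\ell(h)}}\ell(h')$, which is precisely $\ell(h)\equiv\ell(h')$ in $\CM^c$. Since the canonical model $\CM^c$ for $\syn$ is a proper $\kappa$-model (proved above), we conclude $\ell(h)=\ell(h')$. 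Then, by Definitions~\ref{def:u(f)-model} and~\ref{def:cmodel}, $h\in L(p)$ iff $p\in\ell(h)$ iff $p\in\ell(h')$ iff $h'\in L(p)$, as required.

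I expect no genuine obstacle here; the one point that deserves care is the auxiliary fact $\overline{h}=\overline{\ell(h)}$, since it mixes the two ambient models ($U(\CM^c)$ and $\CM^c$) and one must invoke the bisimulation rather than argue directly on the unravelled frame. Once that identity is in place, the rest is a routine unwinding of the definitions of $\equiv$, $\approx_G$, and $L$.
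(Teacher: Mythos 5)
Your proof is correct, but it takes a genuinely different route from the paper's. The paper argues directly inside the canonical model: from $h\equiv h'$ it extracts $\Gamma=\ell(h)\sim^c_G\Delta=\ell(h')$ with $G=\overline{h}$ (via Remark~\ref{rem:h_reflexive}), asserts $\lalive(G)\land\dead(G^C)\land p\in\Gamma$, applies axiom (P) to obtain $[G](\dead(G^C)\rightarrow p)\in\Gamma$, hence $\dead(G^C)\rightarrow p\in\Delta$, and concludes $p\in\Delta$ since $\dead(G^C)\in\Delta$ --- in effect re-running, for atomic $p$, the same (P)-based argument used to prove that $\CM^c$ is proper. You instead factor the statement through that properness theorem: you first establish $\overline{h}=\overline{\ell(h)}$ by composing Lemma~\ref{lem:sound:alive} with the functional bisimulation $\mathsf{last}$, then deduce $\ell(h)\equiv\ell(h')$ from $h\equiv h'$ (using the same Remark~\ref{rem:h_reflexive} step as the paper), and invoke properness of $\CM^c$ to get $\ell(h)=\ell(h')$, from which the claim --- indeed agreement on all formulas, not just atoms --- is immediate. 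Your route buys economy and a stronger intermediate conclusion, and it makes explicit the identification $\overline{h}=\overline{\ell(h)}$, which the paper's proof uses only implicitly when it claims $\lalive(G)\land\dead(G^C)\in\Gamma$ for $G=\overline{h}$; the paper's direct argument, in turn, avoids appealing to the bisimulation machinery at this point, needing only Remarks~\ref{rem:h_reflexive} and~\ref{rem:alive_dead}. Both arguments ultimately rest on axiom (P), yours indirectly through the properness theorem.
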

\begin{proof}
Let $\CM^c = (W^c,\sim^c,V^c)$ be the canonical model for $\syn$. Furthermore, consider two histories $h,h'$ of $U(\CM^c)$ with $h \equiv h'$, i.e., $ \overline{h} = \overline{h'} = G$ and $h \approx_G h'$. Let $\Gamma = \ell(h)$ and $\Delta = \ell(h')$. By Remark~\ref{rem:h_reflexive}, $\ell(h) \sim_G \ell(h')$, i.e., $\Gamma \sim^c_G \Delta$. We now show the direction from left to right. The other direction is symmetric. Let $p \in \mathsf{Prop}$ with $h \in L(p)$, i.e., $ \Gamma \in V^c(p)$. Since $\Gamma$ is a maximal consistent set it follows by Remark~\ref{rem:alive_dead} and Lemma~\ref{lem:alive} that $
~{\lalive(G)\land \dead(G^c) \land p \in \Gamma}$.
By (P), $[G](\dead(G^C) \rightarrow p) \in \Gamma$. Therefore, $\dead(G^C) \rightarrow p \in \Delta$. Lastly, by assumption and Remark \ref{rem:alive_dead}, we have $\dead(G^C)\in \Delta$ and since $\Delta$ is a maximal consistent set, it follows that $ p\in \Delta$, i.e., $\ell(h') \in V^c(p) $.
\end{proof}

\begin{corollary}
Let $\CM^c$ be the canonical model for $\syn$. It holds that $U(\CM^c)^\rho$ is proper and
\[
U(\CM^c)^\rho , [h] \Vdash \phi \quad\text{iff}\quad U(\CM),h\Vdash \phi.
\]
\end{corollary}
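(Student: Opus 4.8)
The plan is to obtain this corollary as a direct instantiation of Lemma~\ref{lem:proper}, with its well-definedness hypothesis supplied by Lemma~\ref{lem:construction}. First I would observe that $U(\CM^c)$ is a pre-model in the sense of Definition~\ref{def:pre-model}: it is the unravelling of the canonical $\kappa$-model $\CM^c$, hence (by the theorems of Section~\ref{subsec:dmodel}) a $\delta$-frame, and it carries the valuation $L$ of Definition~\ref{def:u(f)-model}. Therefore the construction $\CM \mapsto \CM^\rho$ of Definition~\ref{def:proper_construction} applies with $\CM := U(\CM^c)$, producing $U(\CM^c)^\rho$, whose worlds are the $\equiv$-classes $[h]$, whose relations are $[h]\sim^\rho_G[h']$ iff $h\approx_G h'$, and whose valuation is $[h]\in V^\rho(p)$ iff $h\in L(p)$.

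Next I would verify the well-definedness condition of Definition~\ref{def:proper_construction}, namely that $h\equiv h'$ implies $h\in L(p) \iff h'\in L(p)$ for every $p\in\mathsf{Prop}$. This is exactly the statement of Lemma~\ref{lem:construction}. Hence $U(\CM^c)^\rho$ is well-defined, and Lemma~\ref{lem:proper} is applicable to $\CM = U(\CM^c)$: part~(1) yields that $U(\CM^c)^\rho$ is proper, and part~(2) yields $U(\CM^c),h \Vdash \phi$ if and only if $U(\CM^c)^\rho,[h] \Vdash \phi$ for every formula $\phi$, which is precisely the displayed equivalence (reading the right-hand side $U(\CM),h\Vdash\phi$ as $U(\CM^c),h\Vdash\phi$).

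I do not expect any real obstacle at this stage: the substantive work has already been done in Lemma~\ref{lem:construction} (which invokes axiom~(P), Remark~\ref{rem:alive_dead}, Remark~\ref{rem:h_reflexive}, and the maximal-consistency/canonical-model machinery) and in Lemma~\ref{lem:proper} (whose modal case is a short induction using the definition of $\sim^\rho$ together with the truth conditions). The only point requiring care is purely bookkeeping: checking that the relation $\equiv$ and the set $\overline{(\cdot)}$ used in Lemma~\ref{lem:construction} are exactly the ones induced on $U(\CM^c)$ as a pre-model (so that $\overline{h}$ is computed from the $\approx_G$-relations of $U(\CM^c)$), and that $[h]$ in the corollary denotes the $\equiv$-class of $h$ in $U(\CM^c)$. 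With this matching explicit, combining the two lemmas gives the corollary immediately.
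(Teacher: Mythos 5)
Your proposal is correct and matches the paper's intended argument exactly: the corollary is meant to follow immediately by applying Lemma~\ref{lem:proper} to $U(\CM^c)$, with the well-definedness hypothesis of Definition~\ref{def:proper_construction} discharged by Lemma~\ref{lem:construction}, which is precisely what you do. Your extra bookkeeping remarks (that $U(\CM^c)$ is a pre-model and that $\equiv$ and $\overline{(\cdot)}$ are computed from the $\approx_G$-relations) are consistent with the paper and add no divergence.
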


\begin{corollary}\label{thm:soundcompete_properdmodels}
$\syn$ is sound and complete with respect to proper $\delta$-models.
\end{corollary}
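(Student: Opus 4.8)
The plan is to derive the statement by combining two ingredients already established: that $\syn$ is sound and complete with respect to proper $\kappa$-models, and that the unravelling of a $\kappa$-frame is a $\delta$-frame; the properness construction $(\cdot)^\rho$ of Definition~\ref{def:proper_construction} is then used to bridge the gap between the two. For \emph{soundness}, I would note that since $\dsf$ implies $\kone$, every $\delta$-model is a $\kappa$-model, and properness is a property of the underlying pre-model that carries over unchanged; hence every proper $\delta$-model is a proper $\kappa$-model, and soundness of $\syn$ over proper $\kappa$-models immediately gives soundness over the subclass of proper $\delta$-models. The restriction to \emph{proper} models is essential here, since (P) fails in non-proper $\delta$-models.

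For \emph{completeness}, I would take $\varphi$ with $\not\vdash\varphi$, extend $\{\neg\varphi\}$ to a maximal consistent set $\Gamma$, and apply the truth lemma to the canonical model $\CM^c$ of $\syn$ (which has already been shown to be a proper $\kappa$-model) to obtain $\CM^c,\Gamma\Vdash\neg\varphi$. Next I would unravel: since the frame of $\CM^c$ is a $\kappa$-frame, $U(\CM^c)$ is a $\delta$-model. As $\mathsf{last}$ is a functional bisimulation, truth is preserved, so choosing any history $h_0$ with $\ell(h_0)=\Gamma$ — for example $h_0=(\Gamma,G_0,\Gamma)$ where $G_0$ is a set-inclusion-maximal agent pattern with $\Gamma\sim^c_{G_0}\Gamma$, which exists by $\mathsf{NE}$ and the finiteness of the collection of agent patterns — yields $U(\CM^c),h_0\Vdash\neg\varphi$.

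Finally I would pass to the proper quotient. Lemma~\ref{lem:construction} shows that the construction of Definition~\ref{def:proper_construction} applies to $U(\CM^c)$, so $U(\CM^c)^\rho$ is well-defined, and Lemma~\ref{lem:proper} gives that it is proper and that $U(\CM^c)^\rho,[h_0]\Vdash\neg\varphi$. The remaining point is that $U(\CM^c)^\rho$ is again a $\delta$-model. I would first check that $[w]\sim^\rho_G[v]\iff w\sim_G v$ is well-defined on $\equiv$-classes: if $w\sim_G v$ then $G\subseteq\overline w\cap\overline v$, so by anti-monotonicity ($\kthree$) the hypotheses $w'\equiv w$ and $v'\equiv v$ give $w'\sim_G w\sim_G v\sim_G v'$, whence $w'\sim_G v'$ by transitivity. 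Once this is in place, each of $\ktwo$, $\kthree$, $\kfour$, $\mathsf{NE}$ and $\dsf$ transfers to the quotient at once, since each is expressed as a biconditional between relations $\sim_G$ that are faithfully mirrored by the $\sim^\rho_G$. Hence $U(\CM^c)^\rho$ is a proper $\delta$-model refuting $\varphi$, which completes the argument.

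I expect the main obstacle to be precisely this last verification, namely that $(\cdot)^\rho$ preserves the defining conditions of a $\delta$-frame — most delicately, that $\sim^\rho_G$ is well-defined on $\equiv$-classes. The corollary immediately preceding the statement only records that $U(\CM^c)^\rho$ is proper and that it agrees on truth with the unravelled canonical model, so the preservation of the $\delta$-conditions under the quotient is the one genuinely new step; everything else (the canonical model for $\syn$, unravelling into a $\delta$-frame, and the $\mathsf{last}$ bisimulation) has already been carried out earlier in the paper.
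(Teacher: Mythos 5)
Your proposal is correct and follows essentially the same route as the paper: completeness via the canonical model for $\syn$ (a proper $\kappa$-model), unravelling to a $\delta$-model, and then passing to the proper quotient $U(\CM^c)^\rho$ using Lemma~\ref{lem:construction} and Lemma~\ref{lem:proper}, with soundness coming from the fact that proper $\delta$-models are proper $\kappa$-models. The only addition is your explicit check that $\sim^\rho_G$ is well-defined on $\equiv$-classes and that the $\delta$-frame conditions survive the quotient, a step the paper leaves implicit; your verification of it (via $G\subseteq\overline{w}$, anti-monotonicity, and transitivity) is correct.
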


\subsection{$\delta$-translations}\label{subsec:translation}
In this section, we show how any proper $\delta$-model can be transformed to an equivalent simplicial model. Completeness of $\syn$ with respect to simplicial models follows immediately.

\begin{definition}[$\delta$-translation]
Let $\CM = (W, \sim, V)$ be a $\delta$-model and consider a simplicial model $\CC = (\BC,L)$. $\CC$ is $\delta$-translation of $\CM$ if and only if there exists a mapping $T: W \rightarrow \BC$ such that for all $w,v\in W$
\begin{enumerate}
\item $G \subseteq (T(w) \cap T(v) )^\circ $ iff $ w\sim_Gv$;
\item $T(w) \in L(p) $ iff $ w \in V(p)$.
\end{enumerate} 
\end{definition}

\begin{algo*}[ht]{}
  \vbox{
  \small
  \begin{numbertabbing}
    xxxx\=xxxx\=xxxx\=xxxx\=xxxx\=xxxx\=MMMMMMMMMMMMMMMMMMM\=\kill
    \textbf{Input} \label{ }\\
    \> \text{A proper $\delta$-frame }$F = (W,\sim)$\label{}\\\\
    \textbf{Initialisation} \label{alg:init:start}\\
    \> $w^*_i := \max \{ A \subseteq \ag \mid w_i \sim_{\{A\}} w_i \}$\label{}\\
    \> $S_i := \{(A,i) \mid A\subseteq w^*_i\}$ for $1\leq i \leq n$\label{alg:init:end}\\\\
    \textbf{Transformation} \label{alg:trans:start} \\ 
    \>$i=1$ \label{ }\\
    \>$j=1$ \label{ }\\
    \> \textbf{while} $w_i$ exists \textbf{do} \label{ }\\
    \>\> $j \gets i+1$ \label{}\\
     \>\> \textbf{while} $w_j$ exists \textbf{do} \label{ }\\
    \>\>\>\textbf{for each} $G\subseteq \pow(\ag) \setminus \{\emptyset\}$ \textbf{with} $w_i \sim_G w_j$\textbf{ do} \label{ }\\
    \>\>\>\>\textbf{for each} $B\in G$ \textbf{ do}\label{}\\
    \>\>\>\>\> $k  \gets \{l \mid w_l \sim_{\{B\}}w_i\}$ \label{alg:smallest}\\
    \>\>\>\>\> $S_j \gets S_j \setminus \{(B,j)\}$ \label{alg:remove}\\
    \>\>\>\>\> $S_j \gets S_j \cup \{(B,k)\}$  \label{alg:add}\\
    \>\>\> $j \gets j+1$ \label{ }\\
    \>\> $i \gets i+1$ \label{ }\\\\
   \textbf{Output} \label{ }\\
  	 \> $\BC = \{S_i \mid w_i \in W\}$ \label{alg:output}
    \end{numbertabbing}
  }
  \caption{$\delta$-translation}\label{alg:T}
  \label{alg:T}
\end{algo*}

Let $\CM = (W,\sim,V)$ be a proper $\delta$-model and $F = (W,\sim)$ be its $\delta$-frame. We assume an arbitrary enumeration of worlds and write $w_i$ for the $i$-th world. In what follows, we denote the simplicial image of $w_i$ under a mapping $T$, i.e.,~$T(w_i)$, with $S_i$. The next lemmas show that there exists a simplicial model $\CC$ that is a $\delta$-translation of $\CM$. Let $w^*_i \subseteq \ag$ be the maximum set of all agents alive in $w_i$.  Construction~\ref{alg:T} on input $F$ first initialises a simplex $S_i = \{(A,i) \mid A\subseteq w^*_i\}$ for each world $w_i$ (lines \ref{alg:init:start} to \ref{alg:init:end}). At this point, no two different simplices $S_i$ and $S_j$ are connected. Throughout the transformation phase (lines \ref{alg:trans:start} to \ref{alg:add}), Construction~\ref{alg:T} glues related simplices together according to the relation $\sim$ of $F$. It does so by iterating through all pairs $(w_i,w_j)$ with $i <  j$, which suffices by symmetry of~$\sim_G$, and checking for all $G$, if $(w_i,w_j) \in  \sim_G$. If so, for each $B\in G$, $(B,j)$ is replaced with $(B,k)\in S_i$, where $k$ is the smallest index such that $w_k \sim_{\{B\}} w_i$. After the replacement, the simplicies $S_i$ and $S_j$ are connected. Example \ref{example:alg:T} shows a simple execution.

\begin{remark}
Construction~\ref{alg:T} could formally be given by a corecursion. However, we think that the presentation as an algorithm makes it easier to understand.
\end{remark}

We claim that the model $\CC = (\BC,L)$, where $\BC$ is the complex returned by Construction~\ref{alg:T} (line~\ref{alg:output}), and $L$ is a valuation such that $S_i \in L(p)$ if and only if $w_i \in V(p)$, is a $\delta$-translation of $\CM$. 

\begin{example}\label{example:alg:T}
Let $\ag = \{a,b\}$ and consider the proper $\delta$-model $\CM =(W,\sim,V)$ depicted in Figure \ref{fig:ex1}. Construction \ref{alg:T} first initialises
\[
S_1 = \begin{Bmatrix}  ab1 \\ a1,b1 \end{Bmatrix} \text{ , }
S_2 = \begin{Bmatrix} ab2 \\ a2,b2\end{Bmatrix} \text{, and } S_3 =\begin{Bmatrix}
ab3 \\ a3,b3
\end{Bmatrix}.
\]
During the transformation phase, when $i=1$ and $j=2$, Construction \ref{alg:T}  replaces $(a,2)$ and $(b,2)$ with $(a,1)$ and $(b,1)$ due to $w_1 \sim_{\{a\}} w_2$ and $w_1 \sim_{\{b\}} w_2$. Moreover, it replaces $(b,3) \in S_3$ with $(b,1)$. Observe that if Line \ref{alg:smallest} was missing, i.e., $k=i$, then, for $i=2$ and $j=3$, the Construction would add $(b,2)$ to $S_3$ as well, which would make it an ill-formed simplex. The resulting complex is
\[
\BC = \left\{ \begin{Bmatrix} ab1 \\ a1,b1 \end{Bmatrix}, \begin{Bmatrix} ab2 \\ a1,b1 \end{Bmatrix}, \begin{Bmatrix} ab3 \\ a3,b1\end{Bmatrix} \right\}.
\]
\end{example}

\begin{figure}[ht]
\begin{center}
\begin{tikzpicture}[modal]
  \node[world] (w) {$w_1$};
  \node[world] (v) [right=of w] {$w_2$};
  \node[world] (u) [right=of v] {$w_3$};
  \path[-] (w) edge[reflexive] node[above] {$\begin{matrix} ab, a,b \end{matrix}$} (w);
  \path[-] (v) edge[reflexive] node[above] {$\begin{matrix}
  ab,a,b
  \end{matrix}$} (v);
  \path[-] (u) edge[reflexive] node[above] {$\begin{matrix}
  ab,a,b
  \end{matrix}$} (u);
  \path[-] (w) edge node[above] {$\begin{matrix}
  a\\b
  \end{matrix}$} (v);
  \path[-] (v) edge node[above] {$b$} (u);
  \path[-] (w) edge[bend right] node[below] {$b$} (u);
\end{tikzpicture}
\end{center}
\caption{The $\delta$-model for Example \ref{example:alg:T}. Only arrows for maximal agent patterns are shown.}\label{fig:ex1}
\end{figure}
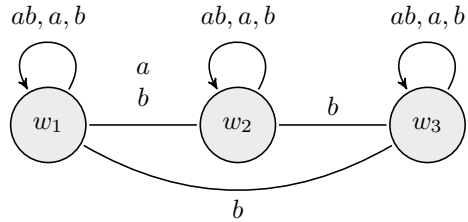

The next Lemma ensures that we can safely assume the existence of an unique and non-empty $S_i$ for each $w_i$ throughout our proofs.

\begin{lemma}[Uniqueness]
Let $F = (W,\sim)$ be a proper $\delta$-frame. After the initialisation of Construction~\ref{alg:T} on input $F$ and after each execution of Line~\ref{alg:add}, the following holds for all $w_i,w_j \in W$:
\begin{enumerate}
\item $S_i \neq \emptyset$;
\item $S_i = S_j$ if and only if $i = j$.
\end{enumerate}
\end{lemma}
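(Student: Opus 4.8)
The plan is to prove both items at once by induction on the number of times Line~\ref{alg:add} has been executed, carrying along the invariant that for every $w_i\in W$ we have $S_i\cap(\{\emptyset\}\times\BN)=\{(\emptyset,i)\}$; in words, the empty face sits in $S_i$ with colour $i$ and is the unique element of $S_i$ whose first component is $\emptyset$. Granted this invariant, the lemma is immediate: $(\emptyset,i)\in S_i$ gives $S_i\neq\emptyset$, and $S_i=S_j$ forces $\{(\emptyset,i)\}=S_i\cap(\{\emptyset\}\times\BN)=S_j\cap(\{\emptyset\}\times\BN)=\{(\emptyset,j)\}$, hence $i=j$; the converse direction is trivial. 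So the whole task reduces to showing that the invariant is set up by the initialisation and preserved by each step.

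For the base case, after the initialisation phase (Lines~\ref{alg:init:start}--\ref{alg:init:end}) we have $S_i=\{(A,i)\mid A\subseteq w^*_i\}$; since $\emptyset\subseteq w^*_i$ this contains $(\emptyset,i)$, and every element of $S_i$ has second component $i$, so $(\emptyset,i)$ is indeed the only one with empty first component. For the inductive step I would consider a single pass through the innermost loop body, Lines~\ref{alg:smallest}--\ref{alg:add}, carried out for the current indices $i<j$, the current agent pattern $G$ with $w_i\sim_G w_j$, and the current $B\in G$. Such a pass changes only $S_j$: Line~\ref{alg:remove} deletes $(B,j)$ and Line~\ref{alg:add} inserts $(B,k)$, with $k$ the smallest index such that $w_k\sim_{\{B\}}w_i$ (Line~\ref{alg:smallest}). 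The decisive point is that $G$ is an agent pattern, so $G\subseteq\pow(\ag)\setminus\{\emptyset\}$ and therefore $B\neq\emptyset$; consequently neither Line~\ref{alg:remove} nor Line~\ref{alg:add} adds or removes a pair with empty first component. Hence $S_j\cap(\{\emptyset\}\times\BN)$ is unchanged and still equals $\{(\emptyset,j)\}$, while every other $S_m$ is literally untouched. This closes the induction.

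I do not expect a genuine obstacle: the argument rests entirely on the syntactic fact that agent patterns never contain $\emptyset$, so the empty face acts as a permanent identifier of $S_i$. Two minor points deserve a remark. First, the statement concerns the configuration \emph{after} each execution of Line~\ref{alg:add}, so the transient state between Lines~\ref{alg:remove} and~\ref{alg:add} (where $S_j$ has momentarily lost $(B,j)$) is irrelevant. Second, properness of $F$ is not actually used here; it becomes relevant only later, when one verifies that $w_i\mapsto S_i$ is a $\delta$-translation, since the construction keeps $S_i\neq S_j$ apart for $i\neq j$ even when $w_i\equiv w_j$. For those subsequent correctness lemmas one will moreover want the stronger invariant that each $S_i$ is a genuine simplex with $\max(S_i)$ of the form $(w^*_i,\cdot)$, which is where Line~\ref{alg:smallest} earns its keep (cf.\ Example~\ref{example:alg:T}), but that is more than the present lemma requires.
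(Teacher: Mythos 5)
Your argument hinges on the pair $(\emptyset,i)$ sitting permanently in $S_i$ and acting as a name tag, and that is exactly where it fails. The paper's convention is that elements of the form $(\emptyset,i)$ are omitted from simplices, and the construction is meant to be read accordingly; but set that aside and suppose, as you do, that the initialisation literally puts $(\emptyset,i)$ into $S_i$ and that it is never re-coloured (true, since agent patterns never contain $\emptyset$). Then any two \emph{distinct} worlds that get glued, say with $(B,k)\in S_i\cap S_j$ for some nonempty $B$, would yield simplices that share $(B,k)$ yet disagree on their empty faces ($(\emptyset,i)\in S_i$ but $(\emptyset,i)\notin S_j$), so Condition $\mathsf{C}$ of Definition~\ref{def:complex} would fail and Lemma~\ref{lem:complex} would be false: the output would not be a complex at all (see Example~\ref{example:alg:T}, where $S_1$ and $S_2$ share $a1,b1$). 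So the reading on which your invariant has a base case is incompatible with the rest of the development; on the intended reading, $S_i$ contains no pair with empty first component and your induction never gets started.

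A second warning sign is that your proof of item 2 makes no use of properness, so it would prove the uniqueness claim for \emph{every} $\delta$-frame. That is false on the intended reading: for the improper $\delta$-model of Figure~\ref{fig:improper} the paper observes that Construction~\ref{alg:T} produces a complex with only one world, i.e.\ $w_1\neq w_2$ but $S_1=S_2$, because every face of $S_2$ gets replaced by the corresponding face of $S_1$. Properness is precisely the missing ingredient, and it is how the paper argues: if $S_i=S_j$, then the simplices have the same first components, so $\overline{w_i}=\overline{w_j}$, and the shared coloured pairs give $w_i\sim_{\{B\}}w_j$ for every $B$, hence $w_i\sim_{\overline{w_i}}w_j$ by $\mathsf{D}$; thus $w_i\equiv w_j$, and properness forces $w_i=w_j$, i.e.\ $i=j$. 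Non-emptiness is obtained from $\mathsf{NE}$ (so each $S_i$ is initialised non-empty) together with the observation that every removal in Line~\ref{alg:remove} is immediately followed by an insertion in Line~\ref{alg:add}. You should restructure your proof along these lines rather than via the empty-face identifier.
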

\begin{proof}
$S_i \neq \emptyset$ follows immediately because $\delta$-frames satisfy $\mathsf{NE}$ and thus, each $S_i$ is initialised to some non-empty set. Furthermore, since each element that is removed gets replaced, we conclude that $S_i \neq \emptyset$ for all $w_i \in W$ after Line \ref{alg:add}. For the second claim, the direction from right to left follows immediately. For the other direction, observe that if $S_i = S_j$, then $\overline{w_i}= \overline{w_j}$ and $w_i \sim_{\overline{w_i}}w_j$, i.e., $~{w_i \equiv w_j}$. By the properness of $F$, it follows that $w_i = w_j$, i.e., $i = j$.
\end{proof}

Since the smaller index gets precedence, some elements of a simplex may never be replaced. For example, the simplex $S_1$ remains unchanged throughout Construction \ref{alg:T}. Lemma \ref{lem:init} shows that those simplices are correct.

\begin{lemma}\label{lem:init}
Let $F = (W,\sim)$ be a proper $\delta$-frame and consider Construction \ref{alg:T} on input $F$. After the initialisation (lines \ref{alg:init:start} to \ref{alg:init:end}), for all $w_j \in W$, it holds that $(A,j) \in S_j $ iff  $ w_j\sim_{\{A\}}w_j$.
\end{lemma}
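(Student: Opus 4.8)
The plan is to unfold the initialisation step of Construction~\ref{alg:T} and read the claim off the definition of $w^*_j$, using only the $\kappa$-frame axioms (recall that $\delta$-frames are $\kappa$-frames). After lines~\ref{alg:init:start}--\ref{alg:init:end} we have $S_j = \{(A,j) \mid A \subseteq w^*_j\}$, so $(A,j) \in S_j$ holds precisely when $A \subseteq w^*_j$. Hence the lemma reduces to
\[
A \subseteq w^*_j \quad\text{iff}\quad w_j \sim_{\{A\}} w_j,
\qquad\text{where}\quad w^*_j = \max\bigl\{A \subseteq \ag \mid w_j \sim_{\{A\}} w_j\bigr\}.
\]

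First I would verify that $w^*_j$ is well-defined, i.e.\ that the set $\mathcal{A}_j := \{A \subseteq \ag \mid w_j \sim_{\{A\}} w_j\}$ — equivalently, $\{A \mid w_j \in \alive(\{A\})\}$ — has a greatest element. It is non-empty by $\mathsf{NE}$, and it is closed under binary unions: if $w_j \in \alive(\{A\}) \cap \alive(\{B\})$, then $w_j \in \alive(\{A,B\})$ by $\kone$, hence $w_j \in \alive(\{A \cup B\})$ by $\ktwo$ applied to the pattern $\{A,B\}$. Since $\ag$ is finite, $\bigcup \mathcal{A}_j$ is then the greatest element of $\mathcal{A}_j$; in particular $w^*_j \in \mathcal{A}_j$, that is, $w_j \sim_{\{w^*_j\}} w_j$, and every member of $\mathcal{A}_j$ is contained in $w^*_j$.

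The backward direction is now immediate: $w_j \sim_{\{A\}} w_j$ gives $A \in \mathcal{A}_j$, so $A \subseteq w^*_j$ and therefore $(A,j) \in S_j$. For the forward direction, assume $A \subseteq w^*_j$; applying Lemma~\ref{lem:k4} with $G = \{w^*_j\}$ and $B = A$ (legitimate since $A \subseteq w^*_j \in G$) yields $\sim_{\{w^*_j\}} \subseteq \sim_{\{A\}}$, and combining this with $w_j \sim_{\{w^*_j\}} w_j$ gives $w_j \sim_{\{A\}} w_j$, as required. I do not expect a genuine obstacle here: the only delicate point is the well-definedness of $w^*_j$, which is exactly where $\kone$ together with $\ktwo$ is needed, while the degenerate case $A = \emptyset$ is trivial since $(\emptyset,j) \in S_j$ and $\emptyset \subseteq w^*_j$ always hold.
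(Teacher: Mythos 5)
Your proof is correct and follows essentially the same route as the paper's: the forward direction via Lemma~\ref{lem:k4} applied to $\{w^*_j\}$ together with $w_j \sim_{\{w^*_j\}} w_j$, and the backward direction directly from the definition of $w^*_j$. The only difference is that you additionally verify that $w^*_j$ is well-defined (non-emptiness via $\mathsf{NE}$ and closure under unions via $\kone$ and $\ktwo$), a point the paper silently takes for granted with ``by definition''.
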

\begin{proof}
From left to right, we have $(A,j) \in S_j$ if and only if $A\subseteq w^*_j$. Since $w_j \sim_{\{w^*_j\}}w_j$ by definition, we obtain $w_j\sim_{\{A\}}w_j$ by Lemma \ref{lem:k4}. Regarding the other direction, assume $w_j\sim_{\{A\}}w_j$. By the definition of $w^*_j$ it follows that $A\subseteq w^*_j$ which implies that $(A,j)\in S_j$ after the initialisation phase. 
\end{proof}

\begin{lemma}\label{lem:swap}
Let $F = (W,\sim)$ be a proper $\delta$-frame and  consider $i,j,k \in \mathbb{N}$ such that $i\neq j, i \neq k$, and $ j \neq k $. There does not exist $B \subseteq \ag$ such that, while running Construction \ref{alg:T} on input $F$, $(B,i)$ is replaced with $(B,j)$ and $(B,j)$ is replaced with the pair $(B,k)$.
\end{lemma}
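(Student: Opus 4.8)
The plan is to isolate the following invariant of Construction~\ref{alg:T}: whenever the construction rewrites an element with first component $B$, it replaces it by the \emph{canonical} member of the relevant $\sim_{\{B\}}$-equivalence class, i.e.\ by $(B,r)$ with $r=\min\{l \mid w_l\sim_{\{B\}}w_q\}$, where $w_q$ is the world whose simplex is being modified. Granting this, the lemma follows at once: if $(B,i)$ is replaced by $(B,j)$, then $j$ is the least index in the $\sim_{\{B\}}$-class of $w_i$, and if $(B,j)$ is replaced by $(B,k)$, then $k$ is the least index in the $\sim_{\{B\}}$-class of $w_j$; but $w_j$, being that least index for $w_i$, lies in the class of $w_i$, so the two classes coincide and $j=k$, contradicting $j\neq k$.

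First I would pin down the shape of a rewriting step. An element is deleted from a simplex only on line~\ref{alg:remove}; writing $(w_p,w_q)$ with $p<q$ for the pair currently being processed, this step deletes $(B,q)$ from $S_q$ and, via lines~\ref{alg:smallest} and~\ref{alg:add}, adds $(B,r)$ with $r=\min\{l \mid w_l\sim_{\{B\}}w_p\}$; moreover it is executed only when $w_p\sim_G w_q$ for some agent pattern $G$ with $B\in G$. Since $F$ is a $\delta$-frame it is a $\kappa$-frame, so Lemma~\ref{lem:k4} (applied with the ``$A$'' of that lemma taken to be $B$ itself) gives $\sim_G\subseteq\sim_{\{B\}}$, hence $w_p\sim_{\{B\}}w_q$, and in particular $w_q\sim_{\{B\}}w_q$ by symmetry and transitivity of $\sim_{\{B\}}$. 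Using symmetry and transitivity once more, $w_p\sim_{\{B\}}w_q$ yields $\{l \mid w_l\sim_{\{B\}}w_p\}=\{l \mid w_l\sim_{\{B\}}w_q\}$, so $r=\min\{l \mid w_l\sim_{\{B\}}w_q\}$. This proves the invariant.

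Then I would apply the invariant twice. Suppose, for contradiction, that for some $B$ the element $(B,i)$ is replaced by $(B,j)$ and $(B,j)$ is replaced by $(B,k)$. Reading the first replacement through the invariant (with $w_q=w_i$) gives $j=\min\{l \mid w_l\sim_{\{B\}}w_i\}$; in particular $w_j\sim_{\{B\}}w_i$. Reading the second replacement through the invariant (with $w_q=w_j$) gives $k=\min\{l \mid w_l\sim_{\{B\}}w_j\}$. Since $w_j\sim_{\{B\}}w_i$, the $\sim_{\{B\}}$-classes of $w_i$ and $w_j$ agree, whence $\min\{l \mid w_l\sim_{\{B\}}w_i\}=\min\{l \mid w_l\sim_{\{B\}}w_j\}$, i.e.\ $j=k$, contradicting $j\neq k$.

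The step I expect to be the main obstacle is the first one, namely being scrupulous that \emph{every} deletion performed by the construction really has the stated form, and that once $(B,j)$ has been removed from $S_j$ it is never reinserted (so that the phrase ``$(B,j)$ is replaced'' is unambiguous and occurs at most once). Both are consequences of the nested-loop structure: line~\ref{alg:remove} always deletes the pair whose second component equals the index of the simplex being modified, and line~\ref{alg:add} inserts into $S_q$ a pair $(B,r)$ with $r\le p<q$, so a pair with second component $q$ is never put back into $S_q$. Everything beyond this bookkeeping is just the fact that $\sim_{\{B\}}$ is an equivalence relation on $\{\,w \mid w\sim_{\{B\}}w\,\}$.
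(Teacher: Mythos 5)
Your proof is correct and takes essentially the same route as the paper's: both arguments reduce the replacement relations to $\sim_{\{B\}}$ via $\kthree$/Lemma~\ref{lem:k4}, then combine symmetry and transitivity of $\sim_{\{B\}}$ with the minimality of the index chosen on line~\ref{alg:smallest} to rule out a second replacement. Your packaging of that minimality as an explicit ``canonical representative of the $\sim_{\{B\}}$-class'' invariant, which yields $j=k$ directly, is a slightly cleaner rendering of the paper's ordering-based contradiction, but it is not a different method.
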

\begin{proof}
Towards a contradiction, assume that there exists such a $B \subseteq \ag$, i.e., for some $S_i$, $(B,i)$ is replaced with $(B,j)$ and $(B,j)$ is replaced with $(B,k)$. By assumption, $i<j<k$ because Construction \ref{alg:T} replaces $(B,l)$ with $(B,m)$ only if $l > m$. Since lines \ref{alg:remove} and \ref{alg:add} are executed at least twice, there exist $G,G' \subseteq \pow(\ag) \setminus \{\emptyset\}$ such that $B\in G$ and $B\in G'$ with $w_i \sim_G w_j$ and $w_j \sim_{G'}w_k$.
Since $\{B\} \subseteq G$ and $\{B\} \subseteq G'$, we have by $\kthree$ that $w_i \sim_{\{B\} } w_j$ and $w_j \sim_{\{B\} }w_k$. Further, $w_i \sim_{\{B\}}w_k$ follows by transitivity of $\sim_{\{B\}}$. But this means, that $i=k$ because $(B,k)$ was replaced by $(B,i)$ prior which contradicts that $i<k$.
\end{proof}

\begin{lemma}\label{lem:t1}
Let $F = (W,\sim)$ be a proper $\delta$-frame. The output $\BC$ of Construction~\ref{alg:T} on input $F$ satisfies the following two properties 
\begin{enumerate}[$\mathsf{T}$1:]
\item Let $S \in \BC$. If $(A,j)\in S$ and $(A,k)\in S$, then $j=k$.
\item Let $w_i,w_j\in W$ and $G\subseteq \pow(\ag) \setminus \{\emptyset\}$, then
\begin{equation*}
\forall B\in G.\exists k\in\mathbb{N}. (B,k) \in S_i \land (B,k)\in S_j \quad\text{iff}\quad w_i\sim_Gw_j.
\end{equation*}
\end{enumerate}
\end{lemma}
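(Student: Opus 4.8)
The plan is to analyze what Construction~\ref{alg:T} does to the simplices $S_i$ and to track invariants through the transformation loop. For \ti, I would argue by contradiction: suppose some $S\in\BC$ contains both $(A,j)$ and $(A,k)$ with $j\neq k$. After initialisation each $S_i$ contains exactly one pair of the form $(A,\cdot)$ for each $A\subseteq w^*_i$, namely $(A,i)$ (Lemma~\ref{lem:init} and the definition in line \ref{alg:init:end}), so the violation must be introduced by the loop. The only place the content of a simplex changes is lines \ref{alg:remove}--\ref{alg:add}, which \emph{replace} $(B,j)$ by $(B,k)$ in $S_j$; so $S_j$ always retains exactly one pair $(B,\cdot)$ for each $B$ that was present at initialisation, and never gains a pair $(B,\cdot)$ for a $B$ that was absent. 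The danger is that two replacements targeting the same $B$ in the same $S_j$ insert different indices — but Lemma~\ref{lem:swap} rules out a chain $(B,i)\mapsto(B,j)\mapsto(B,k)$, and I would additionally note that whenever $(B,j)$ is replaced it is replaced by $(B,k)$ with $k$ the \emph{smallest} index with $w_k\sim_{\{B\}}w_i$ (line \ref{alg:smallest}); since $\sim_{\{B\}}$ is an equivalence relation (Corollary~\ref{cor:equiv_rel}) restricted to the alive worlds, this smallest index is canonical within the $\sim_{\{B\}}$-class, so any two replacements of a $(B,\cdot)$-pair in the same simplex agree on the new index. Hence \ti holds after initialisation and is preserved by every execution of line \ref{alg:add}, which is exactly the claim by the Uniqueness lemma's phrasing.

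For \tii, the direction from left to right is almost immediate: if for every $B\in G$ there is $k$ with $(B,k)\in S_i$ and $(B,k)\in S_j$, then in particular $B\in S_i^\circ\cap S_j^\circ$ with a common index, so $G\subseteq(S_i\cap S_j)^\circ$; via \ti the common index is the unique one, and I would then transfer this back to the frame using the correctness of the construction — but more carefully, I want to show directly that a shared pair $(B,k)$ forces $w_i\sim_{\{B\}}w_j$. A shared pair arises only because, at some stage, the construction glued $S_i$ and $S_j$ along $B$: tracing line \ref{alg:smallest}, the common index $k$ satisfies $w_k\sim_{\{B\}}w_i$ and $w_k\sim_{\{B\}}w_j$ (the latter because $(B,k)$ ended up in $S_j$ via a replacement triggered by some $w_{i'}$ with $w_{i'}\sim_{\{B\}}w_j$ and $w_{i'}\sim_{\{B\}}w_k$, or $j=k$ and the pair was never moved so $w_j\sim_{\{B\}}w_j$ by Lemma~\ref{lem:init}). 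Then transitivity of $\sim_{\{B\}}$ gives $w_i\sim_{\{B\}}w_j$. Doing this for every $B\in G$ and invoking Lemma~\ref{lem:dist_knowledge} together with \dsf (i.e.~$\sim_G=\bigcap_{B\in G}\sim_{\{B\}}$, available since $F$ is a $\delta$-frame) yields $w_i\sim_G w_j$.

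For the right-to-left direction of \tii, assume $w_i\sim_G w_j$. Fix $B\in G$. By \dsf we have $w_i\sim_{\{B\}}w_j$. I need to exhibit a common index $k$ with $(B,k)\in S_i$ and $(B,k)\in S_j$. The natural candidate is $k_B:=\min\{l\mid w_l\sim_{\{B\}}w_i\}$, the canonical representative of the $\sim_{\{B\}}$-class of $w_i$ (which equals that of $w_j$). I would show: (i) $(B,k_B)\in S_i$ at the end — either $i=k_B$ and the pair $(B,i)$ was never moved, or $i>k_B$ and at the step $(\min(\text{something}),\ldots)$ the pair $(B,i)$ gets replaced by $(B,k_B)$; here I must check the loop actually reaches the pair $(w_{k_B},w_i)$ or a pair that triggers the replacement to $k_B$ specifically, using that $w_{k_B}\sim_{\{B\}}w_i$ implies $w_{k_B}\sim_{\{\{B\}\}}w_i$ so the innermost loop with $G':=\{\{B\}\}$ fires, and line \ref{alg:smallest} computes exactly $k_B$; (ii) symmetrically $(B,k_B)\in S_j$. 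Combining over all $B\in G$ gives the claim. The main obstacle I anticipate is precisely bookkeeping the order of the double loop in the transformation phase — making sure that for the relevant $B$ the replacement into the \emph{canonical} index $k_B$ has already happened (and is not later overwritten, which is where Lemma~\ref{lem:swap} and the minimality in line \ref{alg:smallest} do the real work), and that no spurious pair with a \emph{non-canonical} index survives in $S_i$ or $S_j$. Once the invariant ``at every stage, for each $B$ either $(B,i)\in S_i$ with $i$ minimal in its class among indices processed so far, or $(B,k_B)\in S_i$ with $k_B$ the global minimum'' is stated and shown to be loop-invariant, both directions fall out; formalising and maintaining that invariant is the crux.
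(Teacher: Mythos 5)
Your plan is correct and follows essentially the same route as the paper's proof: $\ti$ by tracking that replacements always insert the canonical (minimal) index of the $\sim_{\{B\}}$-class, and $\tii$ by tracing shared pairs back to $w_i\sim_{\{B\}}w_j$ via transitivity plus $\dsf$ in one direction, and in the other by showing the minimal index $k_B$ ends up in both $S_i$ and $S_j$ using Lemma~\ref{lem:init}, Lemma~\ref{lem:swap}, and the minimality in line~\ref{alg:smallest}. The only blemishes are cosmetic: you cite Corollary~\ref{cor:equiv_rel} (about simplices) and Lemma~\ref{lem:dist_knowledge} where the pre-model's symmetry/transitivity and the frame property $\dsf$ are what is actually needed.
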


\begin{proof} 
$\ti$ follows by construction. Regarding $\tii$, we start by showing the direction from left to right. Assume
\[
\forall B\in G.\exists k\in\mathbb{N}. (B,k) \in S_i  \land (B,k)\in S_j.
\]
Without loss of generality, we fix $i\leq j$. We have that $ (B,k)\in S_i\cap S_j$ only if there exists $G'$ with $w_i \sim_{G'}w_j$ and $B\in G'$. By $\kthree$ we get $w_i \sim_{\{B\}}w_j$. Since $B$ is arbitrary, $\mathsf{D}$ implies that $w_i\sim_G w_j$. 

For the other direction, let $w_i \sim_G w_j$ and let $B\in G$. By construction, $(B,j)\in S_j$ is replaced by $(B,k)$ with $k\leq i$. Let $k$ be the smallest index such that $w_k\sim_G w_i$, i.e.,~if $k<i$, then $(B,i)\in S_i$ was replaced with $(B,k)$ before. Lemma \ref{lem:init} ensures that $(B,k)\in S_k$. Hence, if $k<i$, $(B,j) \in S_j$ and $(B,i)\in S_i$ are both replaced by $(B,k)$ in the same iteration. If $k = i$, only $(B,j)$ is replaced by $(B,i)$, and $(B,i)\in S_i$ at the end of the Construction due to the symmetry of $\sim_G$ and $k$ being minimal. By Lemma \ref{lem:swap} there cannot be any further replacements and thus $(B,k)\in S_i$ and $(B,k)\in S_j$ at the end of Construction \ref{alg:T}.\qedhere
\end{proof}

A consequence of $\tii$ is that

\begin{equation}\label{eq:equiv_indist}
S_i\sim_G S_j\text{ if and only if } w_i\sim_Gw_j.
\end{equation}
\begin{lemma} \label{lem:t2}
Let $F = (W,\sim)$ be a proper $\delta$-frame. Let $\BC$ be the complex returned by Construction \ref{alg:T} on input $F$. For each $w_i\in W$,  $S_i \in \BC$ is a  well formed simplex.
\end{lemma}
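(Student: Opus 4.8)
The plan is to verify that $S_i$ satisfies the three defining conditions of Definition~\ref{def:simplex}. The key observation is that, by the Uniqueness Lemma, $S_i \neq \emptyset$; and at initialisation $S_i = \{(A,i) \mid A \subseteq w^*_i\}$ has maximal element $(w^*_i, i)$. The construction only replaces pairs $(B,j)$ with pairs $(B,k)$ for $B$ fixed, so the first-coordinate content of $S_i$ -- the set $S_i^\circ$ -- is never changed: it remains exactly $\{A \mid A \subseteq w^*_i\}$ throughout. Hence $S_i^\circ = \pow(w^*_i)$ at every stage, which immediately handles $\siii$ (every first coordinate is a subset of $w^*_i$) and the ``existence'' half of $\sii$ (for every $B \subseteq A \in S_i^\circ$ there is \emph{some} $j$ with $(B,j) \in S_i$, since $B \subseteq w^*_i$).

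The remaining work is the ``uniqueness'' parts: that $(w^*_i, \cdot)$ is the unique maximal pair ($\si$), and that the $j$ in $\sii$ is unique. Both reduce to showing that no two distinct pairs in $S_i$ share a first coordinate, i.e. exactly property $\ti$ of Lemma~\ref{lem:t1}. So first I would invoke Lemma~\ref{lem:t1}($\ti$) to get: if $(A,j) \in S_i$ and $(A,k) \in S_i$ then $j = k$. Combined with $S_i^\circ = \pow(w^*_i)$, this gives a bijection between $S_i$ and $\pow(w^*_i)$, from which $\si$, $\sii$, and $\siii$ all follow: the unique maximal first coordinate is $w^*_i$ (as it is the largest subset of itself), its pair is unique by $\ti$, downward closure with uniqueness is $\ti$ plus the existence observation above, and containment in the maximum is $S_i^\circ = \pow(w^*_i)$.

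Concretely the proof would run: (1) note $S_i \neq \emptyset$ by Uniqueness; (2) observe by inspection of lines~\ref{alg:remove}--\ref{alg:add} that $S_i^\circ$ is invariant and equals $\pow(w^*_i)$, since each removal of $(B,j)$ is immediately followed by an addition of $(B,k)$ with the same first coordinate $B$, and no first coordinate outside $\pow(w^*_i)$ is ever introduced; (3) apply Lemma~\ref{lem:t1}($\ti$); (4) deduce $\si$ (the unique maximal element is $(w^*_i, i')$ where $i'$ is the unique index paired with $w^*_i$); (5) deduce $\sii$ from the existence observation plus step~(3); (6) deduce $\siii$ from $S_i^\circ = \pow(w^*_i)$ and step~(4).

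The main obstacle -- and the only point needing care rather than bookkeeping -- is step~(2), pinning down that $S_i^\circ$ never changes and that no stray first coordinates appear. One has to argue that the ``for each $B \in G$'' loop only ever touches pairs $(B,\cdot)$ with $B \in G \subseteq (S_i \cap S_j)^\circ$, so $B$ is already a first coordinate present in $S_i$, and that in $S_j$ the replacement $(B,j) \mapsto (B,k)$ likewise preserves $S_j^\circ$; hence by induction on the number of executions of line~\ref{alg:add} the invariant $S_i^\circ = \pow(w^*_i)$ holds for all $i$. Everything after that is a direct unwinding of Definition~\ref{def:simplex} using $\ti$. I would note in passing that this lemma is exactly what licenses writing $\langle a_1 \dots a_n i\rangle$ for simplices of the output complex, and is a prerequisite for the later verification that $\BC$ is a \emph{complex} (Condition~$\mathsf{C}$).
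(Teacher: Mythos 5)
Your proposal is correct in substance but takes a genuinely different route from the paper. The paper derives $\si$, $\sii$, $\siii$ separately, each time going through property $\tii$ of Lemma~\ref{lem:t1} (membership of $(B,k)$ in $S_i$ if{f} $w_i\sim_{\{B\}}w_i$) together with the frame axioms ($\ktwo$, $\kthree$, Lemma~\ref{lem:k4}) and a contradiction against maximality; it never isolates your invariant. You instead prove a structural invariant of the construction --- $S_i^\circ=\pow(w^*_i)$ at every stage, since lines~\ref{alg:remove}--\ref{alg:add} only ever change second coordinates --- and then all three simplex conditions fall out of that invariant plus $\ti$ by pure bookkeeping: existence in $\sii$ and condition $\siii$ from $S_i^\circ=\pow(w^*_i)$, uniqueness in $\si$ and $\sii$ from $\ti$. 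This is arguably cleaner, and it makes explicit a fact the paper only uses implicitly through $\tii$; what the paper's version buys is that it needs no reasoning about intermediate states of the construction, only the already-proved end-state properties $\ti$ and $\tii$.

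One soft spot to repair: your justification of the invariant's preservation appeals to ``$B\in G\subseteq(S_i\cap S_j)^\circ$'', but that inclusion is the \emph{output} characterisation (essentially $\tii$) and is not available mid-run; moreover what preservation actually requires is a fact about $S_j$, namely $B\subseteq w^*_j$, so that the added pair $(B,k)$ does not enlarge $S_j^\circ$ beyond $\pow(w^*_j)$. The correct argument is frame-theoretic: the loop guard gives $w_i\sim_G w_j$, so by $\kthree$ (with symmetry and transitivity) $w_j\sim_{\{B\}}w_j$, and since the singleton patterns alive at $w_j$ are closed under union ($\dsf$ or $\kone$ together with $\ktwo$) and bounded by the maximum $w^*_j$, we get $B\subseteq w^*_j$. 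With that substitution your induction on executions of line~\ref{alg:add} goes through, and the rest of your outline is sound; note that these are exactly the axioms the paper invokes at the corresponding points of its own proof.
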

\begin{proof}
We show that $S_i$ satisfies $\si,\sii$, and $\siii$.

\begin{enumerate}
\item $S_i$ satisfies $\si$. Let $(A,j) \in S_i$ and $(B,k) \in S_i$ be two distinct maximal elements. First, we note that $A \neq B$ due to $\ti$. By $\tii$, $w_i \in \alive(\{A\})$ and $w_i \in \alive(\{B\})$. By $\kthree$, $w_i \in \alive(\{A\}\cup \{B\})$. Further, by $\ktwo$, $w_i \in \alive(\{A\cup B\})$. Hence, $(A\cup B,l)\in S_i$ for some $l$. But since $A \subsetneq A \cup B$, this contradicts that $(A,i)$ is the maximal element of $S_i$.

\item $S_i$ satisfies $\sii$. Let $(B,j)\in S_i$ and $C\subseteq B$, we need to show that there exists a unique $k$ such that $(C,k) \in S_i$. By Lemma \ref{lem:k4} we have $\sim_{\{B\}} \subseteq \sim_{\{C\}}$. Since we assume that $(B,j)\in S_i$, we get $w_i \sim_{\{B\}}w_i$ by $\tii$. Due to $\sim_{\{B\}} \subseteq \sim_{\{C\}}$,  we have that $w_i \sim_{\{C\}} w_i$. Thus, by $\tii$, there exists $k\in \mathbb{N}$ such that $(C,k) \in S_i$. Condition $\ti$ ensures that $k$ is unique.

\item $S_i$ satisfies $\siii$. Let $\max(S_i) = (B,k)$ and suppose that $(A,j) \in S_i$ for some $A\nsubseteq B$. We have $w_i \sim_{\{A\}} w_i$ and $w_i \sim_{\{B\}} w_i$ by $\tii$. By $\kthree$, it follows that $w_i \in \alive(\{A\}\cup \{B\})$ and by $\ktwo$ we have that $w_i \in \alive(\{A\cup B\})$. Thus $(A\cup B,l) \in S_i$ for some $l$ which contradicts the maximality of $(B,k)$. \qedhere
\end{enumerate}
\end{proof}

\begin{lemma}\label{lem:complex}
Let $F = (W,\sim)$ be a proper $\delta$-frame and consider Construction \ref{alg:T} on input~$F$. The output set $\BC = \{S_i\mid w_i\in W\}$ is a complex.
\end{lemma}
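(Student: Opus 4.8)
The plan is to verify Condition~$\mathsf{C}$ of Definition~\ref{def:complex} directly; together with Lemma~\ref{lem:t2}, which guarantees that each $S_i$ is a well-formed simplex, this yields that $\BC$ is a complex. So fix $S_m,S_n\in\BC$ and suppose there are $A\subseteq\ag$ and $c\in\mathbb{N}$ with $(A,c)\in S_m$ and $(A,c)\in S_n$; the goal is to show that for all $B\subseteq A$ and all $l\in\mathbb{N}$ we have $(B,l)\in S_m$ iff $(B,l)\in S_n$, and by symmetry only one implication is needed. I would first clear away the degenerate cases: if $A=\emptyset$ then, since Construction~\ref{alg:T} never creates or deletes an $(\emptyset,\cdot)$-element, $(\emptyset,c)\in S_m$ forces $c=m$ and likewise $c=n$, so $m=n$ and the claim is trivial; the subcase $B=\emptyset$ is subsumed under the convention (stated after Lemma~\ref{lem:maxcomplex}) that $(\emptyset,\cdot)$-elements are omitted. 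Hence we may assume $A\neq\emptyset$ and $B\neq\emptyset$, so that $\{A\}$ and $\{B\}$ are legitimate agent patterns.

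The heart of the proof is then a short chain built on Lemma~\ref{lem:t1}. Applying property $\tii$ to the singleton agent pattern $G=\{A\}$ and using $(A,c)\in S_m\cap S_n$ yields $w_m\sim_{\{A\}}w_n$. Since $B\subseteq A$ and a proper $\delta$-frame is in particular a $\kappa$-frame, Lemma~\ref{lem:k4} gives $\sim_{\{A\}}\subseteq\sim_{\{B\}}$, hence $w_m\sim_{\{B\}}w_n$. Now the right-to-left direction of $\tii$ with $G=\{B\}$ produces some $k$ with $(B,k)\in S_m$ and $(B,k)\in S_n$. Finally, if $(B,l)\in S_m$, then property $\ti$ applied to $S_m$ forces $l=k$, so $(B,l)=(B,k)\in S_n$; the reverse implication is symmetric. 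This establishes Condition~$\mathsf{C}$ and completes the proof.

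I do not anticipate a genuine obstacle: all the combinatorial content has already been isolated in Lemma~\ref{lem:t1}, so the statement is essentially a corollary. The only step that calls for attention is the bookkeeping around the empty set — ensuring $A$ and $B$ are nonempty before invoking $\tii$ and Lemma~\ref{lem:k4}, and handling $(\emptyset,\cdot)$-elements via the construction's behaviour and the paper's notational convention. If one wished to sidestep that convention entirely, one could instead read (and verify) $\mathsf{C}$ only for nonempty $B$, which is all that is used in the sequel.
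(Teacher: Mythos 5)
Your proof is correct, and while it follows the paper's overall plan (each $S_i$ is a simplex by Lemma~\ref{lem:t2}, so only Condition~$\mathsf{C}$ needs checking), it closes the argument by a genuinely different and arguably cleaner route. Both proofs start the same way: from $(A,c)\in S_m\cap S_n$, property $\tii$ gives $w_m\sim_{\{A\}}w_n$, and Lemma~\ref{lem:k4} gives $w_m\sim_{\{B\}}w_n$. At that point the paper re-enters the mechanics of Construction~\ref{alg:T}: from $(B,j)\in S_m$ it infers $w_m\sim_{\{B\}}w_j$, argues via transitivity and minimality of indices that $(B,j)$ replaced $(B,n)$ in $S_n$, and invokes Lemma~\ref{lem:swap} to exclude any later replacement of that pair. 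You instead stay entirely at the level of the already-proved interface: the right-to-left direction of $\tii$ applied to $\{B\}$ yields some $k$ with $(B,k)\in S_m$ and $(B,k)\in S_n$, and $\ti$ applied to $S_m$ forces $k=l$, hence $(B,l)\in S_n$. What this buys you is a proof that never mentions the replacement process, Lemma~\ref{lem:swap}, or index minimality again, exhibiting the lemma as a formal corollary of Lemma~\ref{lem:t1}; the paper's version instead re-uses the operational details of the algorithm. Your bookkeeping for $(\emptyset,\cdot)$-elements is harmless and in fact more careful than the paper, which silently relies on the convention that such elements are omitted.
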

\begin{proof}
In order to show that $\BC$ is a complex, we need to show that it satisfies Condition $\mathsf{C}$. Consider the simplices $S_m, S_n \in \BC$ and assume that there exists $A\subseteq \ag$ and $i$ with $(A,i)\in S_m$ and $(A,i)\in S_n$. We need to show that for all $B\subseteq A$ and all $j\in \BN$ it holds that 
\[
(B,j)\in S_m \quad \text{if{f}} \quad (B,j)\in S_n.
\]
Since $S_m$ and $S_n$ are arbitrary, it is enough to show only one direction.
Assume that $(B,j)\in S_m$. By $\tii$, we have $w_m \sim_{\{A\}} w_n$ and by Lemma \ref{lem:k4}, it holds that $w_m \sim_{\{B\}} w_n$. Since $(B,j)\in S_m$, $\tii$ implies $w_m \sim_{\{B\}} w_j$. Further, by symmetry and transitivity we obtain $w_j \sim_{\{B\}} w_n$. Hence, it holds that $j\leq n$. Therefore, $(B,j)$ replaced $(B,n)$ in $S_n$. By Lemma \ref{lem:swap}, no more replacements of that pair can happen during the execution of Construction \ref{alg:T} and we conclude $(B,j)\in S_n$. 
\end{proof}

\begin{theorem}\label{lem:existence}
For each proper $\delta$-model, there exists a $\delta$-translation.
\end{theorem}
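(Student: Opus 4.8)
The plan is to assemble the pieces established earlier in this subsection; no new construction is needed. Given a proper $\delta$-model $\CM = (W,\sim,V)$, I would pass to its frame $F = (W,\sim)$, fix the underlying enumeration $w_1, w_2, \ldots$ of $W$, and run Construction~\ref{alg:T} on $F$ to obtain the set $\BC = \{S_i \mid w_i \in W\}$. By Lemma~\ref{lem:t2}, each $S_i$ is a well-formed simplex, and by Lemma~\ref{lem:complex}, $\BC$ is a complex. I would then set $\CC := (\BC, L)$ where $L$ is the valuation with $S_i \in L(p)$ iff $w_i \in V(p)$, so that $\CC$ is a simplicial model, and define the candidate translation map $T : W \to \BC$ by $T(w_i) := S_i$.

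The remaining task is to verify that $T$ witnesses $\CC$ being a $\delta$-translation of $\CM$, and to check the attendant well-definedness. For the latter, the Uniqueness lemma gives $S_i = S_j \Rightarrow i = j$; hence every $w_i$ is assigned exactly one simplex $S_i \in \BC$ (so $T$ is a genuine map into $\BC$), and $L$ is well-defined since $S_i = S_j$ forces $w_i = w_j$ and thus $w_i \in V(p) \iff w_j \in V(p)$. Condition~2 of a $\delta$-translation, $T(w) \in L(p)$ iff $w \in V(p)$, then holds immediately by the definition of $L$.

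For Condition~1, unfolding $(\cdot)^\circ$ from Definition~\ref{def:indist} gives
\[
G \subseteq \bigl(T(w_i) \cap T(w_j)\bigr)^\circ
\quad\text{iff}\quad
\forall B \in G.\ \exists k.\ (B,k) \in S_i \ \text{and}\ (B,k) \in S_j ,
\]
which is precisely the left-hand side of property $\tii$ of Lemma~\ref{lem:t1}; by that lemma it is equivalent to $w_i \sim_G w_j$. (This is exactly the reformulation recorded in \eqref{eq:equiv_indist}.) Both conditions are thus established, so $\CC$ is a $\delta$-translation of $\CM$, proving the theorem.

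Since the substantive content has already been carried out in Lemmas~\ref{lem:t1}, \ref{lem:t2}, \ref{lem:complex} and the Uniqueness lemma, there is no real obstacle remaining here; the only points that require a moment's care are the well-definedness of $T$ and of $L$, both of which reduce — via the Uniqueness lemma — to the properness of $\CM$.
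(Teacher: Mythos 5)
Your proposal is correct and follows essentially the same route as the paper: apply Construction~\ref{alg:T} to the frame, invoke Lemma~\ref{lem:complex} to get a simplicial model, define $L$ by $S_i \in L(p)$ iff $w_i \in V(p)$, and conclude via \eqref{eq:equiv_indist} (i.e., property $\tii$ of Lemma~\ref{lem:t1}) together with the definition of $L$. Your added remarks on the well-definedness of $T$ and $L$ via the Uniqueness lemma are a welcome but inessential refinement of the paper's argument.
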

\begin{proof}
Let $\CM = (W,\sim,V)$ be a proper $\delta$-model. Further, consider $\CC = (\BC,L)$, where $\BC$ is the output of Construction \ref{alg:T} on input $(W,\sim)$ and  $L: \BC \rightarrow \P$ is a valuation such that
\begin{equation}\label{eq:equiv_label}
S_i \in L(p) \quad\text{iff}\quad w_i \in V(p).
\end{equation}
By Lemma \ref{lem:complex}, $\CC$ is a simplicial model.
It follows from \eqref{eq:equiv_indist} and \eqref{eq:equiv_label} that $\CC$ is a $\delta$-translation of $\CM$.
\end{proof}

\begin{theorem}\label{thm:translation}
Let $\CM = (W,\sim,V)$ be a proper $\delta$-model and let $\CC = (\BC,L)$ be a $\delta$-translation of $\CM$. It holds that
\[
\CM,w_i  \Vdash \phi \text{ if and only if } \CC,S_i \Vdash_\sigma \phi.
\]
\end{theorem}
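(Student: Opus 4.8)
The plan is to prove the biconditional $\CM, w_i \Vdash \phi \iff \CC, S_i \Vdash_\sigma \phi$ by induction on the structure of $\phi$, where $T\colon W\to\BC$ is the map witnessing that $\CC$ is a $\delta$-translation of $\CM$ and $S_i = T(w_i)$. Before starting the induction I would record the two facts that drive every case. First, by clause~1 of the definition of $\delta$-translation together with Definition~\ref{def:indist}, for all $w_i,w_j\in W$ and every agent pattern $G$ we have $S_i \sim_G S_j$ iff $w_i \sim_G w_j$; this is exactly \eqref{eq:equiv_indist} and it already encapsulates all the structural work done in Lemmas~\ref{lem:t1}--\ref{lem:complex}. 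Second, for the $\delta$-translation produced by Construction~\ref{alg:T} (the one furnished by Theorem~\ref{lem:existence}) one has $\BC = \{S_j \mid w_j\in W\}$, so every simplex occurring in $\BC$ is realised as $S_j$ for some world $w_j$; this identification is what allows the universal quantifier over simplices in the simplicial semantics to be matched with the universal quantifier over worlds in the pre-model semantics.

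The base case $\phi = p$ is immediate from clause~2 of the definition of $\delta$-translation: $\CM,w_i \Vdash p$ iff $w_i\in V(p)$ iff $S_i\in L(p)$ iff $\CC,S_i\Vdash_\sigma p$. The cases $\phi = \neg\psi$ and $\phi = \psi_1\wedge\psi_2$ follow directly from the induction hypothesis, since the clauses for $\neg$ and $\wedge$ are literally the same in Definition~\ref{def:truth} for pre-models and for simplicial models.

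The only case with content is $\phi = [G]\psi$. I would chain equivalences: $\CM,w_i \Vdash [G]\psi$ iff for every $w_j\in W$ with $w_i\sim_G w_j$ we have $\CM,w_j\Vdash\psi$ (definition of truth in pre-models); iff for every $w_j$ with $S_i\sim_G S_j$ we have $\CC,S_j\Vdash_\sigma\psi$ (using \eqref{eq:equiv_indist} and the induction hypothesis); iff for every $T\in\BC$ with $S_i\sim_G T$ we have $\CC,T\Vdash_\sigma\psi$ (since ranging over $T\in\BC = \{S_j\mid w_j\in W\}$ is the same as ranging over the $S_j$ with $w_j\in W$); iff $\CC,S_i\Vdash_\sigma [G]\psi$ (definition of truth in simplicial models). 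This closes the induction.

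I do not expect a serious obstacle: the statement is a standard truth-preservation lemma for the (essentially bisimulation-like) correspondence $w_i\mapsto S_i$, and the genuine combinatorial content has already been absorbed into \eqref{eq:equiv_indist} via $\ti$, $\tii$ and Lemmas~\ref{lem:t1}--\ref{lem:complex}. The one point that requires care is the modal case, where one must know that every simplex of $\BC$ occurs as $S_j$ for some $w_j$, i.e.\ that $T$ is onto $\BC$; for the translation built by Construction~\ref{alg:T} this holds by the very definition of its output set, whereas without it the quantifier ``for all $T\in\BC$'' in the simplicial semantics could a priori range over simplices having no counterpart world in $W$.
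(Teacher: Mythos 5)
Your proposal is correct and takes essentially the same route as the paper: induction on the formula, with the only substantive case $[G]\psi$ handled via the equivalence $S_i\sim_G S_j$ iff $w_i\sim_G w_j$ that clause 1 of the definition of $\delta$-translation provides. Your caveat about the map being onto $\BC$ is a point the paper's proof also relies on implicitly (its chain of equivalences quantifies only over simplices of the form $S_j$), and it indeed holds for the translation produced by Construction~\ref{alg:T}, which is the one used for completeness.
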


\begin{proof}
By induction on the length of formulas.
\begin{enumerate}
\item Let $\phi \equiv \lalive(G)$ for some $G\subseteq \pow(\ag) \setminus \{\emptyset \}$. We have $\CM,w_i \Vdash \lalive(G)$ iff $w_i\sim_Gw_i$ iff $S_i\sim_G S_i$ iff $\CC,S_i \Vdash_\sigma \lalive(G)$.
\item Let $\phi \in \P$. We have $\CM,w_i \Vdash \phi$ iff $w_i\in V(p)$ iff $S_i\in L(p)$ iff $\CC,S_i \Vdash_\sigma \phi$ (by the definition of $L$).
\item Let $\phi \equiv \neg \psi$. Follows by the induction hypothesis.
\item Let $\phi \equiv \psi \land \varphi$. Follows by the induction hypothesis.
\item Let $\phi \equiv [G]\psi$. We equivalently show 
\[
\CM,w_i \not \Vdash [G]\psi \text{ iff }\CC,S_i \not \Vdash_\sigma [G]\psi.
\]
It holds that $\CM,w_i \not \Vdash [G]\psi$ iff there exists $w_j\in W$ with  $w_i\sim_Gw_j$ and $\CM,w_j \not \Vdash \psi$ iff  $G\subseteq (S_i\cap S_j)^\circ$ \eqref{eq:equiv_indist} and $\CC,S_j \not \Vdash_\sigma \psi$ (by hypothesis) iff $\CC,S_i \not \Vdash_\sigma [G]\psi$ by definition.\qedhere
\end{enumerate}
\end{proof}

Hence, if $\not \vdash \varphi$, then there exits a proper $\delta$-model $\CM$ (Corollary \ref{thm:soundcompete_properdmodels}) such that $\CM \not \Vdash \varphi$. By Theorem \ref{thm:translation}, there must exist a simplicial model $\CC$ such that $\CC \not \Vdash_\sigma \varphi$. Thus, $\syn$ is sound and complete with respect to simplicial models.

\begin{corollary}
$\Vdash_\sigma\varphi$ if and only if $\vdash \varphi$.
\end{corollary}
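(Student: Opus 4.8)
The plan is to establish the two directions of the biconditional separately, assembling results already proven. The implication $\vdash\varphi \Rightarrow \Vdash_\sigma\varphi$ is precisely the soundness statement of Theorem~\ref{thm:simp_soundness}, so no further argument is required there.

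For the converse I would argue by contraposition. Suppose $\not\vdash\varphi$; then $\neg\varphi$ is $\syn$-consistent, so by Corollary~\ref{thm:soundcompete_properdmodels} (soundness and completeness of $\syn$ with respect to proper $\delta$-models) there is a proper $\delta$-model $\CM = (W,\sim,V)$ and a world $w_i \in W$ with $\CM, w_i \not\Vdash \varphi$. Applying Theorem~\ref{lem:existence} to $\CM$, I obtain a simplicial model $\CC = (\BC,L)$ that is a $\delta$-translation of $\CM$; concretely, $\BC$ is the complex produced by Construction~\ref{alg:T} on the frame $(W,\sim)$ -- well formed by Lemma~\ref{lem:complex} -- equipped with the valuation matching $V$, and $S_i = T(w_i) \in \BC$. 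Theorem~\ref{thm:translation} then gives $\CM, w_i \Vdash \psi$ iff $\CC, S_i \Vdash_\sigma \psi$ for every $\psi \in \CL$; taking $\psi = \varphi$ yields $\CC, S_i \not\Vdash_\sigma \varphi$, hence $\CC \not\Vdash_\sigma \varphi$ and $\not\Vdash_\sigma \varphi$. Combining the two directions gives the claim.

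At this point every ingredient -- soundness over simplicial models, soundness and completeness of $\syn$ over proper $\delta$-models, the existence of $\delta$-translations, and the truth-preservation property of $\delta$-translations -- has already been proven, so the only remaining task is to chain them in the correct order. There is accordingly no substantial obstacle left; the sole point that needs care is purely bookkeeping, namely checking that the refuting world of the $\delta$-model really does correspond, under the translation map $T$, to a genuine simplex of the complex output by Construction~\ref{alg:T}, so that the counterexample transfers faithfully.
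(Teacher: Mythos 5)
Your proposal is correct and follows essentially the same route as the paper: soundness is Theorem~\ref{thm:simp_soundness}, and the converse is obtained by contraposition using completeness with respect to proper $\delta$-models (Corollary~\ref{thm:soundcompete_properdmodels}), the existence of a $\delta$-translation (Theorem~\ref{lem:existence} via Construction~\ref{alg:T}), and truth preservation under the translation (Theorem~\ref{thm:translation}). The only difference is that you spell out the bookkeeping about the refuting world's image $S_i$, which the paper leaves implicit.
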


\section{Conclusion}\label{sec:conclusion}
This work presented the normal modal logic $\syn$, a logic in which we can reason about what a group of agents knows beyond their pooled individual knowledge. The new  modality $[G]$ allows us to express the knowledge of a group while also accounting for relations between its members. Hence, the knowledge of two seemingly equal groups may differ due to different relationships among their agents. Moreover, we provided simplicial semantics for $\syn$ based on semi-simplicial sets and demonstrated its use by analysing consensus objects and the dining cryptographers problem. To round off our work on synergistic knowledge, we showed that $\syn$ is sound and complete with respect to our simplicial models. 

Semi-simplicial sets were initially studied in order to represent improper Kripke models and to explore new types of group knowledge (cf. van Ditmarsch et al.~\cite{10.1007/978-3-030-75267-5_1}). While synergistic knowledge definitely is a new form of distributed knowledge, we still must  require that $\syn$ contains the axiom (P) because for any simplex $S$, $S^\circ$ is a valid agent pattern. Thus, $\synminus$ can describe improper Kripke models. For example, the $\delta$-model shown in Figure~\ref{fig:improper} is a valid model for $\synminus$, but not for $\syn$. Applying Construction~\ref{alg:T} would result in a complex with only one world, which means that this complex cannot be a $\delta$-translation of the model shown in Figure~\ref{fig:improper}.
However, if we restrict agent patterns to sets, we might be able to represent improper Kripke models. For example, when treating agent patterns as sets, the complex depicted in Figure~\ref{fig:intro_queue} corresponds to an improper Kripke model.

\begin{figure}[ht]
\begin{center}
\begin{tikzpicture}[modal]
  \node[world] (w) {$w_1$};
  \node[world] (v) [right=of w] {$w_2$};
  \path[-] (w) edge[reflexive] node[above] {$\begin{matrix} ab, a,b \end{matrix}$} (w);
  \path[-] (v) edge[reflexive] node[above] {$\begin{matrix} ab, a,b \end{matrix}$} (v);
  \path[-] (w) edge node[above] {$\begin{matrix}
  ab,a,b
  \end{matrix}$} (v);
\end{tikzpicture}
\end{center}
\caption{An improper $\delta$-model. Only the maximal agent patterns are shown.}\label{fig:improper}
\end{figure}
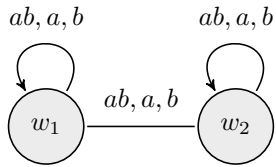

\section{Future Work}\label{sec:future work}
In this work, we presented a candidate for an indistinguishability relation. However, there may be other reasonable options. For example, another possible notion of indistinguishability is one that takes the connectivity of an agent pattern into account. It is an intuitive option for settings where adding an unrelated agent to a pattern must not result in new knowledge. For example, adding an agent without any communication links to a network does not strengthen the network's reasoning power. In other words, an agent pattern is only as strong as its weakest connected component. Unfortunately, the concept of indistinguishability so far cannot capture this. To see why this is the case, consider the model depicted in Figure~\ref{intro:fig1}, where the pattern $G = \{a,b\}$ cannot distinguish $X$ and $Y$. But if we add the completely unrelated agent $c$ to $G$, then, the updated pattern can suddenly distinguish $X$ and $Y$. Hence, we would like a definition in which $\{a,b,c\}$ cannot distinguish $X$ from~$Y$.
Let $G$ be the pattern $\{a, bc\}$. We could interpret this as \emph{$b$ and $c$ communicate with each other} but \emph{there is no communication between $a$ and $b$ or $c$}.
A formula $[G]\phi$ will thus be interpreted as \emph{$a$ knows $\phi$ \textbf{and} the group $b,c$ has distributed knowledge of $\phi$}. Hence, $\{a,b,c\}$ cannot tell the difference between $X$ and $Y$ because $a$ and $b$ cannot do so individually. Moreover, $c$ does not have the means to tell $a$ or $b$ why it can distinguish the two worlds. A further investigation of this kind of indistinguishability seems to be promising for network topologies. We think that it is certainly fruitful to have communication pattern logic (cf.~Casta{\~{n}}eda et al.~\cite{Castaneda2023-CASCPL}) in mind when studying this notion of indistinguishability. A first analysis of this topic can be found in the preliminary version of this work~\cite{DBLP:conf/sss/CachinLS23}.

\begin{figure}[ht]
\begin{center}
\begin{tikzpicture}[scale=0.6]
\node[circle, draw] at (0,0) (w1) [label = below:$ $]{$a$};
\node[circle, draw] at (0,4) (b1)[label = above:$ $]{$b$};
\node[circle, draw] at (-3,2) (g1) [label = left:$ $]{$c$};
\node[circle, draw] at (3,2) (g2)[label = right:$ $]{$c'$};
\node[circle, fill = white] at (-1.5,2) (l1) {$X$};
\node[circle, fill = white] at (1.5,2) (l12) {$Y$};
\draw[-] (w1) to[out=110, in=-110]  (b1);
\draw[-
] (w1) to[out=65, in=-65] (b1);
\path[-] (g1) edge[]  (b1);
\draw[-] (g1) to (w1);
\draw[-] (g2) to (b1);
\draw[-] (g2) to (w1););
\end{tikzpicture}
\caption{A model in which two agents $a$ and $b$ can together distinguish between the worlds $X$ and~$Y$. However, they cannot do so individually.}\label{intro:fig1}
\end{center}
\end{figure}
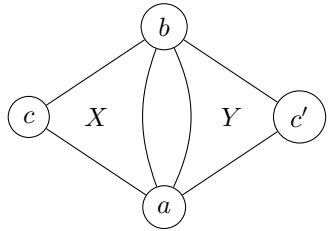

Additionally, we would like to explore the notion of synergy in distributed computing. Especially for tasks which require agents to commit to certain roles. For example, a smart contract that assigns buyers to sellers can be seen as a synergistic primitive for two parties. Indeed, without a buyer, the seller cannot sell and without the seller, the buyer cannot buy. Hence, this task needs two roles: a seller and a buyer. The model depicted in Figure \ref{example:buyerseller} on the left shows how simplicial models can capture this. The buyer $B$ offers an amount $o$ and the seller $S$ sells for a price $p$. Individually, they do not know if a trade can happen. That is, they do not know whether $p\leq o$ (i.e., deal) or $p>o$ (i.e., no deal). Only after querying the previously mentioned smart contract, they know if an exchange can take place. Lastly, it is also reasonable to analyse this from a dynamic standpoint in which querying the smart contract eliminates multiple edges. Figure~\ref{example:buyerseller} shows an update for the model on the left, in  the case that $p\leq o$. It would be interesting to formulate updates that eventually transform models with parallel faces to the usual simplicial complexes. However, it is currently not clear how such update rules could transfer to distributed computing.

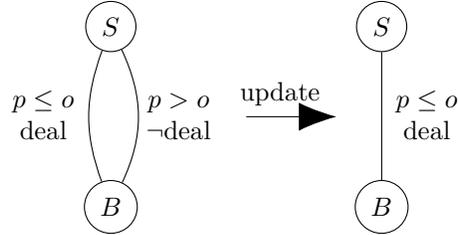
\begin{figure}[ht]\label{example:buyerseller}
\begin{center}
\begin{tikzpicture}[scale=0.6]
\node[circle, draw] at (0,0) (w1) [label = below:$ $]{$B$};
\node[circle, draw]at (0,4) (b1)[label = above:$ $]{$S$};
\node[circle, draw] at (6,0) (w2) [label = below:$ $]{$B$};
\node[circle, draw]at (6,4) (b2)[label = above:$ $]{$S$};
\node[circle, fill = white] at (3.75,2.5) (l1)[label = above:$ $]{update};
\node[circle, fill = white] at (-1.5,2) (l1) {$\begin{matrix}
p \leq o\\
\text{deal}
\end{matrix} $};
\node[circle, fill = white] at (1.5,2) (l12) {$\begin{matrix}
p > o\\
\neg \text{deal}
\end{matrix}$};
\node[circle, fill = white] at (7,2) (l1) {$\begin{matrix}
p \leq o\\
\text{deal}
\end{matrix} $};
\draw[-] (w1) to[out=110, in=-110]  (b1);
\draw[-] (w1) to[out=65, in=-65] (b1);
\draw[-] (w2) to (b2);
 \draw [-{Latex[length=5mm]}] (3,2) -- (5,2);
\end{tikzpicture}
\caption{Without accessing a buyer-seller smart contract, the buyer $B$ and the seller $S$ do not know whether a trade can happen. We can represent the model after accessing the smart contract by simply removing the parallel edge, i.e., global state, that is not the case.}\label{fig:intro}
\end{center}
\end{figure}

\bibliography{references}
\bibliographystyle{elsarticle-harv}
\end{document}